\newcommand{\abs}[1]{\lvert#1\rvert}
\newcommand{\Abs}[1]{\left\lvert#1\right\rvert} 
\newcommand{\cvt}[1]{c_{#1}} 
\newcommand{\cvo}[1]{c_{#1}^{\textnormal{os}}} 
\theoremstyle{plain}
\newtheorem{theorem}{Theorem}[section]
\newtheorem{lemma}{Lemma}[section]
\newtheorem{corollary}{Corollary}[section]
\newtheorem{assumption}{Assumption}[section]
\theoremstyle{definition}
\title{A Simple Adjustment for Bandwidth Snooping\thanks{We thank Joshua
    Angrist, Matias Cattaneo, Victor Chernozhukov, Kirill Evdokimov, Bo
    Honor\'e, Chris Sims, numerous seminar and conference participants, four
    anonymous referees and the editor for helpful comments and suggestions. We
    also thank Matias Cattaneo for sharing the Progresa dataset. All remaining
    errors are our own. The research of the first author was supported by
    National Science Foundation Grant SES-1628939. The research of the second
    author was supported by National Science Foundation Grant SES-1628878.}}
\author{Timothy B. Armstrong\thanks{email: timothy.armstrong@yale.edu}\\
  Yale University \and
  Michal Koles\'{a}r\thanks{email: mkolesar@princeton.edu}\\
  Princeton University}
\begin{document}

\maketitle

\begin{abstract}
  \setstretch{1.3} Kernel-based estimators such as local polynomial estimators
  in regression discontinuity designs are often evaluated at multiple bandwidths
  as a form of sensitivity analysis. However, if in the reported results, a
  researcher selects the bandwidth based on this analysis, the associated
  confidence intervals may not have correct coverage, even if the estimator is
  unbiased. This paper proposes a simple adjustment that gives correct coverage
  in such situations: replace the normal quantile with a critical value that
  depends only on the kernel and ratio of the maximum and minimum bandwidths the
  researcher has entertained. We tabulate these critical values and quantify the
  loss in coverage for conventional confidence intervals. For a range of
  relevant cases, a conventional 95\% confidence interval has coverage between
  70\% and 90\%, and our adjustment amounts to replacing the conventional
  critical value 1.96 with a number between 2.2 and 2.8. Our results also apply
  to other settings involving trimmed data, such as trimming to ensure overlap
  in treatment effect estimation. We illustrate our approach with three
  empirical applications.
\end{abstract}

\clearpage

\section{Introduction}\label{introduction_sec}

Kernel and local polynomial estimators of objects such as densities and
conditional means involve a choice of bandwidth. To assess sensitivity of the
results to the chosen bandwidth, it is common to compute estimates and
confidence intervals for several bandwidths, or plot them against a continuum of
bandwidths. For example, in regression discontinuity designs---a leading
application of non-parametric methods in econometrics---this approach is
recommended in several surveys \citep{imbens_regression_2008,LeLe10,DiLe11} and
implemented widely in applied work.\footnote{\label{fn:rd-snooping-examples}For
  prominent examples see, for instance, \citet{vdk02}, \citet{LeMi08},
  \citet{LuMi07}, or \citet{CaDoMa09}.} However, such practice leads to a
well-known problem that if the bandwidth choice is influenced by these results,
the confidence interval at the chosen bandwidth may undercover, even if the
estimator is unbiased.

This problem does not only arise when the selection rule is designed to make the
results of the analysis look most favorable (for example by choosing a bandwidth
that minimizes the $p$-value for some test). Undercoverage can also occur from
honest attempts to report a confidence interval with good statistical
properties. In settings in which one does not know the smoothness of the
estimated function, it is typically \emph{necessary} to examine multiple
bandwidths to obtain confidence intervals that are optimal (see
Section~\ref{sec:adaptation_example} for details and
\citet{armstrong_adaptive_2015} for a formal statement). We use the term
``bandwidth snooping'' to refer to any situation in which a researcher considers
multiple bandwidths in reporting confidence intervals.

This paper proposes a simple adjustment to account for bandwidth snooping:
replace the usual critical value based on a quantile of a standard normal
distribution with a critical value that depends only on the kernel, order of the
local polynomial, and the ratio of the maximum and minimum bandwidths that the
researcher has tried. We tabulate these adjusted critical values for a several
popular kernels, and show how our adjustment can be applied in regression
discontinuity designs, as well as estimation of average treatment effects under
unconfoundedness after trimming, and estimation of local average treatment
effects.

To explain the adjustment in a simple setting, consider the problem of
estimating the conditional mean $E[Y_{i}\mid X_{i}=x]$ at a point $x$, which we
normalize to zero. Given and i.i.d.\ sample $\{(X_i,Y_i)\}_{i=1}^{n}$, a kernel
$k$, and a bandwidth $h$, the Nadaraya-Watson kernel estimator is given by
\begin{equation*}
  \hat{\theta}(h)=\frac{\sum_{i=1}^n Y_{i}k(X_i/h)}{\sum_{i=1}^n k(X_i/h)},
\end{equation*}
and it is approximately unbiased for the pseudo-parameter
\begin{equation*}
  \theta(h)=\frac{E[Y_{i}k(X_i/h)]}{E[k(X_i/h)]}.
\end{equation*}
Under appropriate smoothness conditions, if we take $h\to 0$ with the sample
size, $\hat\theta(h)$ will converge to $\theta(0):=\lim_{h\to 0}
\theta(h)=E(Y_i\mid X_i=0)$. Given a standard error $\hat{\sigma}(h)/\sqrt{nh}$,
the $t$-statistic $\sqrt{nh}(\hat\theta(h)-\theta(h))/\hat\sigma(h)$ is
approximately standard normal. Letting $z_{1-\alpha/2}$ denote the $1-\alpha/2$
quantile of the standard normal distribution, the standard confidence interval
$[\hat\theta(h)\pm z_{1-\alpha/2}\hat\sigma(h)/\sqrt{nh}]$, is therefore an
approximate $100\cdot (1-\alpha)\%$ confidence interval for $\theta(h)$. If the
bias $|\theta(h)-\theta(0)|$ is small enough relative to the standard error,
such as when the bandwidth $h$ ``undersmooths'', the standard confidence
interval is also an approximate confidence interval for $\theta(0)$, the
conditional mean at zero.

However, if the selected bandwidth $\hat{h}$ is based on examining $\hat\theta(h)$
over $h$ in some interval $[\underline h,\overline h]$, the standard confidence
interval around $\hat{\theta}(\hat{h})$ may undercover even if there is no bias.
To address this problem, we propose confidence intervals that cover $\theta(h)$
simultaneously for all $h$ in a given interval $[\underline h,\overline h]$ with
a prespecified probability. In particular, we derive a critical value
$c_{1-\alpha}$ such that as $n\to\infty$,
\begin{equation}\label{unif_ci_eq}
P\left(\theta(h)\in [\hat\theta(h)\pm c_{1-\alpha}\hat\sigma(h)/\sqrt{nh}]
\text{ for all $h\in[\underline h,\overline h]$}\right)
\to 1-\alpha.
\end{equation}
In other words, our critical values allow for a uniform confidence band for
$\theta(h)$. Thus, the confidence interval for the selected bandwidth,
$[\hat\theta(\hat{h})\pm c_{1-\alpha}\hat\sigma(\hat{h})/\sqrt{n\hat{h}}]$, will
achieve correct coverage of $\theta(\hat{h})$ no matter what selection rule was
used to pick $\hat{h}$.

Our main contribution is to give a coverage result of the
form~\eqref{unif_ci_eq} for a large class of kernel-based estimators
$\hat{\theta}(h)$, as well as a similar statement showing coverage of
$\theta(0)$. The latter follows under additional conditions that allow the bias
to be mitigated through undersmoothing or bias-correction. These conditions are
essentially the same as those needed for pointwise coverage: if $\hat\theta(h)$
is ``undersmoothed and/or bias corrected enough'' that the pointwise CI has good
pointwise coverage of $\theta(0)$ at each $h\in[\underline h,\overline h]$, our
uniform CI will cover $\theta(0)$ uniformly over this set. In particular, we
show how our approach can be combined with a popular bias-correction method
proposed by \citet{cct14}.

Since our confidence bands cover $\theta(h)$ under milder smoothness conditions
than those needed for coverage of $\theta(0)$, they are particularly well-suited
for sensitivity analysis. Suppose that a particular method for bias correction
or undersmoothing implies that, in a given data set, the bias is asymptotically
negligible if $h\le 3$. If one finds that the confidence bands for, say, $h=2$
and $h=3$ do not overlap even after our correction, then one can conclude that
the assumptions needed for this form of bias correction are not supported by the
data. Our confidence bands can thus be used to formalize certain conclusions
about confidence intervals being ``sensitive'' to bandwidth choice.\footnote{An
  alternative approach to sensitivity analysis is to reject a particular null
  hypothesis regarding $\theta(0)$ only when one rejects the corresponding
  hypothesis test based on $\hat{\theta}(h)$ for all values $h$ that one has
  examined. The CI for $\theta(0)$ is then given by the union of the CIs based
  on $\hat \theta(h)$ as $h$ varies over all values that one has examined. This
  form of sensitivity analysis does not require a snooping correction, but is
  typically very conservative. See Section~\ref{sensitivity_analysis_sec} of the
  appendix for further discussion.}

In many applications, $\theta(h)$, taken as a function indexed by the bandwidth,
is an interesting parameter in its own right, in which case our confidence bands
are simply confidence bands for this function. As we discuss in detail in
Section~\ref{applications_sec}, this situation arises, for instance, in
estimation of local average treatment effects for different sets of compliers,
or in estimation of average treatment effects under unconfoundedness with
limited overlap. In the latter case, $h$ corresponds to a trimming parameter
such that observations with propensity score within distance $h$ to $0$ or $1$
are discarded, and $\theta(h)$ corresponds to average treatment effects for the
remaining subpopulation with moderate values of the propensity score.

A key advantage of our approach is that the critical value $c_{1-\alpha}$
depends only on the ratio $\overline h/\underline h$ and the kernel $k$ (in the
case of local polynomial estimators, it also depends on the order of the
polynomial and whether the point is on the boundary of the support). In
practice, researchers often report a point estimate $\hat\theta(\hat{h})$ and a
standard error $\hat\sigma(\hat{h})/\sqrt{n\hat{h}}$. As long as the kernel and
order of the local polynomial are also reported, a reader can use our critical
values to construct a confidence interval that takes into account a
specification search over a range $[\underline h,\overline h]$ that the reader
believes the original researcher used. Alternatively, one can assess the
sensitivity of the conclusions of the analysis to bandwidth specification search
by, say, computing the largest value of $\overline h/\underline h$ for which the
robust confidence interval does not include a particular value. As an example to
give a sense of the magnitudes involved, we find that, with the uniform kernel
and a local constant estimator, the critical value for a two sided uniform
confidence band with $1-\alpha=0.95$ and $\overline h/\underline h=3$ is about
$2.6$ (as opposed to $1.96$ with no correction). If one instead uses the
pointwise-in-$h$ critical value of $1.96$ and searches over
$h\in[\underline h,\overline h]$ with $\overline h/\underline h=3$, the true
coverage (of $\theta(h)$) will be approximately 80\%. The situation for the
triangular kernel is more favorable, with a critical value of around $2.25$ for
the case with $\overline h/\underline h=3$, and with the coverage of the
pointwise-in-$h$ procedure around $91\%$.

We also derive analytic results showing that the critical values grow very
slowly with $\overline h/\underline h$, at the rate
$\sqrt{\log\log(\overline h/\underline h)}$. Thus, from a practical
standpoint, examining a wider range of bandwidths carries only a very small
penalty (relative to examining a moderate range): while using our correction is
important for obtaining correct coverage, the critical values increase quite
slowly once $\overline h/\underline h$ is above $5$. A Monte Carlo study in the
supplemental appendix confirms that these critical values lead to uniform
coverage of $\theta(h)$ that is close to the nominal level. Uniform coverage of
$\theta(0)$ is also good so long as our method is combined with bias correction
or undersmoothing.

We illustrate our results with three empirical applications. First, we apply our
method to the regression discontinuity study of the effect of Progresa from
\citet{cct14}, and we find that the significance of the results is sensitive to
bandwidth snooping. The second empirical application is the regression
discontinuity study of \citet{lee08}. Here, in contrast, we find that, while the
confidence regions are somewhat larger when one allows for examination of
estimates at multiple bandwidths, the overall conclusions of that study are
robust to a large amount of bandwidth snooping. Finally, we consider an
application to estimating treatment effects under unconfoundedness from
\citet{connors96rhc}. Here, we find that the results are again
quite robust to the choice of trimming parameter, providing additional evidence
supporting the study's conclusions.

The rest of the paper is organized as follows. Section~\ref{lit_review_sec}
discusses related literature. Section~\ref{simple_derivation_sec} gives a
heuristic derivation of our asymptotic distribution results in a simplified
setup. Section~\ref{general_result_sec} states our main asymptotic distribution
result under general high-level conditions.
Section~\ref{practical_implementation_sec} gives a step-by-step explanation of
how to find the appropriate critical value in our tables and implement the
procedure. Section~\ref{applications_sec} works out applications of our results
to several econometric models. Section~\ref{empirical_sec} presents an
illustration of our approach in three empirical applications.
Section~\ref{conclusion_sec} concludes. Proofs and auxiliary results, as well as
additional tables and figures and a Monte Carlo study, are given in the appendix
and a supplemental appendix.\footnote{The supplemental appendix is available at
  \url{http://arxiv.org/abs/1412.0267}} Since
Section~\ref{simple_derivation_sec} and the beginning of
Section~\ref{general_result_sec} are concerned primarily with theoretical
aspects of our problem, readers who are primarily interested in implementation
can skip Section~\ref{simple_derivation_sec} and the beginning of
Section~\ref{general_result_sec} up to
Section~\ref{practical_implementation_sec}.

\subsection{Related literature}\label{lit_review_sec}

The idea of controlling for multiple inference by constructing a uniform
confidence band has a long tradition in the statistics literature---see
\citet[Chapter 9]{lehmann_testing_2005} for an overview and early contributions,
and \citet{white_reality_2000} for an application to econometrics. On a
technical level, our results borrow from the literature on Gaussian
approximations to empirical processes and extreme value limits for suprema of
Gaussian processes. To obtain an approximation of the kernel estimator by a
Gaussian process, we use an approximation of
\citet{sakhanenko_convergence_1985}. For the case
$\overline h/\underline h\to \infty$, we then use extreme value theory, and our
derivation is similar in spirit to \citet{bickel_global_1973}, who consider
kernel estimation of a density under a fixed sequence of bandwidths $h=h_{n}$
and derive confidence bands that are uniform in the point $x$ at which the
density is evaluated. For the case with bounded $\overline h/\underline h$,
classical empirical process results such as those given in
\citet{van_der_vaart_weak_1996} could be used instead of the
\citet{sakhanenko_convergence_1985} approximation, which we use in our proof
since it covers both cases. In both cases, our results require the (to our
knowledge novel) insight that the approximating Gaussian process is stationary
when indexed by $t=\log h$, and depends only on the kernel used to compute the
estimator. This leads to simple critical values that depend only on the kernel
and bandwidth ratio. In other settings in which snooping does not lead to a
pivotal asymptotic distribution, one could use the general bootstrap approach of
\citet{chernozhukov_gaussian_2013}, which allows one to obtain uniform
confidence bands without obtaining an asymptotic distribution.

In addition to \citet{bickel_global_1973}, numerous authors have used extreme
value limiting theorems for suprema of Gaussian processes to derive confidence
bands for a density or conditional mean function that are uniform in the point
$x$ at which the function is evaluated, with the bandwidth sequence $h=h_n$
fixed \citep[see, among
others][]{johnston_probabilities_1982,hardle_asymptotic_1989,liu_simultaneous_2010}.
In the special case where the Nadaraya-Watson estimator with uniform kernel is
used, extreme value limiting results in \citet{armstrong_multiscale_2016} lead
to confidence bands that are uniform in both $x$ and $h$. In contrast, our case
corresponds to fixing $x$ and requiring that coverage be uniform over $h$.

An important area of application of multiple tests involving tuning parameters
is adaptive inference and testing (in our context, this amounts to constructing
a confidence band for $\theta(0)$ that is close to as small as possible for a
range of smoothness classes for the data generating process). While we do not
consider this problem in this paper, \citet{armstrong_adaptive_2015} uses our
approach to obtain adaptive one-sided confidence intervals under a monotonicity
condition (see Section~\ref{sec:adaptation_example} below). For the problem of
global estimation and uniform confidence bands \citet{gine_confidence_2010}
propose an approach based on a different type of shape restriction. The latter
approach has been generalized in important work by
\citet{chernozhukov_anti-concentration_2014}.

\section{Derivation of the correction in a simple case}\label{simple_derivation_sec}

This section presents a heuristic derivation of the correction in the simple
problem of inference on the conditional mean described in the introduction. To
further simplify the exposition, consider an idealized situation in which
$Y_i=g(X_i)+\sigma\varepsilon_i$, $\sigma^{2}$ is known, $\varepsilon_{i}$ are
i.i.d.\ with variance one, and the regressors are non-random and given by
$X_i=(i+1)/(2n)$ for $i$ odd and $X_i=-i/(2n)$ for $i$ even. In this case, the
Nadaraya-Watson kernel estimator with a uniform kernel,
$k(x)=I(\abs{x}\leq 1/2)$, reduces to
\begin{equation*}
  \hat{\theta}(h)
  =\frac{\sum_{i=1}^{n}k(X_{i}/h)Y_{i}}{\sum_{i=1}^{n}k(X_{i}/h)}=
  \frac{\sum_{i=1}^{nh}Y_{i}}{nh},
\end{equation*}
where, for the second equality and throughout the rest of this example, we
assume that $nh$ is an even integer for notational convenience. Consider first
the problem of constructing a confidence interval for
\begin{equation*}
  \theta(h)= E(\hat\theta(h))= \frac{\sum_{i=1}^{nh} g(X_{i})}{nh}
\end{equation*}
that will have coverage $1-\alpha$ no matter what bandwidth $h$
we pick, so long as $h$ is in some given range $[\underline{h},\overline{h}]$.
For a given bandwidth $h$, a two-sided $t$-statistic is given by
\begin{equation*}
  \sqrt{nh}\frac{\abs{\hat\theta(h)-\theta(h)}}{\sigma}
  =\Abs{\frac{\sum_{i=1}^{nh}\varepsilon_{i}}{\sqrt{nh}}}.
\end{equation*}
In order to guarantee correct coverage, instead of using a critical value equal
to the $1-\alpha/2$ quantile of a normal distribution, we will need to use a
critical value equal to the $1-\alpha$ quantile of the distribution of the
maximal $t$-statistic in the range $[\underline{h},\overline{h}]$. If
$n\underline{h}\to\infty$, we can approximate the partial sum
$n^{-1/2}\sum_{i=1}^{nh}\varepsilon_{i}$ by a Brownian motion $\mathbb{B}(h)$,
so that in large samples, we can approximate the distribution of the maximal
$t$-statistic as
\begin{equation}\label{eq:brownian-motion}
  \sup_{\underline{h}\leq h\leq \overline{h}}
  \sqrt{nh}\frac{\abs{\hat\theta(h)-\theta(h)}}{\sigma}\approx
  \sup_{\underline{h}\leq h\leq \overline{h}}
  \abs{\mathbb{B}(h)/\sqrt{h}}\overset{d}{=}
  \sup_{1\leq h\leq \overline{h}/\underline{h}}
  \abs{\mathbb{B}(h)/\sqrt{h}}.
\end{equation}
Thus, the sampling distribution of the maximal $t$-statistic will in large
samples only depend on the ratio of maximum and minimum bandwidth that we
consider, $\overline{h}/\underline{h}$, and its quantiles can easily be
tabulated (see the columns corresponding to uniform kernel in
Table~\ref{tab:cvs-condensed}). As $\overline h/\underline h\to \infty$, the
recentered distribution of $\sup_{1\leq h\leq
  \overline{h}/\underline{h}}\abs{\mathbb{B}(h)/\sqrt{h}}$, scaled by
$\sqrt{2\log\log (\overline h/\underline h)}$, can be approximated by the
extreme value distribution by the \citet{darling_limit_1956} theorem. Thus, as
$\overline h/\underline h\to\infty$, the critical values increase very slowly,
at the rate $\sqrt{\log\log (\overline h/\underline h)}$.

To guarantee that the resulting confidence interval achieves coverage for
$\theta(0)=g(0)$, the conditional mean at zero, we also need to ensure that the
bias $\abs{\theta(h)-\theta(0)}$ is small relative to the standard error
$\sigma/\sqrt{nh}$, uniformly over $h\in[\underline{h},\overline{h}]$. If the
conditional mean function is twice differentiable with a bounded second
derivative and $\overline{h}/\underline{h}$ is bounded, a sufficient condition
is that $n\overline{h}^{5}\to 0$, that is, we ``undersmooth''.

In the next section, we show that the approximation of the distribution of the
maximal $t$-statistic by a scaled Brownian motion in~\eqref{eq:brownian-motion}
still obtains even if the restrictive assumptions in this section are dropped,
and holds for more general problems than inference for the conditional mean at a
point. The only difference will be that if the kernel is not uniform, then we
need to approximate the distribution of the maximal $t$-statistic by a different
Gaussian process.

\section{General setup and main result}\label{general_result_sec}

This section describes our general setup, states our main asymptotic
distribution result, and derives critical values based on this result. Readers
who are interested only in implementing our procedure can skip to
Section~\ref{practical_implementation_sec}, which explains how to use our tables
to find critical values and implement our procedure. We state our result using
high level conditions, which can be verified for particular applications. For
applications in Section~\ref{applications_sec}, we verify these conditions in
Appendix~\ref{sec:techn-deta-appl}.

We consider a sample $\{X_i,W_i\}_{i=1}^n$, which we assume throughout the paper
to be i.i.d. Here, $X_i$ is a real-valued random variable, and we are interested
in a kernel estimate at a particular point, which we normalize to be $x=0$ for
notational convenience. We consider confidence intervals that are uniform in $h$
over some range $[\underline h_n,\overline h_n]$, where we now make explicit the
dependence of $\underline h_n$ and $\overline h_n$ on $n$. To keep statements of
theoretical results simple, all of our results are pointwise in the underlying
distribution (we show that, for any data generating process satisfying certain
assumptions, coverage of the uniform-in-$h$ CI converges to $1-\alpha$).
However, versions of these results in which coverage is shown to converge to
$1-\alpha$ uniformly in some class of underlying distributions could be derived
from similar arguments, using uniform versions of the bounds in our assumptions.
Our main condition imposes an influence function representation involving a
kernel function.

\begin{assumption}\label{inf_func_assump}
  For some function $\psi(W_i,h)$ and a kernel function $k$ with
  $E\psi(W_i,h)k(X_i/h)=0$ and $\frac{1}{h}var(\psi(W_i,h)k(X_i/h))=1$,
  \begin{align*}
    \frac{\sqrt{nh}(\hat{\theta}(h)-\theta(h))}{\hat\sigma(h)}=\frac{1}{\sqrt{nh}}
    \sum_{i=1}^n \psi(W_i,h)k(X_i/h) +o_P\left(1/\sqrt{\log\log (\overline
        h_n/\underline h_n)}\right)
  \end{align*}
  uniformly over $h\in[\underline h_n,\overline h_n]$.
\end{assumption}

Most of the verification of Assumption~\ref{inf_func_assump} is standard. For
most kernel and local polynomial based estimators, these calculations are
available in the literature, with the only additional step being that the
remainder term must be bounded uniformly over
$h\in[\underline h_n,\overline h_n]$, and with a
$o_P(1/\sqrt{\log\log(\overline{h}_{n}/\underline h_n)})$ rate of approximation.
Supplemental Appendix~\ref{kern_tail_sec} provides some results that can be used
to obtain this uniform bound. For example, in the case of the Nadaraya-Watson
kernel estimator
$\hat\theta(h)=\sum_{i=1}^n Y_{i} k(X_i/h)/\sum_{i=1}^{n}k(X_i/h)$,
Assumption~\ref{inf_func_assump} holds with
$\psi(W_i,h)=(Y_i-\theta(h))/\sqrt{var\left\{[Y_i-\theta(h)]k(X_i/h)/h\right\}}$.
In the local polynomial case, the kernel function $k$ corresponds to the
equivalent kernel, and depends on the order of the polynomial and whether the
estimated conditional quantities are at the boundary (see
Section~\ref{reg_discont_sec} and Supplemental Appendix~\ref{loc_poly_bound_sec}
for details, including a discussion of how our results can be extended to cover
cases in which the boundary of the support of $X_{i}$ is local to $0$).

We also impose some regularity conditions on $k$ and the data generating
process. In applications, these will typically impose smoothness conditions on
the conditional mean and variance of certain variables conditional on $X_{i}$.

\begin{assumption}\label{dgp_kern_assump}
\begin{itemize}
\item[(i)] The kernel function $k$ is symmetric with finite support $[-A,A]$,
  bounded with a bounded, uniformly continuous first derivative on $(0,A)$, and
  satisfies $\int k(u)\, du\ne 0$.
\item[(ii)] $|X_i|$ has a density $f_{|X|}$ with $f_{|X|}(0)>0$,
  $\psi(W_i,h)k(X_i/h)$ is bounded uniformly over $h\le \overline h_n$ with
  $var\left(\psi(W_i,0)\mid |X_i|=0\right)>0$, and, for some deterministic
  function $\ell(h)$ with $\ell(h)\log\log(h^{-1})\to 0$ as $h\to 0$, the
  absolute values of the following expressions are bounded by $\ell(t)$:
  $f_{|X|}(t)-f_{|X|}(0)$,
  $E\left[\psi(W_i,0)\mid
    \abs{X_i}=t\right]-E\left[\psi(W_i,0)\mid\abs{X_i}=0\right]$,
  $(\psi(W_i,t)-\psi(W_i,0))k(X_i/t)$, and\\
  $var(\psi(W_i,0)\mid \abs{X_i}=t) -var(\psi(W_i,0)\mid |X_i|=0)$.
  \item[(iii)] Taken as classes of functions varying over $h>0$, $w\mapsto
    \psi(w,h)$ and $x\mapsto k(x/h)$ have polynomial uniform covering numbers
    (as defined in Appendix~\ref{sec:proof-main-result}).
\end{itemize}
\end{assumption}

Assumption~\ref{dgp_kern_assump} will typically require some smoothness on
$\theta(h)$ as a function of $h$ (since it places smoothness on certain
conditional means, etc.). For inference on $\theta(h)$, rather than $\theta(0)$,
the amount of smoothness required is very mild relative to smoothness conditions
typically imposed when considering bias-variance tradeoffs. In particular,
Assumption~\ref{dgp_kern_assump} only requires that certain quantities are
slightly smoother than $t\mapsto 1/\log\log(t^{-1})$, which does not require
differentiability and holds, e.g., for $t\mapsto t^\gamma$ for any $\gamma>0$.
Thus, our confidence bands for $\theta(h)$ are valid under very mild conditions
on the smoothness of $\theta(h)$, which is useful in settings where the possible
lack of smoothness of $\theta(h)$ leads one to examine $\hat\theta(h)$ across
multiple bandwidths.

Assumption~\ref{inf_func_assump} and~\ref{dgp_kern_assump} are tailored toward
statistics involving conditional means, rather than densities or derivatives of
conditional means and densities (for density estimation, we would have
$\psi(W_i,h)=1$, which is ruled out by the assumptions
$var\left[\psi(W_i,0)\mid |X_i|=0\right]>0$ and $E\psi(W_i,h)k(X_i/h)=0$; for
estimating derivatives of conditional means or densities, the scaling would be
$\sqrt{nh^{1+\nu}}$ where $\nu$ is the order of the derivative). This is done
only for concreteness and ease of notation, and the results can be generalized
to these cases as well by verifying the high level conditions in
Theorems~\ref{ev_limit_thm_general} and~\ref{gaussian_limit_thm} in
Appendix~\ref{sec:proof-main-result}, which is used in proving
Theorem~\ref{highlevel_asym_dist_thm} below. The only requirement is that a
scaled version of $\hat\theta(h)-\theta(h)$ be approximated by the Gaussian
process $\mathbb{H}$ given in Theorem~\ref{highlevel_asym_dist_thm} below. For
estimating derivatives, the kernel $k$ in the process $\mathbb{H}$ will
correspond to the equivalent kernel, and it will depend on the order of the
derivative as well as the order of the local polynomial.

Finally, note that Assumption~\ref{dgp_kern_assump} requires that
$\psi(W_i,h)k(X_i/h)$ be bounded, which typically requires a bounded outcome
variable in applications. We conjecture that this assumption could be relaxed at
the expense of imposing stronger assumptions on $\underline h_n$ and $\overline
h_n$ (see Section~\ref{gauss_approx_sec}).

We are now ready to state the main asymptotic approximation result.

\begin{theorem}\label{highlevel_asym_dist_thm}
  Let $\cvt{1-\alpha}(t,k)$ be the $1-\alpha$ quantile of
  $\sup_{1\le h\le t} \left|\mathbb{H}(h)\right|$, where $\mathbb{H}(h)$ is a
  mean zero Gaussian process with covariance kernel
  $cov\left(\mathbb{H}(h),\mathbb{H}(h')\right) =\frac{\int k(u/h)k(u/h')\,
    du}{\sqrt{hh'}\int k(u)^2\, du} =\sqrt{\frac{h'}{h}}\frac{\int k(u
    (h'/h))k(u)\, du}{\int k(u)^2\, du}$. Suppose that $\underline h_n\to 0$,
  $\overline h_n=\mathcal{O}_P(1)$, and
  $n\underline h_n/[(\log \log n)(\log\log\log n)]^2\to\infty$. Then, under
  Assumptions~\ref{inf_func_assump} and~\ref{dgp_kern_assump},
  \begin{equation*}
        P\left(
      \theta(h)\in \left\{\hat\theta(h)\pm \hat\sigma(h)\cdot
        \cvt{1-\alpha}(\overline h_n/\underline h_n,k)/\sqrt{nh}\right\}
      \text{ all }h\in [\underline h_n\le h\le \overline h_n]
    \right)
    \stackrel{n\to\infty}{\to} 1-\alpha.
  \end{equation*}
  If, in addition, $\overline h_n/\underline h_n\to\infty$, the above display
  also holds with $\cvt{1-\alpha}(\overline h_n/\underline h_n,k)$ replaced by
  \begin{equation}\label{eq:alternative-cv}
    \frac{-\log\left(-\frac{1}{2}\log (1-\alpha)\right)+b(\overline h_n/\underline
      h_n, k)}{\sqrt{2\log\log(\overline h_n/\underline h_n)}} +\sqrt{2\log\log
      (\overline h_n/\underline h_n)},
  \end{equation}
  where $b(t,k)=\log c_1(k)+(1/2)\log\log\log t$ if $k(A)\ne 0$ and
  $b(t,k)=\log c_2(k)$ if $k(A)=0$, with
  $c_1(k)=\frac{Ak(A)^2}{\sqrt{\pi}\int k(u)^2\, du}$ and
  $c_2(k)=\frac{1}{2\pi}\sqrt{\frac{\int \left[k'(u)u+\frac{1}{2}
        k(u)\right]^2\, du}{\int k(u)^2\, du}}$.
\end{theorem}

Theorem~\ref{highlevel_asym_dist_thm} shows coverage of $\theta(h)$. Often,
however, $\theta(0)$ is of interest. We now state a corollary showing coverage
of $\theta(0)$ under an additional condition.

\begin{corollary}\label{theta0_corollary}
  If
  $\sup_{h\in [\underline h_n,\overline
    h_n]}\frac{\sqrt{nh}|\theta(h)-\theta(0)|}{\hat\sigma(h)}=o_P\big(
      (\log\log (\overline h_n/\underline h_n))^{-1/2}\big)$, and the conditions
  of Theorem~\ref{highlevel_asym_dist_thm} hold, then
  \begin{equation*}
    P\left(
      \theta(0)\in \left\{\hat\theta(h)\pm \hat\sigma(h)\cdot
        \cvt{1-\alpha}(\overline h_n/\underline h_n,k)/\sqrt{nh}\right\}
      \text{ all }h\in [\underline h_n\le h\le \overline h_n]
    \right)
    \stackrel{n\to\infty}{\to} 1-\alpha.
\end{equation*}
\end{corollary}
Corollary~\ref{theta0_corollary} uses the additional condition that the bias
$\theta(h)-\theta(0)$ is negligible relative to the standard error
$\hat\sigma(h)/\sqrt{nh}$ uniformly over the range of bandwidths
considered.\footnote{If $\overline h_n/\underline h_n$ is bounded, the condition
  in Corollary~\ref{theta0_corollary} is the same as the condition that is
  needed for pointwise-in-$h$ coverage of conventional CIs that do not adjust
  for snooping. If $\overline h_n/\underline h_n\to\infty$, there is an
  additional $\log\log$ term in the rate at which the bias must decrease, which
  arises for technical reasons. However, this term is small enough that this
  condition is still guaranteed by bias-correction methods such as the one
  proposed by \citet{cct14}.} Typically, it is ensured by bias-correction or
undersmoothing, as long as the smoothness conditions in
Assumption~\ref{dgp_kern_assump} are appropriately
strengthened.\footnote{Alternatively, if a bound $\overline b(h)$ on the bias
  $|\theta(h)-\theta(0)|$ is available, one can allow the bias to be of the same
  order of magnitude as standard deviation by adding and subtracting
  $\hat\sigma(h)\cdot \cvt{1-\alpha}(\overline h_n/\underline
  h_n,k)/\sqrt{nh}+\overline b(h)$. This has the advantage of allowing for
  weaker conditions on the bandwidth sequence, including cases where the
  undersmoothing condition does not hold. See
  \citet{chernozhukov_anti-concentration_2014}, \citet{schennach15} and
  \citet{donoho_statistical_1994} for applications of this idea in different
  settings.} In Section~\ref{reg_discont_sec}, we discuss how, in a regression
discontinuity setting, our approach can be applied with a bias-correction
proposed by \citet{cct14}, and illustrate this approach in empirical examples in
Section~\ref{empirical_sec}. Critical values for constructing one-sided
confidence intervals robust to bandwidth snooping are analogous to the two-sided
case---see Supplemental Appendix~\ref{sec:critical-values} for details.

If the bandwidth choice is a priori tied to a pre-specified set, it is possible
to further tighten the critical values. For example,
\citet{imbens_regression_2008} suggest examining estimates at half and twice the
original bandwidth $\hat{h}$, which yields the set
$\{\hat{h}/2,\hat{h},2\hat{h}\}$. One can extend our approach to obtain critical
values under such discrete snooping. However, such critical values will depend
on the entire discrete set, and will often not be much tighter than
$\cvt{1-\alpha}(\overline{h}_{n}/\underline{h}_{n},k)$ with $\overline{h}_{n}$
and $\underline{h}_{n}$ given by the biggest and smallest bandwidths in the set
(so long as the triangular or Epanechnikov kernel is used). For example, for the
discrete bandwidth set $\{\hat{h}/2,\hat{h},2\hat{h}\}$ the critical value for
the triangular kernel can be shown to equal 2.23, while
$\cvt{1-\alpha}(4,k)=2.26$.

In addition to providing the critical values $\cvt{1-\alpha}$,
Theorem~\ref{highlevel_asym_dist_thm} provides a further approximation
in~\eqref{eq:alternative-cv} to the quantiles of
$\sup_{\underline h_n\le h\le \overline h_n}
\sqrt{nh}\abs{\hat\theta(h)-\theta(h)}/\hat\sigma(h)$ based on an extreme value
limiting distribution, provided that $\overline h_n/\underline h_n\to\infty$. In
the case where $k$ is the uniform kernel, $\psi(W_i,h)$ does not depend on $h$
and $E[\psi(W_i,h)|X_i=x]=0$ and $var[\psi(W_i,h)|X_i=x]=1$ for all $x$, the
latter result reduces to a well-known theorem of \citet{darling_limit_1956}
\citep[see also][]{einmahl_darling-erdos_1989}. For the case where $k$ is not
the uniform kernel, or where $\psi$ depends on $h$, this result is, to our
knowledge, new. We do not recommend using the critical value
in~\eqref{eq:alternative-cv} value in practice, as critical values based on
extreme value results have been known to perform poorly in related settings
\citep[see][]{hall91}. Instead we recommend using the critical value
$\cvt{1-\alpha}(\overline h_n/\underline h_n,k)$, which does not suffer from
these issues because it is based directly on the Gaussian process approximation,
and it remains valid even for fixed $\overline h_n/\underline h_{n}$ (see
Figure~\ref{fig:extreme-value} in the supplemental appendix for a comparison of
these critical values). Thus, we report only this critical value in
Table~\ref{tab:cvs-condensed} below.

The main practical value of the approximation in~\eqref{eq:alternative-cv} is
that it demonstrates that critical value grows very slowly with
$\overline h_n/\underline h_n$, at rate
$\sqrt{\log\log(\overline h_n/\underline h_n)}$, so that the cost of examining a
wider range of bandwidths relative to examining a moderate range is rather
small. Indeed, while using our correction is important for maintaining correct
coverage, as can be seen from Table~\ref{tab:cvs-condensed}, once
$\overline h_n/\underline h_{n}$ is above 5, widening the range of bandwidths
that one examines increases the critical value by only a small amount.

To outline how Theorem~\ref{highlevel_asym_dist_thm} obtains, consider again the
problem of estimating a nonparametric mean at a point described in the
introduction. Here the influence function is given by
$\psi(W_i,h)k(X_i/h)$ where
$\psi(W_i,h)=(Y_i-\theta(h))/\sqrt{var\left\{[Y_i-\theta(h)]k(X_i/h)/h\right\}}$
so that, for small $h$, we can approximate the t-statistic as
\begin{equation*}
\frac{\sqrt{nh}(\hat\theta(h)-\theta(h))}{\hat\sigma(h)}
\approx \frac{\sum_{i=1}^n [Y_i-\theta(h)]k(X_i/h)}{
  \sqrt{n \cdot var\left\{[Y_i-\theta(h)]k(X_i/h)\right\}}}.
\end{equation*}
Thus, we expect that the supremum of the absolute value of this display over $h\in [\underline h,\overline h]$ is approximated by $\sup_{h\in[\underline h,\overline h]} \left|\mathbb{H}_n(h)\right|$ where
$\mathbb{H}_n(h)$ is a Gaussian process with covariance function
\begin{equation}\label{H_cov_approx_eq}
cov\left(\mathbb{H}_n(h),\mathbb{H}_n(h')\right)
=\frac{cov\left\{[Y_i-\theta(h)]k(X_i/h),[Y_i-\theta(h')]k(X_i/h')\right\}}{\sqrt{var\left\{[Y_i-\theta(h)]k(X_i/h)\right\}}\sqrt{var\left\{[Y_i-\theta(h')]k(X_i/h')\right\}}}.
\end{equation}
The conditions in Assumption~\ref{dgp_kern_assump} ensure that $E(Y_i|X_i=x)$,
$var(Y_i|X_i=x)$ and the density $f_X(x)$ of $X_i$ do not vary too much as $x\to
0$, so that, for $h$ and $h'$ close to zero,
\begin{multline*}
cov\left\{[Y_i-\theta(h)]k(X_i/h),[Y_i-\theta(h')]k(X_i/h')\right\}
\approx E\left\{[Y_i-E(Y_i|X_i)]^2k(X_i/h)k(X_i/h')\right\}  \\
=\int var(Y_i|X_i=x)k(x/h)k(x/h')f_X(x)\, dx
\approx var(Y_i|X_i=0)f_X(0)\int k(x/h)k(x/h')\, dx  \\
= var(Y_i|X_i=0)f_X(0)h'\int k\left(u(h'/h)\right)k(u)\, du.
\end{multline*}
Using this approximation for the variance terms in the denominator of (\ref{H_cov_approx_eq}) as well as the covariance in the numerator gives the approximation
\begin{equation*}
  cov\left(\mathbb{H}_n(h),\mathbb{H}_n(h')\right)
  \approx \frac{h'\int k\left(u(h'/h)\right)k(u)\, dx}
  {\sqrt{h'\int k(u)^2\, dx}\sqrt{h\int k(u)^2\, dx}}
  =\frac{\sqrt{h'/h}\int k\left(u(h'/h)\right)k(u)\, dx}
  {\int k(u)^2\, dx}.
\end{equation*}
Thus, letting $\mathbb{H}(h)$ be the Gaussian process with the covariance on the
right hand side of the above display, we expect that the distribution of
$\sup_{h\in[\underline h,\overline
  h]}\frac{\sqrt{nh}\abs{\hat\theta(h)-\theta(h)}}{\hat\sigma(h)}$ is
approximated by the distribution of
$\sup_{h\in [\underline h,\overline h]}\left|\mathbb{H}(h)\right|$. Since the
covariance kernel given above depends only on $h'/h$,
$\sup_{h\in [\underline h,\overline h]}\left|\mathbb{H}(h)\right|$ has the same
distribution as
$\sup_{h\in [\underline h,\overline h]}\left|\mathbb{H}(h/\underline h)\right|
=\sup_{h\in [1,\overline h/\underline h]}\left|\mathbb{H}(h)\right|$. As it
turns out, this approximation will work under relatively mild conditions so long
as $\underline h\to 0$ even if $\overline h$ does not approach zero, because, in
this case, the bandwidth that achieves the supremum will still converge in
probability to zero, yielding the first part of the theorem. For the second part
of the theorem, we show that
$\sup_{h\in[\underline h,\overline
  h]}\frac{\sqrt{nh}|\hat\theta(h)-\theta(h)|}{\hat\sigma(h)}$ increases
proportionally to $\sqrt{2\log\log(\overline h/\underline h)}$, and that a
further scaling by $\sqrt{2\log\log(\overline h/\underline h)}$ gives an extreme
value limiting distribution. To further understand the intuition for this, note
that $\mathbb{H}(h)$ is stationary when indexed by $t=\log h$ (since the
covariance at $h=e^t$ and $h'=e^{t'}$ depends only on $h'/h=e^{t'-t}$), so,
setting $T=\log(\overline h/\underline h)$, we expect the supremum over
$[\log 1,\log (\overline h/\underline h)]=[0,T]$ to follow an extreme value
limiting with scaling
$\sqrt{2\log T}=\sqrt{2\log\log (\overline h/\underline h)}$ so long as
dependence dies away quickly enough with $T$, following classical results
\citep[see][for a textbook exposition of these
results]{leadbetter_extremes_1983}.

\subsection{Practical implementation}\label{practical_implementation_sec}

For convenience, this section gives step-by-step instructions for finding the
appropriate critical value in our tables and implementing our procedure. We also
provide some analysis of the magnitudes involved in the correction and the
undercoverage that can occur from searching over multiple bandwidths without
implementing our correction.

Table~\ref{tab:cvs-condensed} gives the critical values
$\cvt{1-\alpha}(\overline h_n/\underline h_n,k)$ for several kernel functions
$k$, $\alpha=0.05$ and selected values of $\overline h_n/\underline h_n$.
Critical values for $\alpha=0.01$ and $\alpha=0.10$, are given in
Table~\ref{tab:cvs-ts} in the supplemental appendix. The
critical values can also be obtained using our R package \texttt{BWSnooping},
which can be downloaded from \url{https://github.com/kolesarm/BWSnooping}. For
local polynomial estimators, the critical value depends on the order of the
local polynomial, as well as whether the point of interest is at the boundary
(including the case of regression discontinuity) or in the interior of the
support of $X_i$. We report values for Nadaraya-Watson (local constant) and
local linear estimators. Note that the critical values for Nadaraya-Watson
kernel regression are the same whether or not the point of interest is in the
interior or at the boundary. For local linear regression in the interior, the
equivalent kernel is the same as the original kernel, and therefore the critical
value is the same as that for Nadaraya-Watson kernel regression. For local
linear regression at the boundary, including inference in regression
discontinuity designs, the critical value is different because the equivalent
kernel is different (see Supplemental Appendix~\ref{loc_poly_bound_sec} for
details).

Using these tables, our procedure can be described in the
following steps:
\begin{enumerate}
\item Compute an estimate $\hat\sigma(h)$ of the standard deviation of
  $\sqrt{nh}(\hat\theta(h)-\theta(h))$, where $\hat\theta(h)$ is a kernel-based
  estimate.

\item Let $\underline h$ and $\overline h$ be the smallest and largest values of
  the bandwidth $h$ considered, respectively, and let $\alpha$ be the nominal
  level. Appropriate choice of $\underline h$ and $\overline h$ will depend on
  the application; Section~\ref{applications_sec} discusses this choice for the
  applications we consider. Look up the critical value
  $\cvt{1-\alpha}(\overline h_n/\underline h_n,k)$ in
  Table~\ref{tab:cvs-condensed} for $\alpha=0.05$, or in Table~\ref{tab:cvs-ts}
  for $\alpha=0.01$ and $\alpha=0.10$.
\item Report uniform confidence band
  $\left\{\hat\theta(h)\pm (\hat\sigma(h)/\sqrt{nh}) \cvt{1-\alpha} (\overline
    h_n/\underline h_n,k) \mid h\in[\underline h,\overline h]\right\}$ for
  $\theta(h)$. Or, report
  $\hat\theta(\hat{h})\pm (\hat\sigma(\hat{h})/\sqrt{n\hat{h}}) \cvt{1-\alpha}
  (\overline h_n/\underline h_n,k)$ for a chosen bandwidth $\hat{h}$ as a
  confidence interval for $\theta(\hat{h})$ that takes into account ``snooping''
  over $h\in[\underline h,\overline h]$.
\end{enumerate}

It is common practice to report an estimate $\hat\theta(\hat{h})$ and a standard
error $se(\hat{h})\equiv \hat\sigma(\hat{h})/\sqrt{n\hat{h}}$ for a value of $\hat{h}$
chosen by the researcher. If one suspects that results reported in this way were
obtained after examining the results for $h$ in some set
$[\underline h,\overline h]$ (say, by looking for the value of $h$ for which the
corresponding test of $H_0:\theta(h)=0$ has the smallest $p$-value), one can
compute a ``bandwidth snooping adjusted'' confidence interval as described in
step 3, so long as the kernel function is reported (as well as the order of the
local polynomial).

Figure~\ref{fig:cvs-twosided} plots our critical values as a function of
$\overline h/\underline h$ for $1-\alpha=0.95$. By construction, the critical
value is given by the standard normal quantile $1.96$ when
$\overline h/\underline h=1$, and increases from there. For the kernels and
range of $\overline h/\underline h$ considered, the correction typically amounts
to replacing the standard normal quantile $1.96$ with a number between $2.2$ and
$2.8$, depending on the kernel and range of bandwidths considered.

Our results can also be used to quantify undercoverage from entertaining
multiple bandwidths without using our correction.
Figure~\ref{fig:coverage-twosided} plots the true uniform asymptotic coverage of
a nominal 95\% confidence interval over a range $[\underline h,\overline h]$ for
different values of $\overline h/\underline h$. This amounts to finding
$1-\tilde \alpha$ such that the pointwise critical value $1.96$ is equal to
$\cvt{1-\tilde{\alpha}}(\overline h_n/\underline h_n,k)$. For
$\overline h/\underline h$ below 10, the true coverage is typically somewhere
between $70\%$ and $90\%$, depending on the kernel and the exact value of
$\overline h/\underline h$.

\section{Applications}\label{applications_sec}

This section applies the main results from Section~\ref{general_result_sec} to
three econometric models. In the first example, $\theta(0)$ is of primary
interest, while in the other examples, $\theta(h)$ is an interesting economic
object in its own right. Technical details for this section are relegated to
Appendix~\ref{sec:techn-deta-appl}.

\subsection{Regression discontinuity}\label{reg_discont_sec}

We are interested in a regression discontinuity (RD) parameter, where the
discontinuity point is normalized to $x=0$ for convenience of notation. We
consider both ``sharp'' and ``fuzzy'' regression discontinuity. Using arguments
in the discussion preceding Theorem~\ref{highlevel_asym_dist_thm}, the results
in this section could also be generalized to cover ``kink'' designs
\citep{clpw15}, where the focus is on estimating derivatives of conditional
means at a point---in the interest of space, we do not pursue this extension
here.

For fuzzy RD, we observe $\{(X_i,D_i,Y_i)\}_{i=1}^n$, and the parameter of
interest is given by
$\theta(0)=\frac{\lim_{x\downarrow 0} E(Y_i|X_i=x)-\lim_{x\uparrow 0}
  E(Y_i|X_i=x)}{\lim_{x\downarrow 0} E(D_i|X_i=x)-\lim_{x\uparrow 0}
  E(D_i|X_i=x)}$. For sharp RD, we observe $\{(X_i,Y_i)\}_{i=1}^n$, and the
parameter of interest is given by
$\theta(0)=\lim_{x\downarrow 0} E(Y_i|X_i=x)-\lim_{x\uparrow 0} E(Y_i|X_i=x)$.
For ease of exposition, we focus on the commonly used local linear estimator
\citep[see, e.g.,~][]{porter_estimation_2003}.\footnote{We cover the extension to
  local polynomial regression of higher order in
  Appendix~\ref{loc_poly_bound_sec}.} Given a kernel function $k^*$ and a
bandwidth $h$, let $\hat\alpha_{\ell,Y}(h)$ and $\hat\beta_{\ell,Y}(h)$ denote
the intercept and slope from a weighted linear regression of $Y_{i}$ on $X_{i}$
in the subsample with $X_{i}<0$, weighted by $k(X_{i}/h)$. That is,
$\hat\alpha_{\ell,Y}(h)$ and $\hat\beta_{\ell,Y}(h)$ minimize
\begin{equation*}
\sum_{i=1}^n \left(Y_i-\alpha_{\ell,Y}-\beta_{\ell,Y} X_i\right)^2I(X_i< 0)k^*(X_i/h).
\end{equation*}
Let $(\hat\alpha_{u,Y}(h),\hat\beta_{u,Y}(h))$ denote the regression
coefficients from a regression in the subsample with $X_{i}\geq 0$. For the
fuzzy case, define $(\hat\alpha_{\ell,D}(h),\hat\beta_{\ell,D}(h))$ and
$(\hat\alpha_{u,D}(h),\hat\beta_{u,D}(h))$ analogously with $D_i$ replacing
$Y_i$. The sharp RD local linear estimator is then given by
$\hat\theta(h)=\hat\alpha_{u,Y}(h)-\hat\alpha_{\ell,Y}(h)$. The fuzzy RD
estimator is given by
$\hat\theta(h)=\frac{\hat\alpha_{u,Y}(h)-\hat\alpha_{\ell,Y}(h)}
{\hat\alpha_{u,D}(h)-\hat\alpha_{\ell,D}(h)}$.

We define $\theta(h)$ as the statistic constructed from the population versions
of these estimating equations, which leads to $\hat\theta(h)$ being
approximately unbiased for $\theta(h)$. Let
$(\alpha_{\ell,Y}(h),\beta_{\ell,Y}(h))$ minimize
\begin{equation*}
E \left(Y_i-\alpha_{\ell,Y}-\beta_{\ell,Y} X_i\right)^2I(X_i< 0)k^*(X_i/h),
\end{equation*}
and let $(\alpha_{u,Y}(h),\beta_{u,Y}(h))$,
$(\alpha_{\ell,D}(h),\beta_{\ell,D}(h))$ and $(\alpha_{u,D}(h),\beta_{u,D}(h))$
be defined analogously. We define
$\theta(h)=\frac{\alpha_{u,Y}(h)-\alpha_{\ell,Y}(h)}
{\alpha_{u,D}(h)-\alpha_{\ell,D}(h)}$ for fuzzy RD, and
$\theta(h)=\alpha_{u,Y}(h)-\alpha_{\ell,Y}(h)$ for sharp RD\@. Under appropriate
smoothness conditions, $\theta(h)$ will converge to $\theta(0)$ as $h\to 0$.

Theorem~\ref{reg_disc_thm} in Appendix~\ref{sec:techn-deta-appl} shows that
under appropriate conditions, Theorem~\ref{highlevel_asym_dist_thm} applies with
$k(u)$ given by the equivalent kernel $k(u)=(\mu_{k^*,2}-\mu_{k^*,1}|u|)k^*(u)$,
where $\mu_{k^*,j}=\int_{u=0}^\infty u^j k^*(u)$ for $j=1,2$ (rather than the
original kernel $k^{*}$). For convenience, we report critical values for
$k(u)=(\mu_{k^*,2}-\mu_{k^*,1}|u|)k^*(u)$ for some common choices of $k^*$ in
Table~\ref{tab:cvs-condensed} for $\alpha=0.05$ and Table~\ref{tab:cvs-ts} in
the supplemental appendix for $\alpha=0.01$ and $\alpha=0.10$.

In most RD applications, $\theta(0)$, rather than $\theta(h)$, is of primary
interest. Let $h^{*}_{ll}$ denote a bandwidth that minimizes the mean-squared
error $E[(\hat{\theta}(h)-\theta(0))^{2}]$ of the local linear estimator (or an
asymptotic approximation of it), such as the \citet{imbens_optimal_2012}
bandwidth selector. Then, as is well-known, the bias of
$\hat{\theta}({h}^{*}_{ll})$ will not be asymptotically negligible, and
confidence intervals around $\hat{\theta}({h}^{*}_{ll})$ will have poor coverage
of $\theta(0)$, even without any snooping.

In an important paper, \citet[CCT]{cct14} show that one can address this issue
by recentering the confidence interval by subtracting an estimate of the
asymptotic bias, and rescaling it to account for the additional noise induced by
the bias estimation. CCT show that the remaining bias is asymptotically
negligible so that this alternative confidence interval will achieve proper
coverage of $\theta(0)$, provided the conditional mean functions are smooth
enough on each side of the cutoff. If the pilot bandwidth used to estimate the
bias equals ${h}^{*}_{ll}$, this procedure is equivalent to constructing the
usual confidence interval around a local quadratic estimator with bandwidth
${h}^{*}_{ll}$. Since the MSE optimal bandwidth for local quadratic regression
is of larger order than the optimal bandwidth for local linear regression, this
method of constructing confidence intervals can also be viewed as a particular
undersmoothing procedure. Consequently, if one uses a local quadratic estimator
and $\overline{h}=\mathcal{O}({h}^{*}_{ll})$, Corollary~\ref{theta0_corollary}
applies, so that our adjusted confidence intervals will also achieve correct
coverage of the RD parameter $\theta(0)$. We apply this method in two empirical
examples in Section~\ref{empirical_sec}, and investigate its finite-sample
properties in a Monte Carlo exercise in Supplemental Appendix~\ref{mc_sec}.

In the remainder of this subsection, we discuss two cases in which our
computing our adjusted confidence interval is relevant. We also discuss the
choice of $\overline{h}$ and $\underline{h}$.

\subsubsection{Sensitivity Analysis}\label{section:sensitivity-analysis-example}
A researcher implements the CCT bias-correction method by calculating the local
quadratic estimator of the sharp RD parameter
$\theta(0)=\lim_{x\downarrow 0} E(Y_i|X_i=x)-\lim_{x\uparrow 0} E(Y_i|X_i=x)$ at
the bandwidth $h={h}^{*}_{ll}$. To check the robustness of the results, the
researcher also evaluates the estimator at a bandwidth
$h_{\text{smaller}}<{h}^{*}_{ll}$. Suppose that the CI evaluated at
${h}^{*}_{ll}$ contains zero, while the CI evaluated at $h_{\text{smaller}}$
does not (in any given sample, this may happen even if both estimators are
exactly unbiased). Arguing that the bias of the estimator at
$h_{\text{smaller}}$ is negligible under weaker assumptions, the researcher may
be tempted to conclude that $\theta(0)=E(Y_i|X_i=0)$ is nonzero, and that the
conclusions of this hypothesis test are valid under even weaker assumptions than
the original assumptions needed for validity of the CCT confidence interval.
Unfortunately, this is not true for the actual hypothesis test that the
researcher has performed (looking at both ${h}^{*}_{ll}$ and
$h_{\text{smaller}}$), since the $\alpha$ probability of type I error has
already been ``used up'' on the test based on ${h}^{*}_{ll}$. By replacing
$z_{1-\alpha/2}$ with the critical value
$\cvt{1-\alpha}({h}^{*}_{ll}/h_{\text{smaller}},k)$, the researcher can conclude
that $\theta(0)\ne 0$ under the original assumptions, so long as at least one of
the two confidence intervals does not contain zero.
Appendix~\ref{sensitivity_analysis_sec} provides further discussion of cases in
which the uniform-in-$h$ confidence bands can be useful in sensitivity analysis.

\subsubsection{Adaptive inference}\label{sec:adaptation_example}
Suppose that it is known from the economics of the problem that the conditional
mean function $E(Y_i|X_i=x)$ is weakly decreasing. Then a Nadaraya-Watson (local
constant) estimator $\hat{\theta}_{NW}(h)$ of the sharp RD parameter must be
biased downward for any bandwidth $h$. Because any downward bias will make the
one-sided confidence interval
$[\hat\theta_{NW}(h)-z_{1-\alpha}\hat\sigma_{NW}(h)/\sqrt{nh},\infty)$ only more
conservative, it is asymptotically valid for any $h$ regardless of how fast
$h\to 0$ with $n$ (even if $h$ does not decrease with $n$ at all), so long as
$nh\to\infty$ so that a central limit theorem applies to $\hat\theta_{NW}(h)$.

One may wish to use this fact to ``snoop'' by reporting the most favorable
confidence interval, namely,
$[\sup_{h\in [\underline h,\overline
  h]}(\hat\theta_{NW}(h)-z_{1-\alpha}\hat\sigma_{NW}(h)/\sqrt{nh}),\infty)$ for
some $[\underline h,\overline h]$. Because it involves entertaining multiple
bandwidths, this is not a valid confidence interval. Replacing $z_{1-\alpha}$
with one-sided version of our critical value, $\cvo{1-\alpha}$ (see Supplemental
Appendix~\ref{sec:critical-values}), leads to a confidence interval
$[\sup_{h\in [\underline h,\overline h]}(\hat\theta(h)-
\cvo{1-\alpha}\hat\sigma(h)/\sqrt{nh}),\infty)$, which will have correct
asymptotic coverage.

In fact, this confidence interval enjoys an optimality property of being
adaptive to certain levels of smoothness of the conditional mean, that is, it is
almost as tight as the tightest confidence interval if the smoothness of the
conditional mean was known. More formally, suppose $E(Y_i|X_i=x)$ approaches
$E(Y_i|X_i=0)$ at the rate $x^\beta$ for some $\beta\in (0,1]$. Then, so long as
$\overline h\to 0$ slowly enough and $\underline h\to 0$ quickly enough, the
lower endpoint of this confidence interval will shrink toward
$\theta(0)=E(Y_i|X_i=0)$ at the same rate as a confidence interval constructed
using prior knowledge of $\beta$, up to a term involving $\log\log n$.
Furthermore, no confidence region can achieve this rate simultaneously for
$\beta$ in a nontrivial interval without giving up this $\log\log n$ term. Since
the $\log\log n$ term comes from the multiple bandwidth adjustment in our
critical values, this shows that such an adjustment (or something like it), is
necessary for this form of adaptation. In particular, one cannot estimate the
optimal bandwidth accurately enough to do away with our correction
\citep[see][for details]{armstrong_adaptive_2015}.

\subsubsection{Choice of $\underline{h}$ and
  $\overline{h}$}\label{sec:choice-bw-limits-rd}

Let us discuss some general considerations for a choice of the smallest and
largest bandwidth in the context of sensitivity analysis in RD (for adaptive
inference under monotonicity, the appropriate choice depends on the range of
smoothness levels of the conditional mean, see \citet{armstrong_adaptive_2015}
for details). A conservative approach is to set $\underline h$ to the smallest
value such that enough observations are used for the central limit theorem to
give a good approximation (say, 50 effective observations). If one is interested
in inference on $\theta(0)$, using the CCT bias-correction discussed above,
$\overline{h}$ can be set to be of the same order as ${h}^{*}_{ll}$, such as
$\overline{h}=3{h}^{*}_{ll}/2$ or $\overline{h}=2{h}^{*}_{ll}$. Alternatively,
one can take an even more conservative approach of setting $\overline h$ to
include all of the data, so long as one keeps in mind that CIs with $h$ much
larger than ${h}^{*}_{ll}$ may not contain $\theta(0)$ due to bias. Given that
the critical value increases slowly with $\overline h/\underline h$ for moderate
to large values of $\overline h/\underline h$, the resulting critical value will
not be much larger than under a more moderate choice of $\underline h$ and
$\overline h$.

Implementations of the MSE optimal bandwidth such as those in
\citet{imbens_optimal_2012} and \citet{cct14} typically yield a random
bandwidth, so if $\overline{h}$ depends on it, it will also be random. While we
state our results for nonrandom $[\underline h,\overline h]$, our results can be
extended to this case without the need for additional corrections so long as
$\overline h/\overline{h}^{*}\stackrel{p}{\to} 1$ and
$\underline h/\underline{h}^{*}\stackrel{p}{\to} 1$ for some nonrandom sequences
$\overline h^{*}$ and $\underline h^{*}$ satisfying our conditions.
\citet{imbens_optimal_2012} exhibit a nonrandom sequence $h^*_{IK}$ such that
their bandwidth selector $\hat{h}^{*}_{IK}$ satisfies
$\hat{h}^{*}_{IK}/h^*_{IK}\stackrel{p}{\to} 1$ under certain conditions, so that
one can take, for example, $\overline h=\hat{h}^{*}_{IK}$ or
$\overline h=2 \hat{h}^{*}_{IK}$ under these conditions.\footnote{While this
  argument applies to certain data-dependent bandwidth selectors, one cannot use
  arbitrary data-dependent rules to choose $[\overline h,\underline h]$ in our
  setup. As an extreme example, choosing $\overline h=\underline h$ to minimize
  the $p$-value for a particular hypothesis (and then arguing that a snooping
  correction is not needed since $\overline h=\underline h$) is clearly not
  compatible with our setup. Rather, one would have to define
  $[\underline h,\overline h]$ to be the range over which the $p$-value was
  minimized.} Note, however, that bandwidth selectors such as $\hat{h}^{*}_{IK}$
can, in practice, exhibit substantial variability and dependence on tuning
parameters chosen by the user.\footnote{For example, as
  \citet{imbens_optimal_2012} point out, the bandwidth that is ``optimal''
  according to their definition is infinite in certain cases. Their procedure
  uses tuning parameters to ensure that the selected bandwidth goes to zero in
  these cases, resulting in a bandwidth sequence that depends on tuning
  parameters asymptotically. This can lead to substantial differences between
  different implementations of their approach such as the original
  implementation in \citet{imbens_optimal_2012} and the implementation in
  \citet{cct14}.} To the extent that a data-dependent bandwidth is highly
variable in finite samples, or can be ``gamed'' using tuning parameters, it is
safer to make use a conservative choice of $[\underline h,\overline h]$ that
contains the data-dependent bandwidth with probability one regardless of the
tuning parameters.

\subsection{Trimmed average treatment effects under unconfoundedness}\label{te_unconf_sec}

We extend our setting to obtain uniform confidence bands for average treatment
effects (ATEs) on certain subpopulations under unconfoundedness. Here the
adjustment is slightly different, but it can still be computed using our tables
along with quantities that are routinely reported in applied research.

Let $Y_{i}(0)$ and $Y_{i}(1)$ denote the potential outcomes associated with a
binary treatment $D_{i}$ that is as good as randomly assigned, conditional on
covariates $X_{i}$, so that $E(Y_i(d)|X_i,D_i)=E(Y_i(d)|X_i)$. We observe and
i.i.d.\ sample $\{(X_i,D_i,Y_i)\}_{i=1}^n$, where
$Y_i=Y_{i}(1)D_{i}+Y_{i}(0)(1-D_{i})$ denotes the observed outcome. Let
$\tau(x)=E(Y_i(1)-Y_i(0)\mid X_i=x)=\mu_1(x)-\mu_0(x)$ denote the average
treatment effect for individuals with $X_{i}=x$, where
$\mu_d(x)=E(Y_i|X_i=x,D_i=d)$. Let $e(x)=P(D_i=1|X_i=x)$ denote the propensity
score.

Typically, we are interested in the ATE for the whole population,
$\theta(0)=E[Y_{i}(1)-Y_{i}(0)]$. However, since effects for individuals with
propensity score $e(X_{i})$ close to zero or one cannot be estimated very
precisely, in samples with limited overlap (i.e.~in which the number of such
individuals is high), estimates of the ATE $\theta(0)$ will be too noisy. To
deal with this problem, it is common in empirical applications to trim the
sample by discarding observations with extreme values of the propensity
score.\footnote{\label{fn:trimmed-ate-snooping-examples}For prominent examples,
  see \citet{HeIcTo97}, \citet{GGS05}, or \citet{BaGB15}.} Doing so, however,
changes the estimand. In particular, if the sample is restricted to individuals
with moderate values of the propensity score,
$\mathcal{X}_h=\{X_{i}\colon h\le e(X_i)\le 1-h\}$ for some $0\le h< 1/2$, then,
as discussed in \citet{crump_dealing_2009}, the estimand changes from
$\theta(0)$ to
\begin{equation*}
  \theta(h)
  =E(Y_i(1)-Y_i(0)|X_i\in \mathcal{X}_h)=E(\tau(X_i)|X_i\in\mathcal{X}_h),
\end{equation*}
One therefore faces the trade-off between increasing $h$ from $0$ to increase
the precision of the estimator at the cost of making the estimand $\theta(h)$
arguably less interesting. See \citet{crump_dealing_2009},
\citet{hill_robust_2013} and \citet{khan_irregular_2010} for a detailed
discussion of these issues. \citet{crump_dealing_2009} propose a rule for
picking the trimming parameter $h$ that minimizes the variance of the resulting
estimator. In practice, one may want to resolve this trade-off in other ways.
Our approach of reporting a uniform confidence band allows the researcher to
avoid the issue of which trimmed estimate to report and simply report a range of
estimates. With the reported confidence band for $\theta(h)$, the reader can
pick their preferred trimming value, assess treatment effect heterogeneity by
examining how $\theta(h)$ varies with $h$, or obtain a confidence interval for
$\theta(0)$ based on the reader's own beliefs about the smoothness of
$\theta(h)$.

When forming this confidence band, one can choose a trimming range
$[\underline h,\overline h]$ that is wide enough to include different
suggestions in the literature about the appropriate amount of trimming, such as
$[0,0.1]$ or $[0,0.2]$. This allows the reader to pick their preferred trimming
amount as well as assess the sensitivity of the results to the amount of
trimming.

To describe the adjustment to critical values in this setting, let
$\hat{\theta}(h)$ be an efficient estimator of $\theta(h)$ (in the sense of
satisfying condition~\eqref{cate_inf_func_eq} in
Appendix~\ref{sec:techn-deta-appl}), and let $se(h)$ denote its standard error.
Let $N(h)$ be the number of untrimmed observations for a given $h$ (i.e.\ number
of observations $i$ such that $X_i\in \mathcal{X}_{h}$). In contrast to the
previous applications, assume that $\underline h$ and $\overline h$ are fixed.
If $e(X_i)$ is close to zero or one with high probability, the variance bound
for the ATE, $\theta(0)$, may be infinite, and a sequence of trimming points
$h_n\to 0$ can be used to obtain estimators that converge to the ATE at a slower
than root-$n$ rate \citep[see][]{khan_irregular_2010}. We expect that our
results can be extended to this case under appropriate regularity conditions,
but we leave this question for future research. We form our uniform confidence
band as
\begin{equation}\label{ate_unconf_t_hat_eq}
  \Big\{\hat\theta(h)\pm \cvt{1-\alpha}(\hat t,k_{\text{uniform}})
  \cdot se(h)\Big|
  h\in[\underline h,\overline h]\Big\},
  \quad\text{where}\quad
  \hat t
  =\frac{se(\underline h)^2N(\underline h)^2}{se(\overline h)^2N(\overline h)^2},
\end{equation}
and $k_{\text{uniform}}$ denotes the uniform kernel. In
Theorem~\ref{ate_unconf_thm} in Appendix~\ref{sec:techn-deta-appl}, we show that
this confidence band is asymptotically valid under appropriate regularity
conditions. The critical value given above comes from an approximation by a
scaled Brownian motion where the ``effective sample size'' is proportional to a
quantity that can be estimated by $se(h)^2N(h)^2$. See proof of
Theorem~\ref{ate_unconf_thm} in Supplemental Appendix~\ref{unconf_proof_sec} for
details.

\subsection{LATEs for different sets of compliers}\label{late_sec}

We observe $(Z_i,D_i,Y_i)$ where $Z_i$ is an exogenous instrument shifting a
binary treatment variable $D_{i}$, and $Y_{i}$ is an outcome variable. Let
$[\underline z,\overline z]$ be the support of $Z_i$, and assume, for
simplicity, that $\underline{z}$ and $\overline z$ are finite (this does not
involve much loss in generality, since $Z_i$ can always be transformed to the
unit interval by redefining $Z_i$ as its percentile rank). Suppose that
$P(D_i=1\mid Z_i=z)$ is increasing in $z$, and for
$h\leq (\overline z-\underline z)/2$ define
\begin{equation*}
  \theta(h)
  =\frac{E(Y_i\mid Z_i\in [\overline{z}-h,\overline{z}])
    -E(Y_i|Z_i\in [\underline{z},\underline{z}+h])}
  {P(D_i=1\mid Z_i\in [\overline{z}-h,\overline{z}])
    -P(D_i=1|Z_i\in[\underline{z},\underline{z}+h])}.
\end{equation*}
Under certain exogeneity and monotonicity assumptions, $\theta(h)$ gives the
average effect for the subpopulation of ``compliers'', individuals who change
their treatment status if their instrument shifts from
$Z_{i}\in[\underline{z},\underline{z}+h]$ to
$Z_{i}\in [\overline{z}-h,\overline{z}]$. In the literature, this is called the
``local average treatment effect'', or LATE
\citep[see][]{imbens_identification_1994,heckman_structural_2005,heckman_understanding_2006}.
It can be estimated with the sample analogue
\begin{align*}
\hat\theta(h)=\frac{\frac{1}{\#\{Z_i\in[\overline z-h,\overline z]\}}\sum_{Z_i\in[\overline z-h,\overline z]}Y_i-\frac{1}{\#\{Z_i\in[\underline z,\underline z+h]\}}\sum_{Z_i\in[\underline z,\underline z+h]}Y_i}
  {\frac{1}{\#\{Z_i\in[\overline z-h,\overline z]\}}\sum_{Z_i\in[\overline z-h,\overline z]}D_i-\frac{1}{\#\{Z_i\in[\underline z,\underline z+h]\}}\sum_{Z_i\in[\underline z,\underline z+h]}D_i},
\end{align*}
where $\#\mathcal{A}$ denotes the number of elements in a set $\mathcal{A}$. The
estimator $\hat{\theta}(h)$ is numerically identical to the instrumental
variables estimator for $\beta$ in the equation
$Y_i=\alpha+D_i\beta+\varepsilon$, where the sample is restricted to
observations with
$Z_i\in [\underline z,\underline z+h]\cup [\overline z-h,\overline z]$ and the
instrument is $I(Z_i\ge \overline z-h)$. Let $\hat\sigma^2(h)/h$ be the robust
variance estimate for $\sqrt{n}(\hat\beta-\beta)$ from this IV regression, so
that $\hat\sigma(h)/\sqrt{nh}=se(h)$ is the standard error for $\hat\theta(h)$.

The parameter $\theta(0)=\lim_{h\to 0}\theta(0)$ is typically of particular
interest since it corresponds to the LATE for the largest subpopulation for
which the LATE is identified
\citep[see][]{frolich_nonparametric_2007,heckman_structural_2005,heckman_understanding_2006}.
In finite samples one faces a trade-off similar to that in the trimmed ATE
application in Section~\ref{te_unconf_sec}: increasing $h$ increases the
precision of the estimate, but decreases the size of the complier subpopulation
associated with the estimand.

Theorem~\ref{late_thm} in Appendix~\ref{sec:techn-deta-appl} shows that under
appropriate regularity conditions, the confidence band
$[\hat{\theta}(h)\pm
\cvt{1-\alpha}(\overline{h}/\underline{h},k_{\text{uniform}})
se(h)]_{h\in[\underline{h},\overline{h}]}$, where $k_{\text{uniform}}$ denotes
the uniform kernel, is a valid confidence band for $\theta(h)$. This result
follows from the fact that $\hat\theta(h)$ is composed of kernel-based
estimators with the uniform kernel (e.g.\
$\frac{1}{\#\{Z_i\in[\underline z,\underline z+h]\}}\sum_{Z_i\in[\underline
  z,\underline z+h]}Y_i$ is a uniform kernel estimate of
$E[Y_i\mid Z_i=\underline z]$). This confidence band provides a simple way of
summarizing the estimates of $\theta(h)$ for a range of values of $h$ and their
statistical accuracy, while formally taking into account that one has looked at
multiple estimates. This allows the reader to assess treatment effect
heterogeneity by examining how $\theta(h)$ varies with $h$, or obtain a
confidence interval for $\theta(0)$ based on their own beliefs about the
smoothness of $\theta(h)$.

In addition to the trimmed ATE and LATE applications, similar extensions are
possible to other econometric models that are ``identified at infinity''
\citep[see, among
others][]{chamberlain_asymptotic_1986,heckman_varieties_1990,andrews_semiparametric_1998}.
In the interest of brevity, we do not pursue such extensions here.

\section{Empirical illustrations}\label{empirical_sec}
\subsection{U.S. House elections}\label{sec:u.s.-house-elections}
Our first empirical example is based on \citet{lee08}, who is interested in the
effect of an incumbency advantage in U.S.~House elections. Given the inherent
uncertainty in final vote counts, the party that wins is essentially randomized
in elections that are decided by a narrow margin, so that the incumbency
advantage can be identified using a sharp regression discontinuity design.

In particular, the running variable $X_{i}$ is the Democratic margin of victory
in a given election $i$. The outcome variable $Y_{i}$ is the Democratic vote
share in the next election. The parameter $\theta(0)$ is then the incumbency
advantage for Democrats---the impact of being the current incumbent party in a
congressional district on the probability of winning the next election. There
are $6,558$ observations in this dataset, spanning House elections between 1946
and 1998.

To analyze the data, \citet{lee08} uses a global fourth degree polynomial, which
yields a point estimate of 7.7\%. However, global polynomial estimates may give
large weights to observations far away from the threshold and be sensitive to
the degree of the polynomial \citep{gi14}. We therefore reanalyze the data using
local linear and local quadratic regression with a triangular kernel. We
consider bandwidths between $2$ and $40$, which includes the
\citet[IK]{imbens_optimal_2012} optimal bandwidth selector for local linear
regression, equal to $29.4$. Figure~\ref{fig:lee-example} plots the results.
Because the IK bandwidth is designed to minimize the mean squared error of the
local linear estimator, as discussed in Section~\ref{reg_discont_sec}, the bias
at bandwidths of this order is not asymptotically negligible. Panel (a) of
Figure~\ref{fig:lee-example} should therefore be interpreted as a confidence
band for $\theta(h)$. As discussed in that section, one can interpret the local
quadratic estimator as implementing the \citet{cct14} bias-correction method, so
that panel (b) can be  interpreted as giving results for $\theta(0)$.

The incumbency effect remains positive and significant over the entire range,
even after using the corrected critical value, and after implementing the
\citet{cct14} bias correction. At the IK bandwidth, the confidence interval is
given by $(4.49,8.87)$ for the local quadratic (bias-corrected) estimator. Our
adjustment widens it slightly to $(3.82,9.54)$. These results suggest that the
estimates are very robust to the choice of bandwidth.

\subsection{Progresa / Oportunidades}\label{sec:progr--oport}
Our second empirical example examines the effect of the Oportunidades
anti-poverty conditional cash transfer program in Mexico, using a dataset from
\citet[CCT]{cct14}. The program started in 1998 under the name of Progresa in
rural areas, and expanded to urban areas in 2003. The program is designed to
target poverty by providing cash payments to families in exchange for regular
school attendance, health clinic visits, and nutritional support. The transfer
constituted a significant contribution to the income of eligible families.

We focus on the program treatment effect in the urban areas. Here, unlike in the
rural areas, the program was first offered in neighborhoods with the highest
density of poor households. In order to accurately target the program to poor
households, household eligibility to participate in the program was based on a
pre-intervention household poverty index. This eligibility assignment rule
naturally leads to sharp (intention-to-treat) regression-discontinuity design.

As in CCT, we focus on the effect of the program on food and non-food
consumption expenditures two years after its implementation (consumption is
measured in pesos, expressed as monthly expenditures per household member). We
normalize the poverty index so that the participation cutoff is zero. There are
2,809 households in the dataset, 691 with index $X_{i}>0$, and 2,118 controls
with $X_{i}<0$. For the effect on food consumption, the IK bandwidth selector
sets $h_{IK}=1.44$, with 95\% confidence interval around the local linear
estimator equal to $(6.7, 71.2)$, and to $(4.6, 102.7)$ for the local quadratic
estimator, suggesting a significantly positive effect. For non-food consumption,
$h_{IK}=1.09$, and the 95\% confidence intervals are given by $(1.6, 53.7)$ for
the local linear estimator, and by $(4.5,79.3)$ for the local quadratic
estimator. To examine sensitivity of these results to snooping, we plot the estimates,
along with pointwise and uniform confidence bands over a range of bandwidths in
Figures~\ref{fig:food-example} and~\ref{fig:nfood-example}. In contrast to the
previous empirical example, the figures indicate that the results are sensitive
to bandwidth choice: the uniform bands contain zero over the entire range
plotted for both outcomes.

\subsection{Right heart catheterization}
Our final example uses data from \citet{connors96rhc} to examine the effect of
receiving right heart catheterization (RHC) on 30-day mortality. The data
contain information on 5,735 adult patients who were critically ill upon
admission to the hospital ICU, 2,184 treated and 3,551 controls. The treatment,
an indicator for receiving RHC withing 24 hours of admission, is assumed to be
as good as randomly conditional on 72 covariates (see \citet{connors96rhc} for a
detailed description).

The original analysis by \citet{connors96rhc} matched on the propensity score
estimated by a logistic regression, with each unit matched at most once. It
found that RHC appeared to lead to lower survival than not performing RHC\@,
contradicting a popular perception among practitioners that RHC was beneficial.
To estimate the treatment effect, we follow the procedure in reanalysis of this
data by \citet{crump_dealing_2009}. First, we estimate the propensity score by
logistic regression. We then take the difference between the treated and control
units weighted by the estimated propensity score. Standard errors are computed
by the bootstrap.

Due to limited overlap, \citet{crump_dealing_2009} trim the data by setting the
trimming parameter to $h=0.1$, discarding individuals with propensity score
lower than 0.1 and higher than 0.9. To examine sensitivity of the results to the
amount of trimming, we consider a range of trimming parameters from $0$ to
$0.1$. This leads to an effective bandwidth ratio $\hat{t}=2.00$.
Figure~\ref{fig:rhc} plots the results. Without trimming, the unadjusted 95\%
confidence interval is given by $(0.027, 0.092)$. Trimming at $h=0.1$ reduces it
to $(0.031, 0.087)$. Adjusting the confidence intervals for snooping widens them
to $(0.018, 0.100)$ and $(0.024,0.094)$, respectively. Overall, the results are
stable over the trimming range, with the precision of the estimates increasing
with trimming. The conclusion that RHC negatively impacts survival is robust to
snooping, with RHC lowering the 30-day survival probability by about 6\%.

\section{Conclusion}\label{conclusion_sec}
Nonparametric estimators typically involve a choice of tuning parameter. To
ensure robustness of the results to tuning parameter choice, researchers often
examine sensitivity of the results to the value of the tuning parameter.
However, if the tuning parameter is chosen based on this sensitivity analysis,
the resulting confidence intervals may undercover even if the estimator is
unbiased.

In this paper, we addressed this problem when the estimator is kernel-based, and
the tuning parameter is a bandwidth. We showed that if one uses an adjusted
critical value instead of the usual critical value based on quantiles of a
normal distribution, the resulting confidence interval will be robust to this
form of ``bandwidth snooping''.

The adjustment only depends on the kernel and the ratio of biggest to smallest
bandwidth that the researcher has tried. Therefore, readers can easily quantify
the robustness of reported results to the bandwidth choice, as long as both a
point estimate and a standard error have been reported. Our method also allows
researchers to report the results for a range of bandwidths along with the
adjusted confidence bands as a routine robustness check, allowing readers to
select their own bandwidth.

\clearpage

\appendix

\part*{\LARGE{Appendix}}
\allowdisplaybreaks
This appendix contains the proof of Theorem~\ref{highlevel_asym_dist_thm} in the
main text, as well as auxiliary results. Appendix~\ref{sec:proof-main-result}
contains the proof of the main result. Appendix~\ref{sec:techn-deta-appl} gives
formal regularity conditions applying the main result to models considered in
Section~\ref{applications_sec}. Appendix~\ref{sensitivity_analysis_sec}
discusses the use of uniform and pointwise in $h$ confidence regions in
sensitivity analysis. Additional results and proofs are in the supplemental
appendix.

Throughout this appendix, we use the following additional notation.
For a sample $\{Z_i\}_{i=1}^n$ and a function $f$ on the sample space,
$E_{n}f(Z_i)=\frac{1}{n}\sum_{i=1}^{n}f(Z_i)$ denotes the sample mean, and
$\mathbb{G}_{n}f(Z_i)=\sqrt{n}(E_{n}-E)f(Z_i)=\sqrt{n}[E_{n}f(Z_i)-Ef(Z_i)]$
denotes the empirical process.  We use $t\vee t'$ and $t\wedge t'$ to denote elementwise maximum and minimum, respectively.  We use $e_k$ to denote the $k$th basis vector in Euclidean space (where the dimension of the space is clear from context).

\section{Proof of Main Result}\label{sec:proof-main-result}

\subsection{Equivalence Results for Extreme Value Limits}

This section proves an equivalence result for extreme value limits of the form
proved in this paper.

\begin{theorem}\label{ev_limit_thm_general}
Let $h_n^*$ and $\underline h_n$ be sequences with $\underline h_n\to 0$, $h_n^*=\mathcal{O}(1)$ and $h_n^*/\underline h_n\to\infty$, and let $\mathbb{T}_n(h)$ and $\tilde{\mathbb{T}}_n(h)$ be random processes on $\mathbb{R}$.
Suppose that
\begin{equation}\label{ev_limit_eq_general}
\sqrt{2\log\log (h_n^*/\underline h_n)}\left(\sup_{\underline h_n\le h\le h_n^*}\mathbb{T}_n(h)
  -\sqrt{2\log\log (h_n^*/\underline h_n)}\right)
-b(\log\log (h_n^*/\underline h_n))\stackrel{d}{\to} Z
\end{equation}
for some limiting variable $Z$ and $b(t)=\log c_2$ or
$b(t)=\log c_1+\log \sqrt{2t}$ for some constants $c_1$ and $c_2$.

Suppose that
\begin{equation}\label{lln_equiv_general}
\sqrt{\log\log (h_n^*/\underline h_n)}\sup_{\underline h_n\le h\le h_n^*}
\left|\mathbb{T}_n(h)-\tilde{\mathbb{T}}_n(h)\right|\stackrel{p}{\to} 0.
\end{equation}
Then (\ref{ev_limit_eq_general}) holds with $\mathbb{T}_n(h)$ replaced by $\tilde{\mathbb{T}}_n(h)$.  If, in addition, for some sequence $\overline h_n$ with $\overline h_n\ge h_n^*$,
$\log\log (h_n^*/\underline h_n)-\log\log (\overline h_n/\underline h_n)\to 0$ and, for some $\varepsilon>0$,
\begin{equation}\label{Tn_bound_eq}
\frac{\sup_{h_n^*\le h\le \overline h_n}\tilde{\mathbb{T}}_n(h)}{\sqrt{2\log\log (\overline h_n/\underline h_n)}}\le 1-\varepsilon
\text{ with probability approaching one},
\end{equation}
then (\ref{ev_limit_eq_general}) holds with $\mathbb{T}_n(h)$ replaced by $\tilde{\mathbb{T}}_n(h)$ and $h_n^*$ replaced by $\overline h_n$.
\end{theorem}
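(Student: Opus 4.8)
The plan is to prove the two claims in turn, both via Slutsky-type arguments resting on the elementary supremum inequality $\abs{\sup_h f(h)-\sup_h g(h)}\le \sup_h\abs{f(h)-g(h)}$. Throughout I would write $L_n=\log\log(h_n^*/\underline h_n)$ and $L_n'=\log\log(\overline h_n/\underline h_n)$, and record the consequences of the hypotheses that matter: since $h_n^*/\underline h_n\to\infty$ we have $L_n\to\infty$, and in the second part $L_n'\ge L_n$ with $L_n'-L_n\to 0$, so that $L_n/L_n'\to 1$.

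For the first claim I would bound the gap between the two normalized suprema directly. The supremum inequality gives $\abs{\sup_{[\underline h_n,h_n^*]}\mathbb{T}_n-\sup_{[\underline h_n,h_n^*]}\tilde{\mathbb{T}}_n}\le \sup_{[\underline h_n,h_n^*]}\abs{\mathbb{T}_n-\tilde{\mathbb{T}}_n}$, so multiplying by $\sqrt{2L_n}$ and invoking (\ref{lln_equiv_general}) shows $\sqrt{2L_n}\bigl(\sup\mathbb{T}_n-\sup\tilde{\mathbb{T}}_n\bigr)\stackrel{p}{\to}0$. Since the $\mathbb{T}_n$- and $\tilde{\mathbb{T}}_n$-versions of the statistic in (\ref{ev_limit_eq_general}) differ by exactly this term, they differ by $o_P(1)$, and Slutsky's lemma transfers the limit $Z$ from $\mathbb{T}_n$ to $\tilde{\mathbb{T}}_n$.

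For the second claim the first step is to show that the extra range $[h_n^*,\overline h_n]$ is asymptotically irrelevant, so that $\sup_{[\underline h_n,\overline h_n]}\tilde{\mathbb{T}}_n=\sup_{[\underline h_n,h_n^*]}\tilde{\mathbb{T}}_n$ with probability approaching one. Writing $S_n=\sup_{[\underline h_n,h_n^*]}\tilde{\mathbb{T}}_n$, the first claim (just proved) gives $S_n/\sqrt{2L_n}\stackrel{p}{\to}1$, and because $L_n/L_n'\to 1$ this upgrades to $S_n/\sqrt{2L_n'}\stackrel{p}{\to}1$. Meanwhile (\ref{Tn_bound_eq}) forces $\sup_{[h_n^*,\overline h_n]}\tilde{\mathbb{T}}_n\le(1-\varepsilon)\sqrt{2L_n'}$ with probability approaching one. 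Hence, on an event of probability tending to one, the maximum over the full range is attained on $[\underline h_n,h_n^*]$ and equals $S_n$.

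It then remains to swap the $L_n$-normalization for the $L_n'$-normalization, and this is the step I expect to require the most care. Putting $a=\sqrt{2L_n}$ and $a'=\sqrt{2L_n'}$, the target statistic minus the known one reduces algebraically to $a'(S_n-a')-a(S_n-a)-\bigl[b(L_n')-b(L_n)\bigr]=(a'-a)S_n-(a'^2-a^2)-\bigl[b(L_n')-b(L_n)\bigr]$, and I would show each piece is $o_P(1)$. First, $a'^2-a^2=2(L_n'-L_n)\to 0$. Second, writing $S_n=a+R_n$, the relation $a(S_n-a)=b(L_n)+O_P(1)$ from the first claim gives $R_n=(b(L_n)+O_P(1))/a=o_P(1)$ (using that $b(L_n)=O(\log L_n)$ while $a\to\infty$), and since $a'-a=\sqrt{2}\,(L_n'-L_n)/(\sqrt{L_n'}+\sqrt{L_n})\to 0$ with $(a'-a)a$ asymptotic to $L_n'-L_n\to 0$, one gets $(a'-a)S_n=(a'-a)a+(a'-a)R_n\stackrel{p}{\to}0$. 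Third, $b(L_n')-b(L_n)$ is identically zero when $b=\log c_2$ and equals $\tfrac12\log(L_n'/L_n)\to 0$ when $b(t)=\log c_1+\log\sqrt{2t}$. The genuine obstacle is precisely this bookkeeping: the two slowly diverging normalizations $a$ and $a'$, each of order $\sqrt{\log\log}$, must coalesce fast enough—this is exactly what $L_n'-L_n\to 0$ buys—that multiplying their difference by $S_n\sim a$ (a quantity of the same order) still vanishes. Collecting the three estimates, the target differs by $o_P(1)$ from the $\mathbb{T}_n$-version with $h_n^*$ on the probability-one event, and a final application of Slutsky's lemma yields the limit $Z$, which is (\ref{ev_limit_eq_general}) with $\mathbb{T}_n$ replaced by $\tilde{\mathbb{T}}_n$ and $h_n^*$ replaced by $\overline h_n$.
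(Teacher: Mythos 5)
Your proposal is correct and takes essentially the same route as the paper's proof: the first claim via the supremum inequality plus Slutsky, and the second claim by showing that the supremum over $[\underline h_n,\overline h_n]$ coincides with $S_n=\sup_{[\underline h_n,h_n^*]}\tilde{\mathbb{T}}_n$ with probability approaching one (using (\ref{Tn_bound_eq}) together with $S_n/\sqrt{2\log\log(\overline h_n/\underline h_n)}\stackrel{p}{\to}1$), followed by a normalization swap. The paper packages that swap as a generic Slutsky-type lemma for sequences $a_nX_n-b_n$, but the decomposition and the key estimates --- $(a'-a)a\to 0$, $a'^2-a^2\to 0$, $b(L_n')-b(L_n)\to 0$, all driven by $L_n'-L_n\to 0$ --- are the same ones you verify explicitly.
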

\begin{proof}
The first claim is immediate from the bound $\Bigl|\sup_{\underline h_n\le h\le h_n^*}
\mathbb{T}_n(h)-\sup_{\underline h_n\le h\le h_n^*}\tilde{\mathbb{T}}_n(h)\Bigr|\le \sup_{\underline h_n\le h\le h_n^*}
\left|\mathbb{T}_n(h)-\tilde{\mathbb{T}}_n(h)\right|$ and Slutsky's theorem.

For the second claim, note that, since~\eqref{ev_limit_eq_general} holds for
$\tilde{\mathbb{T}}_n$, $\sup_{\underline h_n\le h\le h_n^*}
\tilde{\mathbb{T}}_n(h)/\sqrt{2\log\log(\overline h_n/\underline
  h_n)}\stackrel{p}{\to} 1$ so that, with probability approaching one,
$\sup_{\underline h_n\le h\le \overline h_n}
\tilde{\mathbb{T}}_n(h)=\sup_{\underline h_n\le h\le h_n^*} \mathbb{T}_n(h)$. By
Slutsky's theorem, $a_{n}X_n-b_n\stackrel{d}{\to} Z$ implies
$a_n'X_n-b_n'\stackrel{d}{\to} Z$ so long as $b_n-b_n'\to 0$ and
$(a_n-a_n')\frac{1\vee b_n}{a_n}\to 0$ (note that $(a_n-a_n')X_n-(b_n-b_n')
=\frac{a_{n}-a_n'}{a_n} (a_{n} X_n-b_n) +\frac{b_n}{a_n}(a_n-a_n')-(b_n-b_n')$).
Applying this fact with $a_n=\sqrt{2\log\log (h_n^*/\underline h_n)}$,
$a_n'=\sqrt{2\log\log (\overline h_n/\underline h_n)}$, $b_n=2\log\log
(h_n^*/\underline h_n)+b(\log\log (h_n^*/\underline h_n))$ and $b_n'=2\log\log
(\overline h_n/\underline h_n)+b(\log\log (\overline h_n/\underline h_n))$, we
have
\begin{align*}
(a_n-a_n')\frac{1\vee b_n}{a_n}
&=\left(\sqrt{2\log\log (h_n^*/\underline h_n)}
  -\sqrt{2\log\log (\overline h_n/\underline h_n)}\right)
\textstyle\frac{2\log\log (h_n^*/\underline h_n)+b(\log\log (h_n^*/\underline h_n))}{\sqrt{2\log\log (h_n^*/\underline h_n)}}  \\
&=\left(\sqrt{2\log\log (h_n^*/\underline h_n)}-\sqrt{2\log\log (\overline h_n/\underline h_n)}\right)
\left(\sqrt{2\log\log (h_n^*/\underline h_n)}+o(1)\right)  \\
&=\frac{2\log\log (h_n^*/\underline h_n)-2\log\log (\overline h_n/\underline h_n)}{\sqrt{2\log\log (h_n^*/\underline h_n)}+\sqrt{2\log\log (\overline h_n/\underline h_n)}}
\left(\sqrt{2\log\log (h_n^*/\underline h_n)}+o(1)\right)
\to 0
\end{align*}
and $b_n-b_n'=b(\log\log (h_n^*/\underline h_n))-b(\log\log (\overline
h_n/\underline h_n))+o(1)\to 0$ since $|b(t)-b(t')|\le t-t'$ for large enough
$t$ and $t'$.
\end{proof}

To prove our main result, we apply Theorem~\ref{ev_limit_thm_general} twice.
First, we show that, under the conditions of
Theorem~\ref{highlevel_asym_dist_thm}, for some $\varepsilon>0$,
\begin{equation*}
  \frac{\sup_{h^*_n\le h\le \overline h_n} \sqrt{nh}|\hat\theta(h)-\theta(h)|/\hat\sigma(h)}{\sqrt{2\log\log(\overline h_n/\underline h_n)}}
  =\frac{\sup_{h^*_n\le h\le \overline h_n}\frac{1}{\sqrt{nh}}|\sum_{i=1}^n\psi(W_i,h)k(X_i/h)|}
  {\sqrt{2\log\log(\overline h_n/\underline h_n)}}
  +o_P(1)
  \le 1-\varepsilon
\end{equation*}
with probability approaching one,
where
\begin{equation}\label{hnstar_def_eq}
h_n^*=\exp\left[-(\log \underline h_n^{-1})^{1/K}\right]
\end{equation}
for $K$ large enough (the reasoning behind this choice of $h_n^*$ is explained below; in the case where $\overline h_n$ goes to zero more quickly than this choice of $h_n^*$, this step can be skipped).  For this choice of $h_n^*$, (\ref{lln_equiv_general}) is shown to hold with $\tilde{\mathbb{T}}_n(h)$ given by $\frac{\sqrt{nh}|\hat\theta(h)-\theta(h)|}{\hat\sigma(h)}$ and
$\mathbb{T}_n(h)$ given by $\frac{1}{\sqrt{nh}} |\sum_{i=1}^n \tilde Y_{i}k(X_i/h)|$, where
\begin{equation}\label{tilde_yi_eq}
\tilde Y_i=\frac{\psi(W_i,0)-E[\psi(W_i,0)||X_i|]}{\sqrt{var(\psi(W_i,0)||X_i|)f_{|X|}(|X_i|)\int_{0}^{\infty} k(u)^2\, du}}.
\end{equation}

Next, it is shown that~\eqref{lln_equiv_general} holds for
$\tilde{\mathbb{T}}_n(h)$ given by $\frac{1}{\sqrt{nh}} |\sum_{i=1}^n \tilde{Y}_{i}k(X_i/h)|$ and $\mathbb{T}_n(h)$ given by the absolute value of a Gaussian process with
the same covariance kernel, which can be constructed on the same sample space.
Calculating this covariance kernel, we see that
\begin{multline*}
cov\left(\frac{1}{\sqrt{nh}} \sum_{i=1}^n \tilde Y_{i}k(X_i/h),
\frac{1}{\sqrt{nh}} \sum_{i=1}^n \tilde Y_{i}k(X_i/h')\right)
=E\frac{1}{\sqrt{hh'}} E[\tilde Y_i^2||X_i|]k(|X_i|/h)k(|X_i|/h')  \\
=E\left\{\frac{1}{\sqrt{hh'}} \left[f_{|X|}(|X_i|)\int_{0}^\infty k(u)^2\, du\right]^{-1}k(|X_i|/h)k(|X_i|/h')\right\}
=\frac{\int k(x/h)k(x/h')\, dx}{\sqrt{hh'}\int k(u)^2\, du}
\end{multline*}
(here, we use the fact that $k(|X_i|/h)=k(X_i/h)$ and
$\int k(u)^2\, du=2\int_0^\infty k(u)^2\, du$, since $k$ is symmetric). The
change of variables $u=x/h'$ shows that the covariance kernel depends only on
$h'/h$, so that the Gaussian process is stationary when indexed by $t=\log h$.
The result then follows by applying a theorem for limits of stationary Gaussian
processes on increasing sets \citep[see][]{leadbetter_extremes_1983}.

The reasoning behind this choice of $h_n^*$ is as follows. With
$h_n^*=\exp[-(\log \underline h_n^{-1})^{1/K}]$, we have $h_n^*/\underline h_n
=\exp[-(\log \underline h_n^{-1})^{1/K}+(\log \underline h_n^{-1})]
=\exp\{(\log \underline h_n^{-1})[1-(\log \underline
  h_n^{-1})^{1/K-1}]\}$, so that
  \begin{equation*}
\log\log (h_n^*/\underline h_n)
=\log\{(\log \underline h_n^{-1})[1-(\log \underline h_n^{-1})^{1/K-1}]\}
=\log\log(\underline h_n^{-1})+\log [1-(\log \underline h_n^{-1})^{1/K-1}].
\end{equation*}
Since the last term converges to zero, this is equal to $\log\log(\underline
h_n^{-1})$ up to an $o(1)$ term, and the same holds for $\log\log(\overline
h_n/\underline h_n)$ as required.

To see why this choice of $h_n^*$ is useful for showing (\ref{Tn_bound_eq}),
note that, if the supremum of $\tilde{\mathbb{T}}_n(h)$ increases at the same
rate over $h^*_n\le h\le \overline h_n$ (as a function of $\overline h_n/h^*_n$)
as it does over $\underline h_n\le h\le h_n^*$ (as a function of
$h^*_n/\underline h_n$), then we will have, for some constant $C$ that does not
depend on $h_n^*$, $\sup_{h^*_n\le h\le \overline h_n}
\tilde{\mathbb{T}}_n(h)\le C\sqrt{\log\log (\overline h_n/h_n^*)}$ with
probability approaching one. Thus, (\ref{Tn_bound_eq}) will hold so long as
$\frac{\log \log (\overline h_n/h_n^*)}{\log \log (\overline h_n/\underline
  h_n)} =\frac{\log \log {h_n^*}^{-1}}{\log \log \underline h_n^{-1}}+o(1)$ can
be made arbitrarily small by making $K$ large, which we can do since $\log \log
{h_n^*}^{-1}=\log (\log \underline h_n^{-1})^{1/K} =(1/K)\log\log \underline
h_n^{-1}$.

The rest of this section uses Theorem~\ref{ev_limit_thm_general} to prove
Theorem~\ref{highlevel_asym_dist_thm}. First, we state some empirical process
bounds, which will be used later in the proof.

\subsection{Empirical Process Bounds}

This section states some empirical process bounds used later in the proof. The
proofs of these results are given in Supplemental Appendix~\ref{kern_tail_sec}
(see Lemmas~\ref{lil_bound_lemma_supp} and~\ref{lil_rate_lemma_supp}). In these
lemmas, the following conditions are assumed to hold for some finite constants
$B_f$, $B_k$ and $\overline f_X$. The function $f(w,h,t)$ is assumed to satisfy
$|f(W_i,h,t)k(X_i/h)|\le B_f$ for all $h\le \overline h$ and $t\in T$ with
probability one, and the class of functions
$\{(x,w)\mapsto f(w,h,t)k(x/h)| 0\le h\le \overline h, t\in T\}$ is contained in
some larger class $\mathcal{G}$ with polynomial covering number as defined in
Supplemental Appendix~\ref{tail_bounds_sec}. We assume that $k(x)$ is a bounded
kernel function with support $[-A,A]$ and $|k(x)|\le B_k<\infty$, and that $X_i$
is a real valued random variable with density $f_X(x)$ with
$f_X(x)\le \overline f_X<\infty$ for all $x$.

\begin{lemma}\label{lil_bound_lemma}
  Suppose that the conditions given above hold and let
  $a(h)=2\sqrt{K\log\log (1/h)}$ where $K$ is a constant depending only on
  $\mathcal{G}$ given in Lemma~\ref{kern_bound_lemma}. Then, for a constant
  $\varepsilon>0$ that depends only on $K$, $A$ and $\overline f_X$,
\begin{multline*}
P\left(|\mathbb{G}_{n}f(W_i,h,t)k(X_i/h)|
  \ge a(h) h^{1/2} B_{f}A^{1/2}\overline f_X^{1/2} \text{ some $(\log\log n)/(\varepsilon n)\le h\le \overline h$, $t\in T$}\right)  \\
\le K(\log 2)^{-2}\sum_{(2\overline h)^{-1}\le 2^k\le \infty} k^{-2}.
\end{multline*}
\end{lemma}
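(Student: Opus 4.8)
The plan is to prove this law-of-the-iterated-logarithm-type bound by a dyadic peeling argument in $h$: partition the admissible range of bandwidths into dyadic blocks, control the supremum of the empirical process over each block (in both $h$ and $t$) by the maximal inequality of Lemma \ref{kern_bound_lemma}, and then union-bound across blocks so that the per-block probabilities telescope into the series $\sum_k k^{-2}$. The key structural fact that makes the peeling work is that both the variance of the summand and the threshold scale with $h$ in a way that studentizes to a clean number of standard deviations.

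First I would record the envelope and variance of the summand. Since $k$ is supported on $[-A,A]$, the product $f(W_i,h,t)k(X_i/h)$ vanishes unless $\abs{X_i}\le Ah$, and it is bounded by $B_f$ by hypothesis. Together with $f_X\le\overline f_X$ this gives the second-moment bound $E[(f(W_i,h,t)k(X_i/h))^2]\le B_f^2\,P(\abs{X_i}\le Ah)$, which is of order $h\,B_f^2A\overline f_X$. Hence the standard deviation of the summand is a universal constant times $B_f A^{1/2}\overline f_X^{1/2}h^{1/2}$, which is exactly the scale multiplying $a(h)$ on the right-hand side: the event in the statement asks the studentized process to exceed $a(h)$ standard deviations.

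Next I would peel. Partition $[(\log\log n)/(\varepsilon n),\overline h]$ into dyadic blocks $h\in(2^{-(k+1)},2^{-k}]$, where $2^k$ ranges over $(2\overline h)^{-1}\le 2^k\le\infty$ (the lower cutoff on $h$ truncates the range at the other end, but bounding the series by all $k$ keeps the right-hand side free of $n$). On block $k$ the variance is uniformly of order $2^{-k}B_f^2A\overline f_X$, and $\log(1/h)\ge k\log 2$. Because the collection $\{(x,w)\mapsto f(w,h,t)k(x/h):0\le h\le\overline h,\,t\in T\}$ sits inside a class $\mathcal G$ with polynomial covering number, the supremum of $\abs{\mathbb G_n f(W_i,h,t)k(X_i/h)}$ over $h$ in block $k$ and $t\in T$ is a supremum over a subclass of $\mathcal G$, so Lemma \ref{kern_bound_lemma} applies and yields a tail bound whose exponent is a fixed multiple of $a(h)^2/K=4\log\log(1/h)$. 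The factor $2$ in the definition $a(h)=2\sqrt{K\log\log(1/h)}$ is chosen precisely so that this exponent equals $2\log\log(1/h)$, making the per-block probability at most a constant times $(\log(1/h))^{-2}\le(k\log 2)^{-2}=(\log 2)^{-2}k^{-2}$; summing over the admissible $k$ produces exactly $K(\log 2)^{-2}\sum_k k^{-2}$.

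The main obstacle, and the only place the lower cutoff $(\log\log n)/(\varepsilon n)$ enters, is ensuring that the maximal inequality sits in its sub-Gaussian (Bernstein) regime on every block, so that the tail is governed by the variance rather than the envelope $B_f$. In the Bernstein bound the variance term dominates the envelope term precisely when the deviation is at most of order $\sigma\sqrt{n}/B_f$, i.e. when $a(h)^2\lesssim nh\,A\overline f_X$; since $nh\ge(\log\log n)/\varepsilon$ on the admissible range while $a(h)^2=4K\log\log(1/h)\le 4K\log\log n$, this holds once $\varepsilon$ is taken small enough relative to $K$, $A$ and $\overline f_X$, which is exactly the dependence asserted. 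Verifying that Lemma \ref{kern_bound_lemma} can be applied uniformly over an entire dyadic block with a single constant $K$, and that the within-block variation of both $a(h)$ and the variance can be absorbed into the dyadic bookkeeping without degrading the exponent below $2\log\log(1/h)$, is the step requiring the most care; the remainder is routine bookkeeping over the union bound across blocks.
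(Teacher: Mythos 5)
Your architecture is almost certainly the same as the paper's (the proof itself is relegated to the supplemental appendix, but the form of the right-hand side, $K(\log 2)^{-2}\sum_{(2\overline h)^{-1}\le 2^k\le\infty}k^{-2}$ over dyadic $k$, essentially forces it): dyadic peeling in $h$, a per-block application of the maximal inequality of Lemma \ref{kern_bound_lemma}, a union bound across blocks, and the cutoff $(\log\log n)/(\varepsilon n)$ serving to keep the Bernstein-type bound in its sub-Gaussian regime; your derivation of why $\varepsilon$ depends only on $K$, $A$, $\overline f_X$ is also correct. The genuine gap is in the within-block bookkeeping, which is precisely the step you defer, and it is not a technicality: the numerology of $a(h)=2\sqrt{K\log\log(1/h)}$ is exactly tight. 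The sharpest variance bound available is $E\bigl[(f(W_i,h,t)k(X_i/h))^2\bigr]\le B_f^2P(|X_i|\le Ah)\le 2AhB_f^2\overline f_X$ (the support of $k$ is two-sided, hence the factor $2$), so the threshold $a(h)h^{1/2}B_fA^{1/2}\overline f_X^{1/2}$ is only $a(h)/\sqrt{2}$ standard deviations, and even with \emph{no} further loss the exponent delivered by the maximal inequality is $(a(h)/\sqrt2)^2/K=2\log\log(1/h)$. In other words, the factor $4$ in $a(h)^2$ is entirely consumed by this $\sqrt2$ and by the need for exponent $2$ rather than $1$. Your block-level step, however, compares $\sup_{h,t}|\mathbb{G}_nf(W_i,h,t)k(X_i/h)|$ over the block against the \emph{smallest} threshold on the block (of order $a(2^{-k})2^{-(k+1)/2}$) while using the \emph{largest} variance on the block (of order $2^{-k+1}AB_f^2\overline f_X$); this costs another $\sqrt2$ in standard-deviation units, the exponent degrades to $a(2^{-k})^2/(4K)=\log\log 2^k$, the per-block probability is only of order $K(k\log 2)^{-1}$, and $\sum_k k^{-1}$ diverges, so the argument as described delivers no finite bound at all, let alone the stated one.

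The missing idea is to fold the $h$-dependent threshold into the function class \emph{before} invoking Lemma \ref{kern_bound_lemma}: on block $k$, bound $P\bigl(\sup_{h,t}|\mathbb{G}_nf(W_i,h,t)k(X_i/h)|/(a(h)h^{1/2})\ge B_fA^{1/2}\overline f_X^{1/2}\bigr)$ by applying the maximal inequality to the class $\{(w,x)\mapsto f(w,h,t)k(x/h)/(a(h)h^{1/2})\colon h\in(2^{-(k+1)},2^{-k}],\,t\in T\}$, which inherits polynomial covering numbers since it is obtained from the original class by scalar multiplication with constants of bounded ratio. The $h^{1/2}$ then cancels pointwise in the variance, $E[g_{h,t}^2]\le 2AB_f^2\overline f_X/a(h)^2\le 2AB_f^2\overline f_X/a(2^{-k})^2$, so the only within-block loss is the harmless monotone rounding $a(h)\ge a(2^{-k})$; the studentized threshold is $a(2^{-k})/\sqrt2$, the per-block probability is $\le K\exp\bigl(-2\log\log 2^k\bigr)=K(k\log2)^{-2}$, and summing over $2^k\ge(2\overline h)^{-1}$ reproduces the right-hand side exactly. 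Note also that the envelope of the normalized class is $B_f/(a(h)h^{1/2})$, which blows up as $h\downarrow 0$; it is in checking the sub-Gaussian regime for this normalized envelope that the restriction $h\ge(\log\log n)/(\varepsilon n)$ is actually used, consistent with (and completing) the analysis in your last paragraph.
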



\begin{lemma}\label{lil_rate_lemma}
Under the conditions of Lemma~\ref{lil_bound_lemma},
\begin{equation*}
  \sup_{(\log\log n)/(\varepsilon n)\le h\le \overline h, t\in T}
  \frac{|\mathbb{G}_{n}f(W_i,h,t)k(X_i/h)|}{(\log\log h^{-1})^{1/2}h^{1/2}}
  =\mathcal{O}_P(1)
\end{equation*}
\end{lemma}
It will be useful to state a slight extension of these results. Suppose that
$f(W_i,h,t)k(X_i/h)$ converges to zero as $h\to 0$. In particular, suppose that,
for some bounded function $\ell(h)$,
\begin{equation}\label{mod_cont_assump}
  f(W_i,h,t)k(X_i/h)\le \ell(h)
\end{equation}
with probability one. Applying the above results with $f(W_i,h,t)$
replaced by $f(W_i,h,t)/\ell(h)$, we then have
\begin{equation*}
  \sup_{(\log\log n)/(\varepsilon n)\le h\le \overline h, t\in T}
  \frac{|\mathbb{G}_{n}f(W_i,h,t)k(X_i/h)|}{(\log\log h^{-1})^{1/2}h^{1/2}\ell(h)}
  =\mathcal{O}_P(1).
\end{equation*}
Thus,
\begin{align*}
  \sup_{\underline h_n\le h\le \overline h_n, t\in T}
  \frac{|\mathbb{G}_{n}f(W_i,h,t)k(X_i/h)|}{h^{1/2}} &=\mathcal{O}_P\left(
    \sup_{\underline h_n\le h\le \overline h_n}(\log\log h^{-1})^{1/2}\ell(h)
  \right)  \\
  &=\mathcal{O}_P\left( (\log\log \overline h_n^{-1})^{1/2}\ell(\overline h_n)
  \right),
\end{align*}
where the second equality holds if $(\log\log h^{-1})^{1/2}\ell(h)$ is nondecreasing in $h$.

\subsection{Replacing $\psi(W_i,h)$ with $\tilde Y_i$}

This section shows that~\eqref{Tn_bound_eq} holds for
$\tilde{\mathbb{T}}_{n}(h)=\sqrt{nh}|\hat\theta(h)-\theta(h)|/\hat\sigma(h)$,
and that~\eqref{lln_equiv_general} holds for
$\mathbb{T}_n(h)=\frac{1}{\sqrt{nh}}|\sum_{i=1}^n\tilde Y_{i} k(X_i/h)|$.

The following lemma proves~\eqref{Tn_bound_eq} for
$\sqrt{nh}|\hat\theta(h)-\theta(h)|/\hat\sigma(h)$.

\begin{lemma}\label{hstar_bound_lemma}
  Suppose that the classes of functions $w\mapsto \psi(w,h)$ and
  $x\mapsto k(x/h)$ have polynomial uniform covering numbers, $\psi(w,h)k(x/h)$
  is bounded, $X_i$ has a bounded density and that $k$ is a bounded kernel
  function with support $[-A,A]$.

  Let $h_n^*$ be defined as above for some constant $K$ and let $\overline h_n$
  be a bounded sequence $\overline h_n\ge h_n^*$. Then, if $K$ is large enough,
  (\ref{Tn_bound_eq}) will hold for
  $\tilde{\mathbb{T}}_n(h)=\frac{1}{\sqrt{h}}|\mathbb{G}_n\psi(W_i,h)k(X_i/h)|$.
  Thus, under Assumption~\ref{inf_func_assump}, (\ref{Tn_bound_eq}) will hold
  for
  $\tilde{\mathbb{T}}_n(h)=\sqrt{nh}|\hat\theta(h)-\theta(h)|/\hat\sigma(h)$.
\end{lemma}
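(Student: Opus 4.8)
The plan is to control the supremum of $\tilde{\mathbb{T}}_n$ over the \emph{upper} band $[h_n^*,\overline h_n]$ with the law-of-the-iterated-logarithm bound of Lemma \ref{lil_rate_lemma}, exploiting the fact that the choice of $h_n^*$ in \eqref{hnstar_def_eq} makes $\log\log {h_n^*}^{-1}$ equal to a $1/K$ fraction of $\log\log(\overline h_n/\underline h_n)$; the normalized supremum will then be of order $1/\sqrt{K}$, hence below $1-\varepsilon$ once $K$ is large. First I would observe that, since $E\psi(W_i,h)k(X_i/h)=0$ by Assumption \ref{inf_func_assump}, we have the exact identity
\begin{equation*}
\tilde{\mathbb{T}}_n(h)=\frac{1}{\sqrt{h}}\mathbb{G}_n\psi(W_i,h)k(X_i/h)=\frac{1}{\sqrt{nh}}\sum_{i=1}^n\psi(W_i,h)k(X_i/h).
\end{equation*}
The maintained hypotheses of the present lemma (boundedness of $\psi(w,h)k(x/h)$, polynomial uniform covering numbers for $w\mapsto\psi(w,h)$ and $x\mapsto k(x/h)$, a bounded density for $X_i$, and a bounded compactly supported kernel) are precisely those required to apply Lemma \ref{lil_rate_lemma} with $f(W_i,h,t)=\psi(W_i,h)$, yielding a constant $C$ for which, with probability approaching one, $|\mathbb{G}_n\psi(W_i,h)k(X_i/h)|\le C(\log\log h^{-1})^{1/2}h^{1/2}$ for every $h$ in the range of that lemma. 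Under the rate condition $n\underline h_n/[(\log\log n)(\log\log\log n)]^2\to\infty$ of Theorem \ref{highlevel_asym_dist_thm}, together with $h_n^*/\underline h_n\to\infty$, the endpoint $h_n^*$ exceeds the lower endpoint of that range, so dividing by $\sqrt{h}$ and using that $h\mapsto\log\log h^{-1}$ is decreasing gives, with probability approaching one,
\begin{equation*}
\sup_{h_n^*\le h\le\overline h_n}\tilde{\mathbb{T}}_n(h)\le C\bigl(\log\log {h_n^*}^{-1}\bigr)^{1/2}.
\end{equation*}

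Next I would substitute the two identities recorded in the discussion preceding the lemma. From \eqref{hnstar_def_eq}, $\log\log {h_n^*}^{-1}=(1/K)\log\log\underline h_n^{-1}$, and $\log\log(\overline h_n/\underline h_n)=\log\log\underline h_n^{-1}+o(1)$. These give
\begin{equation*}
\frac{\sup_{h_n^*\le h\le\overline h_n}\tilde{\mathbb{T}}_n(h)}{\sqrt{2\log\log(\overline h_n/\underline h_n)}}
\le\frac{C\,(\log\log\underline h_n^{-1})^{1/2}/\sqrt{K}}{\sqrt{2\log\log\underline h_n^{-1}}\,(1+o(1))}
=\frac{C}{\sqrt{2K}}\,(1+o(1)).
\end{equation*}
Choosing $K>C^2/[2(1-\varepsilon)^2]$ makes the right-hand side strictly below $1-\varepsilon$ for $n$ large, which is exactly \eqref{Tn_bound_eq} for $\tilde{\mathbb{T}}_n(h)=\frac{1}{\sqrt{h}}\mathbb{G}_n\psi(W_i,h)k(X_i/h)$.

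To obtain the final claim for the $t$-statistic, I would invoke the influence-function representation of Assumption \ref{inf_func_assump}, which writes $\sqrt{nh}(\hat\theta(h)-\theta(h))/\hat\sigma(h)=\tilde{\mathbb{T}}_n(h)+R_n(h)$ with $\sup_{h}|R_n(h)|=o_P\bigl(1/\sqrt{\log\log(\overline h_n/\underline h_n)}\bigr)$. Dividing through by $\sqrt{2\log\log(\overline h_n/\underline h_n)}$ turns the remainder into $o_P\bigl(1/\log\log(\overline h_n/\underline h_n)\bigr)=o_P(1)$, so it is absorbed and \eqref{Tn_bound_eq} continues to hold for $\sqrt{nh}(\hat\theta(h)-\theta(h))/\hat\sigma(h)$ (with, say, $1-\varepsilon$ relaxed to $1-\varepsilon/2$).

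The one point requiring care, and the main obstacle, is the order of quantifiers: the $1/\sqrt{K}$ gain is genuine only if the constant $C$ is fixed \emph{before} $K$ is chosen. This is ensured because $C$ comes from applying Lemma \ref{lil_rate_lemma} on its fixed lower-truncated range, whose constant depends only on the covering numbers, the support radius $A$, the density bound $\overline f_X$, and the uniform bound on $\psi(w,h)k(x/h)$, none of which involve the exponent $K$ in the definition of $h_n^*$; only after $C$ is in hand does one restrict to $[h_n^*,\overline h_n]$ and enlarge $K$. A secondary detail is that, should $\overline h_n$ not be small enough for $\log\log h^{-1}$ to stay positive near the top of the band, one splits the band at a fixed small $\delta$, controls $[\delta,\overline h_n]$ by a standard $\mathcal{O}_P(1)$ empirical-process bound (negligible after dividing by $\sqrt{2\log\log(\overline h_n/\underline h_n)}\to\infty$), and applies the argument above on $[h_n^*,\delta]$.
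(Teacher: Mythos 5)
Your overall strategy---split the bandwidth range, use a law-of-the-iterated-logarithm envelope on the lower part and a crude bound on the upper part, exploit the identity $\log\log {h_n^*}^{-1}=(1/K)\log\log \underline h_n^{-1}$, then take $K$ large---is the same as the paper's. But there is a genuine gap in how you invoke Lemma \ref{lil_rate_lemma}. That lemma is an $\mathcal{O}_P(1)$ statement: for every confidence level $1-\eta$ there is a constant $C_\eta$ with $\limsup_n P(\sup>C_\eta)\le\eta$. It does \emph{not} yield ``a constant $C$ for which, with probability approaching one, $|\mathbb{G}_n\psi(W_i,h)k(X_i/h)|\le C(\log\log h^{-1})^{1/2}h^{1/2}$ for every $h$''; tightness never produces a single level-independent constant valid with probability tending to one (consider $X_n=|Z|$ for all $n$). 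Consequently your choice $K>C^2/[2(1-\varepsilon)^2]$ is really $K>C_\eta^2/[2(1-\varepsilon)^2]$: for that fixed $K$ you obtain $\limsup_n P\bigl(\text{normalized sup}>1-\varepsilon\bigr)\le\eta$ for one particular $\eta$, not the ``probability approaching one'' that (\ref{Tn_bound_eq}) requires. You cannot then send $\eta\to 0$, because that inflates $C_\eta$ and forces $K$ to grow, whereas the lemma must deliver the conclusion for a fixed $K$. Your ``point requiring care'' paragraph addresses a different and easier issue---that $C$ does not depend on $K$---and misses this one.

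The paper's own proof avoids the problem by working with Lemma \ref{lil_bound_lemma} rather than Lemma \ref{lil_rate_lemma}. That lemma supplies a \emph{level-independent} envelope constant $C$ together with the bound that the probability of exceeding the envelope anywhere in $[\underline h_n,\tilde h]$ is at most $C\sum_{(2\tilde h)^{-1}\le k\le\infty}k^{-2}$, a quantity that is small \emph{uniformly in $n$} once $\tilde h$ is small. So for any $\delta>0$ one picks $\tilde h_\delta$ making this probability at most $\delta$, bounds the remaining range $[\tilde h_\delta,\overline h_n]$ by a $\delta$-dependent constant $\tilde C_\delta$, and observes that $\tilde C_\delta$ enters only additively and is absorbed by the growing term $C\sqrt{(2/K)\log\log\underline h_n^{-1}}$. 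The final bound $C\sqrt{3/(2K)}$ involves only the universal $C$, so $K$ can be fixed before $\delta$, and letting $\delta\to0$ afterwards gives probability approaching one. In short, your split at a fixed $\delta$ is not a ``secondary detail'' needed to keep $\log\log h^{-1}$ positive: choosing the split point so as to shrink the exceedance probability of a level-free envelope is the crux of the argument, and it requires the probability-bound form of the LIL estimate (Lemma \ref{lil_bound_lemma}), not its $\mathcal{O}_P(1)$ corollary.
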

\begin{proof}
Let $C$ be such that, for any $\tilde h$,
\begin{align*}
P\left(\sup_{\underline h_n\le h\le \tilde h}\frac{1}{\sqrt{\log\log h^{-1}}\sqrt{h}}|\mathbb{G}_n\psi(W_i,h)k(X_i/h)|
  > C\right)
\le C \sum_{(2\tilde h)^{-1}\le k\le \infty} k^{-2}
\end{align*}
(this can be done by Lemma~\ref{lil_bound_lemma}).
Given $\delta>0$, let $\tilde h_{\delta}$ be such that the right hand side of this display is less than $\delta$, and let $\tilde C_\delta$ be such that
$\sup_{\tilde h_\delta\le h\le \overline h_n}\frac{1}{\sqrt{h}}|\mathbb{G}_n\psi(W_i,h)k(X_i/h)|\le \tilde C_{\delta}$ with probability at least $1-\delta$.
Then, with probability at least $1-2\delta$,
\begin{align*}
&\sup_{h_n^*\le h\le \overline h_n}\frac{1}{\sqrt{h}}|\mathbb{G}_n\psi(W_i,h)k(X_i/h)|  \\
&\le \max\left\{\sqrt{2\log\log {h_n^*}^{-1}}
  \sup_{h_n^*\le h\le \tilde h_\delta}\frac{|\mathbb{G}_n\psi(W_i,h)k(X_i/h)|}{\sqrt{\log\log h^{-1}}\sqrt{h}},
\sup_{\tilde h_\delta\le h\le \overline h_n}\frac{1}{\sqrt{h}}|\mathbb{G}_n\psi(W_i,h)k(X_i/h)|\right\}
  \\
&\le C \cdot \sqrt{2\log\log {h_n^*}^{-1}}+\tilde C_\delta
=C \cdot \sqrt{(2/K)\log\log \underline h_n^{-1}}+\tilde C_{\delta}
\le C \cdot \sqrt{(3/K)\log\log \underline h_n^{-1}}
\end{align*}
for large enough $n$.
Since $\delta$ was arbitrary, it follows that
$\frac{\sup_{h_n^*\le h\le \overline h_n}\frac{1}{\sqrt{h}}|\mathbb{G}_n\psi(W_i,h)k(X_i/h)|}{\sqrt{2\log\log \underline h_n^{-1}}}
\le C\sqrt{3/(2K)}$ with probability approaching one.  Since this can be made less than $1-\varepsilon$ by making $K$ large
(and since $\limsup_n \sqrt{2\log\log \underline h_n^{-1}}/\sqrt{2\log\log (\overline h_n/\underline h_n)}\le 1$), the result follows.
\end{proof}

We now show that~\eqref{lln_equiv_general} holds for
$\mathbb{T}_n(h)=\frac{1}{\sqrt{nh}}|\sum_{i=1}^n\tilde Y_{i}k(X_i/h)|$ and
$\tilde{\mathbb{T}}_n(h)=\sqrt{nh}|\hat\theta(h)-\theta(h)|/\hat\sigma(h)$. By
Assumption~\ref{inf_func_assump}, it suffices to show this for
$\tilde{\mathbb{T}}_n(h)=\frac{1}{\sqrt{h}}|\mathbb{G}_n\psi(W_i,h)k(X_i/h)|$. To
this end, we first prove a general result where $\mathbb{T}_n(h)$ and
$\tilde{\mathbb{T}}_n(h)$ are given by
$\frac{1}{\sqrt{nh}}|\sum_{i=1}^n\psi(W_i,h)k(X_i/h)|$ and
$\frac{1}{\sqrt{nh}}|\sum_{i=1}^n\tilde\psi(W_i,h)k(X_i/h)|$, and then verify
these conditions for $\tilde \psi(W_i,h)$ given by $\tilde Y_i$.

\begin{lemma}\label{psi_equiv_lemma}
  Suppose that the conditions of Lemma~\ref{hstar_bound_lemma} hold as stated
  and with $\psi$ replaced by $\tilde\psi$. If
  $|[\tilde{\psi}(W_i,h)-\psi(W_i,h)]k(X_i/h)|\le \ell(h)$ for some function
  $\ell(h)$ with $\lim_{h\to 0} \ell(h)\log\log h^{-1}= 0$. Then, for $h_n^*$
  given in~\eqref{hnstar_def_eq},
\begin{align*}
\sqrt{\log\log (h_n^*/\underline h_n)}\sup_{\underline h_n\le h\le\overline h_n^*}\left|\frac{1}{\sqrt{h}}\mathbb{G}_n\psi(W_i,h)k(X_i/h)-\frac{1}{\sqrt{h}}\mathbb{G}_n\tilde\psi(W_i,h)k(X_i/h)\right|
\stackrel{p}{\to} 0.
\end{align*}
\end{lemma}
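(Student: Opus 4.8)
The plan is to write the difference of the two scaled empirical processes as a single empirical process indexed by $h$, and then apply the modulus-of-continuity extension of Lemma \ref{lil_rate_lemma} given just below \eqref{mod_cont_assump}. Set $g(W_i,h)=[\psi(W_i,h)-\tilde\psi(W_i,h)]k(X_i/h)$, so that the quantity inside the supremum is exactly $\frac{1}{\sqrt h}\abs{\mathbb{G}_n g(W_i,h)}$. The hypothesis $\abs{[\tilde\psi(W_i,h)-\psi(W_i,h)]k(X_i/h)}\le\ell(h)$ is precisely the bound \eqref{mod_cont_assump} for this $g$, so the key probabilistic input is already available; what remains is to confirm the covering-number condition and then verify that the rate produced by the extension, once multiplied by $\sqrt{\log\log(h_n^*/\underline h_n)}$, vanishes.

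First I would verify that the class $\{(x,w)\mapsto[\psi(w,h)-\tilde\psi(w,h)]k(x/h)\mid \underline h_n\le h\le h_n^*\}$ is contained in a class with polynomial uniform covering number. This follows because $w\mapsto\psi(w,h)$ and $w\mapsto\tilde\psi(w,h)$, together with $x\mapsto k(x/h)$, have polynomial uniform covering numbers by the hypotheses of Lemma \ref{hstar_bound_lemma} applied to each of $\psi$ and $\tilde\psi$, and such covering numbers are preserved under differences and products of uniformly bounded classes. With this in hand, the extension of Lemma \ref{lil_rate_lemma} gives
\begin{equation*}
\sup_{\underline h_n\le h\le h_n^*}\frac{\abs{\mathbb{G}_n g(W_i,h)}}{h^{1/2}}
=\mathcal{O}_P\left(\sup_{\underline h_n\le h\le h_n^*}(\log\log h^{-1})^{1/2}\ell(h)\right).
\end{equation*}

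It then remains to bound the right-hand rate. Here I would use the two facts established in the discussion of $h_n^*=\exp[-(\log\underline h_n^{-1})^{1/K}]$: namely $\log\log{h_n^*}^{-1}=(1/K)\log\log\underline h_n^{-1}$ and $\log\log(h_n^*/\underline h_n)=\log\log\underline h_n^{-1}+o(1)$. Rather than assuming $(\log\log h^{-1})^{1/2}\ell(h)$ is monotone, I would write $\ell(h)=[\ell(h)\log\log h^{-1}]/\log\log h^{-1}$ and exploit that, by hypothesis, $\ell(h)\log\log h^{-1}$ can be made smaller than any $\epsilon>0$ once $h$ is small enough, hence uniformly over $h\in[\underline h_n,h_n^*]$ for $n$ large (since $h_n^*\to0$). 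Combined with $\log\log h^{-1}\ge\log\log{h_n^*}^{-1}=(1/K)\log\log\underline h_n^{-1}$ and $\log\log h^{-1}\le\log\log\underline h_n^{-1}$ on this range, this would yield
\begin{equation*}
\sup_{\underline h_n\le h\le h_n^*}(\log\log h^{-1})^{1/2}\ell(h)
\le \frac{K\epsilon}{(\log\log\underline h_n^{-1})^{1/2}}
\end{equation*}
for $n$ large. Multiplying by $\sqrt{\log\log(h_n^*/\underline h_n)}=(\log\log\underline h_n^{-1})^{1/2}(1+o(1))$ bounds the target quantity by $K\epsilon(1+o(1))$; since $\epsilon$ is arbitrary and $K$ is fixed, the expression converges to zero in probability, which is the claim.

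The main obstacle I anticipate is not the probabilistic content — that is delivered wholesale by the extension of Lemma \ref{lil_rate_lemma} — but the bookkeeping in the final step: the rate bound carries an extra factor $(\log\log h^{-1})^{1/2}$, and one must check that the gain from $\ell(h)\log\log h^{-1}\to0$ together with the deliberate choice $\log\log{h_n^*}^{-1}=(1/K)\log\log\underline h_n^{-1}$ is enough to absorb both this factor and the scaling $\sqrt{\log\log(h_n^*/\underline h_n)}$. The non-monotonicity of $(\log\log h^{-1})^{1/2}\ell(h)$ is the subtle point, and I would handle it by the numerator/denominator split above rather than by invoking the monotone form of the rate bound.
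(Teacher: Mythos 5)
Your proposal is correct and follows essentially the same route as the paper's proof: both apply the modulus-of-continuity extension of Lemma \ref{lil_rate_lemma} to $[\tilde\psi(W_i,h)-\psi(W_i,h)]k(X_i/h)$ (i.e., apply the lemma to this function divided by $\ell(h)$) to get the rate $\mathcal{O}_P\bigl(\sup_{\underline h_n\le h\le h_n^*}\ell(h)\sqrt{\log\log h^{-1}}\bigr)$, and both then exploit the deliberate choice of $h_n^*$ in \eqref{hnstar_def_eq}, via $\log\log {h_n^*}^{-1}=(1/K)\log\log \underline h_n^{-1}$ and $\log\log (h_n^*/\underline h_n)=\log\log \underline h_n^{-1}+o(1)$, to absorb the scaling $\sqrt{\log\log (h_n^*/\underline h_n)}$. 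The only divergence is the final bookkeeping step: the paper assumes without loss of generality that $\ell(h)\log\log h^{-1}$ is nondecreasing, so that the supremum of the rate is attained at $h_n^*$ and equals $\ell(h_n^*)\log\log {h_n^*}^{-1}\cdot(\log\log {h_n^*}^{-1})^{-1/2}$, whereas you avoid that monotonicity reduction entirely by an $\epsilon$-argument combined with the lower bound $\log\log h^{-1}\ge (1/K)\log\log \underline h_n^{-1}$ on the range $[\underline h_n,h_n^*]$; both arguments are valid (your constant $K\epsilon$ can in fact be sharpened to $\sqrt{K}\epsilon$, which is immaterial), and your version sidesteps a point the paper leaves implicit, namely that $\ell$ may be replaced by a majorant making $\ell(h)\log\log h^{-1}$ monotone without affecting the hypotheses.
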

\begin{proof}
  By Lemma~\ref{lil_rate_lemma} applied to
  $[\tilde{\psi}(W_i,h)-\psi(W_i,h)]k(X_i/h)/\ell(h)$, we have
\begin{align*}
\sup_{\underline h_n\le h\le h_n^*}\left|\frac{1}{\sqrt{h}}\mathbb{G}_n\psi(W_i,h)k(X_i/h)-\frac{1}{\sqrt{h}}\mathbb{G}_n\tilde\psi(W_i,h)k(X_i/h)\right|
=\mathcal{O}_P\left(\sup_{\underline h_n\le h\le  h_n^*}\ell(h)\sqrt{\log\log h^{-1}}\right).
\end{align*}
Since $\lim_{h\to 0} \ell(h)\log\log h^{-1}= 0$, we can assume without loss of generality that $\ell(h)\log\log h^{-1}$ is nondecreasing and that, therefore,
$\ell(h)\sqrt{\log\log h^{-1}}$ is nondecreasing.  Thus,
\begin{multline*}
\sqrt{\log\log (h_n^*/\underline h_n)}\sup_{\underline h_n\le h\le\overline h_n^*}\left|\frac{1}{\sqrt{h}}\mathbb{G}_n\psi(W_i,h)k(X_i/h)-\frac{1}{\sqrt{h}}\mathbb{G}_n\tilde\psi(W_i,h)k(X_i/h)\right|  \\
=\mathcal{O}_P\left(\ell(h_n^*)\sqrt{\log\log {h_n^*}^{-1}}\sqrt{\log\log (h_n^*/\underline h_n)}\right)  \\
=\mathcal{O}_P\left(\ell(h_n^*)\log\log {h_n^*}^{-1}\frac{\sqrt{\log\log (h_n^*/\underline h_n)}}{\sqrt{\log\log {h_n^*}^{-1}}}\right).
\end{multline*}
The result follows since $\ell(h_n^*)\log\log {h_n^*}^{-1}\to 0$ and
\begin{equation*}
  \frac{\sqrt{\log\log (h_n^*/\underline h_n)}}{\sqrt{\log\log {h_n^*}^{-1}}}
  \le \frac{\sqrt{\log\log \underline h_n^{-1}}}{\sqrt{\log\log {h_n^*}^{-1}}}
  =\frac{\sqrt{\log\log \underline h_n^{-1}}}{\sqrt{(1/K)\log\log
      \underline h_n^{-1}}}
  =\sqrt{K}.
\end{equation*}
\end{proof}

We now show that the conditions of Lemma~\ref{psi_equiv_lemma} hold for
$\tilde\psi(W_i,h)$ given by $\tilde Y_i$ under the conditions of
Theorem~\ref{highlevel_asym_dist_thm}.

\begin{lemma}\label{ytilde_mod_cont_lemma}
Under the conditions of Theorem~\ref{highlevel_asym_dist_thm},
$|[\psi(W_i,h)-\tilde Y_i]k(X_i/h)|\le \ell(h)$ for some function $\ell(h)$ with
$\lim_{h\to 0} \ell(h)\log\log h^{-1}=0$.
\end{lemma}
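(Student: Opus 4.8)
The plan is to bound $[\psi(W_i,h)-\tilde Y_i]k(X_i/h)$ by exploiting that the kernel confines attention to $|X_i|\le Ah$, where every conditional quantity is close to its value at $0$. Writing $m(t)=E[\psi(W_i,0)\mid |X_i|=t]$, $v(t)=var[\psi(W_i,0)\mid |X_i|=t]$ and $D(t)=\sqrt{v(t)f_{|X|}(t)\int_0^\infty k(u)^2\,du}$, so that $\tilde Y_i$ of \eqref{tilde_yi_eq} equals $[\psi(W_i,0)-m(|X_i|)]/D(|X_i|)$, I would first split
\[
[\psi(W_i,h)-\tilde Y_i]k(X_i/h)=[\psi(W_i,h)-\psi(W_i,0)]k(X_i/h)+[\psi(W_i,0)-\tilde Y_i]k(X_i/h).
\]
The first term is bounded by $\ell(h)$ directly from the third displayed inequality in Assumption \ref{dgp_kern_assump}(ii) (taken at $t=h$). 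For the second term I would use the algebraic identity
\[
[\psi(W_i,0)-\tilde Y_i]k(X_i/h)=\psi(W_i,0)k(X_i/h)\,\frac{D(|X_i|)-1}{D(|X_i|)}+\frac{m(|X_i|)}{D(|X_i|)}k(X_i/h),
\]
so the whole problem reduces to showing that the recentering $m(|X_i|)$ and the rescaling $D(|X_i|)-1$ are small on the support of the kernel.

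The crux is the pair of normalization identities $m(0)=0$ and $D(0)=1$, which is where the conditions $E\psi(W_i,h)k(X_i/h)=0$ and $\tfrac1h var(\psi(W_i,h)k(X_i/h))=1$ of Assumption \ref{inf_func_assump} enter. I would establish these by the change of variables $x=hu$: the mean restriction gives $\int E[\psi(W_i,h)\mid X_i=hu]k(u)f_X(hu)\,du=0$, and letting $h\to 0$ (by dominated convergence, since $\psi k$ is bounded, together with $\psi(W_i,h)\to\psi(W_i,0)$ from the third modulus bound and continuity of the conditional mean) with symmetry of $k$, $\int k\ne 0$ and $f_{|X|}(0)=2f_X(0)>0$ yields $m(0)=\tfrac12\{E[\psi(W_i,0)\mid X_i=0_+]+E[\psi(W_i,0)\mid X_i=0_-]\}=0$. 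The same rescaling of the variance restriction gives $\int E[\psi(W_i,h)^2\mid X_i=hu]k(u)^2 f_X(hu)\,du=1$, and passing to the limit produces $f_{|X|}(0)\bigl(\int_0^\infty k^2\bigr)E[\psi(W_i,0)^2\mid |X_i|=0]=1$; since $m(0)=0$ implies $E[\psi(W_i,0)^2\mid|X_i|=0]=v(0)$, this is exactly $D(0)^2=1$. These identities are precisely what makes $\tilde Y_i$ a faithful unscaling of $\psi(W_i,0)$ near the boundary.

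With the identities in hand the bounds are routine. On $\{k(X_i/h)\ne0\}=\{|X_i|\le Ah\}$ I would bound $|m(|X_i|)|=|m(|X_i|)-m(0)|\le\ell(|X_i|)$ by the first modulus bound, and control $|D(|X_i|)-1|=|D(|X_i|)-D(0)|$ through $|D(t)^2-1|\le\bigl(|v(t)-v(0)|f_{|X|}(t)+v(0)|f_{|X|}(t)-f_{|X|}(0)|\bigr)\int_0^\infty k^2$, using the second modulus bound for $v$ and the continuity modulus $\omega_f$ of $f_{|X|}$; since $v(0)>0$ and $f_{|X|}(0)>0$, $D$ is continuous and bounded away from zero near $0$, so $|D(t)-1|$ inherits the same modulus. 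Because $\psi(W_i,0)k(X_i/h)=\psi(W_i,h)k(X_i/h)-[\psi(W_i,h)-\psi(W_i,0)]k(X_i/h)$ is bounded, the two pieces of the second term are $O(\ell(Ah)+\omega_f(Ah))$ and $O(\ell(Ah))$. Collecting everything, $|[\psi(W_i,h)-\tilde Y_i]k(X_i/h)|\le\tilde\ell(h)$ with $\tilde\ell(h)=\ell(h)+C[\ell(Ah)+\omega_f(Ah)]$; since $A$ and $C$ are fixed, $\log\log(Ah)^{-1}\sim\log\log h^{-1}$, so $\tilde\ell(h)\log\log h^{-1}\to0$ follows from the rate in Assumption \ref{dgp_kern_assump}(ii).

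The main obstacle is not estimation-theoretic but the two bookkeeping points just flagged. First, the identities $m(0)=0$, $D(0)=1$ must be extracted cleanly from the normalizations, and the passage from conditioning on $X_i$ to conditioning on $|X_i|$ must be handled by the symmetrization that splits the $u$-integral into its positive and negative halves; this is the only place symmetry of $k$ and $f_{|X|}(0)=2f_X(0)$ are used. Second, the factor $f_{|X|}(|X_i|)$ in $\tilde Y_i$ forces a modulus bound on the density, so I would take $\ell$ in the statement to dominate $\omega_f$ as well—this is guaranteed in each application, where the density is assumed to have the local modulus $\ell$ (see the hypotheses of Theorems \ref{reg_disc_thm} and \ref{late_thm})—so that all contributions share the common rate $\ell(h)\log\log h^{-1}\to0$.
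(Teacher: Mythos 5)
Your proof is correct and takes essentially the same route as the paper's: the same splitting into the $\psi$-difference, recentering, and rescaling terms (your $D$ is the paper's $1/a$), the same key normalization identities $m(0)=0$ and $D(0)=1$ obtained by passing to the limit $h\to 0$ in the two constraints of Assumption \ref{inf_func_assump} (this is the paper's Lemma \ref{a0_mu0_lemma}), and the same use of the modulus bounds of Assumption \ref{dgp_kern_assump} on the kernel's support $\{|X_i|\le Ah\}$. Your explicit flagging that a local modulus bound on the density $f_{|X|}$ is also required---it is not literally among the three expressions listed in Assumption \ref{dgp_kern_assump}(ii), though it is assumed in the applications (Theorems \ref{reg_disc_thm} and \ref{late_thm})---is if anything more careful than the paper's own proof, which invokes the assumption for this step without comment.
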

\begin{proof}
  Let $\tilde\sigma^2(x)=var[\psi(W_i,0)||X_i|=x]$,
  $a(x)=[\tilde\sigma^2(x)f_{|X|}(x)\int_{0}^{\infty} k(u)^2\, du]^{-1/2}$, and
  $\tilde \mu(x)=E[\psi(W_i,0)||X_i|=x]$. We have
\begin{align*}
&[\psi(W_i,h)-\tilde Y_i]k(X_i/h)  \\
&=[\psi(W_i,h)-\psi(W_i,0)]k(X_i/h)
+\left\{\psi(W_i,0)-a(|X_i|)\left[\psi(W_i,0)-\tilde\mu(|X_i|)\right]\right\}k(X_i/h)  \\
&=[\psi(W_i,h)-\psi(W_i,0)]k(X_i/h)
+\psi(W_i,0)[1-a(|X_i|)]k(X_i/h)
+a(|X_i|)\tilde\mu(|X_i|)k(X_i/h)
\end{align*}
The first term is bounded by a function $\ell(h)$ with $\lim_{h\to 0} \ell(h)\log\log h^{-1}=0$ by assumption.

The second term is bounded by a constant times $\sup_{0\le x\le Ah} |1-a(x)|$,
and the last term is bounded by a constant times $\sup_{0\le x\le Ah}
|\tilde\mu(x)|$ once $a(x)$ is shown to be bounded. To deal with these terms,
note that $a(0)=1$ and $\tilde\mu(0)=0$ by construction (this is shown below in
Lemma~\ref{a0_mu0_lemma}). Thus,
\begin{align*}
&\sup_{0\le x\le Ah} |1-a(x)|
=\sup_{0\le x\le Ah} |a(0)-a(x)|  \\
&=\left[\int_{0}^{\infty} k(u)^2\, du\right]^{-1/2}\sup_{0\le x\le Ah} \left|[\tilde\sigma^2(0)f_{|X|}(0)]^{-1/2}
-[\tilde\sigma^2(x)f_{|X|}(x)]^{-1/2}\right|.
\end{align*}
By continuous differentiability of $(s,t)\mapsto (st)^{-1/2}$ at
$s=\tilde\sigma^2(0)$ and $t=f_{|X|}(0)$ along with
Assumption~\ref{dgp_kern_assump}, this is bounded by a constant times
$\sup_{0\le x\le Ah}\ell(x)$ for a function $\ell(h)$ with
$\ell(h)\log\log h^{-1}\to 0$ as $h\to 0$. Since
$[\log \log h^{-1}]\sup_{0\le x\le Ah}\ell(x) \le \sup_{0\le x\le Ah}[\log \log
x^{-1}]\ell(x)$, this bound satisfies the required conditions. The last term is
bounded by a constant times $\sup_{0\le x\le Ah} |\tilde\mu(x)-\tilde\mu(0)|$,
and this term is bounded by a function $\ell(h)$ with
$\ell(h)\log\log h^{-1}\to 0$ as $h\to 0$ by assumption.
\end{proof}

The following lemma is used in the proof of Lemma~\ref{ytilde_mod_cont_lemma}.

\begin{lemma}\label{a0_mu0_lemma}
Under the conditions of Theorem~\ref{highlevel_asym_dist_thm},
$a(0)=1$ and $\tilde \mu(0)=0$,
where $a(x)$ and $\tilde \mu(x)$ are defined in Lemma~\ref{ytilde_mod_cont_lemma}.
\end{lemma}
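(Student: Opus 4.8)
The plan is to read off both identities directly from the two normalizations imposed in Assumption~\ref{inf_func_assump}, namely $E\psi(W_i,h)k(X_i/h)=0$ and $\frac1h\,var(\psi(W_i,h)k(X_i/h))=1$, by dividing each by $h$, letting $h\to 0$, and identifying the resulting limits with the conditional moments at $\abs{X_i}=0$ that appear in the definitions of $a(x)$ and $\tilde\mu(x)$ from Lemma~\ref{ytilde_mod_cont_lemma}.

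For $\tilde\mu(0)=0$, I would first use symmetry of $k$ to write $k(X_i/h)=k(\abs{X_i}/h)$ and condition on $\abs{X_i}$, so that $\frac1h E[\psi(W_i,0)k(X_i/h)]=\int_0^\infty \tilde\mu(t)k(t/h)f_{\abs{X}}(t)\,\frac{dt}{h}$; the substitution $t=uh$ turns this into $\int_0^\infty \tilde\mu(uh)k(u)f_{\abs{X}}(uh)\,du$. Since the integrand is supported on $u\in[0,A]$ and uniformly bounded, bounded convergence together with continuity of $\tilde\mu$ and $f_{\abs{X}}$ at $0$ (Assumption~\ref{dgp_kern_assump}) sends it to $\tilde\mu(0)f_{\abs{X}}(0)\int_0^\infty k(u)\,du$. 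I would then replace $\psi(W_i,0)$ by $\psi(W_i,h)$ at the cost of $\frac1h E[(\psi(W_i,h)-\psi(W_i,0))k(X_i/h)]$; because $\abs{(\psi(W_i,h)-\psi(W_i,0))k(X_i/h)}\le \ell(h)$ and this product vanishes off $\{\abs{X_i}\le Ah\}$, a set of probability $O(h)$, the term is $O(\ell(h))=o(1)$. As $\frac1h E[\psi(W_i,h)k(X_i/h)]=0$ identically, the limit gives $\tilde\mu(0)f_{\abs{X}}(0)\int_0^\infty k(u)\,du=0$, and since $f_{\abs{X}}(0)>0$ and $\int_0^\infty k(u)\,du=\tfrac12\int k(u)\,du\ne 0$ (symmetry plus $\int k\ne 0$), I conclude $\tilde\mu(0)=0$.

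For $a(0)=1$ it suffices to show $\tilde\sigma^2(0)f_{\abs{X}}(0)\int_0^\infty k(u)^2\,du=1$. Because the mean is zero, $var(\psi(W_i,h)k(X_i/h))=E[\psi(W_i,h)^2 k(X_i/h)^2]$, so the second normalization reads $\frac1h E[\psi(W_i,h)^2 k(X_i/h)^2]=1$. I would expand $\psi(W_i,h)=\psi(W_i,0)+(\psi(W_i,h)-\psi(W_i,0))$ inside the square; the cross and remainder terms are controlled by the same $\ell(h)$ bound (using boundedness of $\psi(W_i,0)k(X_i/h)$, which follows from Assumption~\ref{dgp_kern_assump} and the modulus bound) and the $O(h)$ support restriction, hence each is $O(\ell(h))$ after scaling by $1/h$. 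The leading term $\frac1h E[\psi(W_i,0)^2 k(X_i/h)^2]$ is treated exactly as above and converges to $E[\psi(W_i,0)^2\mid \abs{X_i}=0]\,f_{\abs{X}}(0)\int_0^\infty k(u)^2\,du$. Finally $E[\psi(W_i,0)^2\mid\abs{X_i}=0]=\tilde\sigma^2(0)+\tilde\mu(0)^2=\tilde\sigma^2(0)$ by the first part, which delivers the identity and hence $a(0)=1$.

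The main obstacle will be justifying the replacement of $\psi(W_i,h)$ by $\psi(W_i,0)$ after dividing by $h$, since the raw bound $\ell(h)$ does not by itself survive the $1/h$ scaling. The key observation resolving this is that the factor $k(X_i/h)$ confines every error term to $\{\abs{X_i}\le Ah\}$, contributing a compensating $P(\abs{X_i}\le Ah)=O(h)$; combined with $\ell(h)\log\log h^{-1}\to 0$ (so in particular $\ell(h)\to 0$) from Assumption~\ref{dgp_kern_assump}, this renders all remainders $o(1)$, leaving only the conditional moments at $\abs{X_i}=0$.
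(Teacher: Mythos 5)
Your proof is correct and follows essentially the same route as the paper's: both take the two normalizations in Assumption \ref{inf_func_assump}, divide by $h$, let $h\to 0$, and kill the $\psi(W_i,h)-\psi(W_i,0)$ remainder terms by combining the $\ell(h)$ modulus bound with the fact that $k(X_i/h)$ confines everything to $\{\abs{X_i}\le Ah\}$, whose probability is $O(h)$. The only cosmetic difference is in the variance step, where you use $E\psi(W_i,h)k(X_i/h)=0$ to rewrite the variance as the raw second moment $\frac{1}{h}E[\psi(W_i,h)^2k(X_i/h)^2]$ and expand the square, while the paper decomposes $\frac{1}{h}var(\psi(W_i,h)k(X_i/h))$ into variance, covariance, and law-of-total-variance pieces; both reduce to the same limiting quantities, so this is a presentational rather than substantive difference.
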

\begin{proof}
Note that
\begin{align*}
&0=\frac{1}{h}E\psi(W_i,h)k(X_i/h)
=\frac{1}{h}E\psi(W_i,0)k(X_i/h)+\frac{1}{h}E[\psi(W_i,h)-\psi(W_i,0)]k(X_i/h)  \\
&=\tilde \mu(0) \frac{1}{h}Ek(X_i/h)
+\frac{1}{h}E(\tilde \mu(X_i)-\tilde \mu(0))k(X_i/h)
+\frac{1}{h}E[\psi(W_i,h)-\psi(W_i,0)]k(X_i/h).
\end{align*}
As $h\to 0$,
$\frac{1}{h}Ek(X_i/h)
  \to f_{|X|}(0)\int_0^\infty k(u)\, du>0$,
$\frac{1}{h}E(\tilde\mu(x)-\tilde\mu(0))k(X_i/h)\to 0$ and
$\frac{1}{h}E[\psi(W_i,h)-\psi(W_i,0)]k(X_i/h)\to 0$, so taking limits in the above display shows that $\tilde\mu(0)=0$.  Similarly,
\begin{align*}
&1=\frac{1}{h}var(\psi(W_i,h)k(X_i/h))  \\
&=\frac{1}{h}var(\psi(W_i,0)k(X_i/h))
   +\frac{1}{h}var([\psi(W_i,h)-\psi(W_i,0)]k(X_i/h))  \\
&   +\frac{2}{h}cov([\psi(W_i,h)-\psi(W_i,0)]k(X_i/h),\psi(W_i,0)k(X_i/h)).
\end{align*}
As $h\to 0$, the last two terms converge to zero, since they are bounded by $\ell(h)$ or $\ell(h)^2$ times terms of the form $Ek(X_i/h)/h$ and $Ek(X_i/h)^2/h$.  The first term is
\begin{align*}
\frac{1}{h}
\int_{0}^\infty \tilde \sigma^2(x) k(x/h)^2f_{|X|}(x)\, dx
+\frac{1}{h}var(\mu(|X_i|)k(|X_i|/h)),
\end{align*}
which converges to $\tilde\sigma^2(0)f_{|X|}(0)\int_{0}^\infty k(u)^2\, du$ as $h\to 0$ (the last term is bounded by a constant times $\ell(h)^2$).  Thus,
$\tilde \sigma^2(0)=\left(f_{|X|}(0)\int_0^\infty k(u)^2\, du\right)^{-1}$ so that,
with $a(x)$ defined above, $a(0)=1$.
\end{proof}

\subsection{Gaussian Approximation}\label{gauss_approx_sec}

This section shows that
$\frac{1}{\sqrt{h}}\mathbb{G}_n\tilde Y_{i}k(X_i/h)
=\frac{1}{\sqrt{nh}}\sum_{i=1}^n\tilde Y_{i}k(X_i/h)$ is approximated by a
Gaussian process with the same covariance kernel. The proof of this result,
given in Supplemental Appendix~\ref{gauss_approx_sec_supp}, uses an application
of a Gaussian approximation theorem of \citet{sakhanenko_convergence_1985} along
with arguments similar to those in \citet{bickel_global_1973}. It is worth
noting that other Gaussian approximation results could be used, with potentially
different regularity conditions. For example, one could use results from
\citet{chernozhukov_gaussian_2014}, which would allow us to replace the
assumption of a bounded outcome variable with assumptions bounding the higher
moments, at the expense of stronger conditions on $\underline h_n$ (this would
also require additional truncation arguments elsewhere in the proofs).

We consider a general setup with $\{(\tilde X_i,\tilde Y_i)\}_{i=1}^n$ i.i.d.,
with $\tilde X_i\ge 0$ a.s.\ such that $\tilde X_i$ has a density
$f_{\tilde X}(x)$ on $[0,\overline x]$ for some $\overline x\ge 0$, with
$f_{\tilde X}(x)$ bounded away from zero and infinity on this set. We assume
that $\tilde Y_i$ is bounded almost surely, with $E(\tilde Y_i|\tilde X_i)=0$
and $var(\tilde{Y}_i|\tilde{X}_i=x)=f_{\tilde X}(x)^{-1}$. We assume that the
kernel function $k$ has finite support $[0,A]$ and is differentiable on its
support with bounded derivative. For ease of notation, we assume in this section
that $\int k(u)^2\, du=1$. The result applies to our setup with $\tilde Y_i$
given in (\ref{tilde_yi_eq}) and $\tilde X_i$ given by $|X_i|$.


Let
\begin{equation*}
\hat{\mathbb{H}}_n(h)=\frac{1}{\sqrt{nh}}\sum_{i=1}^n\tilde Y_{i} k(\tilde X_i/h).
\end{equation*}

\begin{theorem}\label{gauss_approx_thm}
Under the conditions above, there exists, for each $n$, a process $\mathbb{H}_n(h)$ such that, conditional on $(\tilde X_1,\ldots,\tilde X_n)$, $\mathbb{H}_n$ is a Gaussian process with covariance kernel
\begin{equation*}
cov\left(\mathbb{H}_n(h),\mathbb{H}_n(h')\right)
  =\frac{1}{\sqrt{h h'}}\int k(x/h)k(x/h')\, dx
\end{equation*}
and
\begin{equation*}
  \sup_{\underline h_n\le h\le \overline x/A} \left|\hat{\mathbb{H}}_n(h)-\mathbb{H}_n(h)\right|
  =\mathcal{O}_P\left(
    (n\underline h_n)^{-1/4}[\log (n \underline h_n)]^{1/2}\right)
\end{equation*}
for any sequence $\underline h_n$ with $n\underline h_n/ \log\log \underline h_n^{-1}\to\infty$.
\end{theorem}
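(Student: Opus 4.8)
The plan is to construct $\mathbb{H}_n$ on an enriched probability space through two successive Gaussian couplings, working throughout conditionally on the design $(\tilde X_1,\dots,\tilde X_n)$, in the spirit of \citet{bickel_global_1973}. Conditional on the design, $\hat{\mathbb{H}}_n(h)=\frac{1}{\sqrt{nh}}\sum_i\tilde Y_ik(\tilde X_i/h)$ is a weighted sum of the independent, mean-zero, bounded variables $\tilde Y_i$ with $\mathrm{var}(\tilde Y_i\mid\tilde X_i=x)=f_{\tilde X}(x)^{-1}$, so I would first replace the $\tilde Y_i$ by matched Gaussians (a strong-approximation step producing a process with the \emph{sample} covariance), and then replace that sample covariance by its deterministic limit $\Sigma(h,h')=\frac{1}{\sqrt{hh'}}\int k(x/h)k(x/h')\,dx$. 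Since $E[f_{\tilde X}(\tilde X_i)^{-1}k(\tilde X_i/h)k(\tilde X_i/h')]=\int k(x/h)k(x/h')\,dx$, the target $\Sigma$ is exactly the conditional covariance required in the statement, so both stages are genuinely needed.

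\textbf{Stage 1 (strong approximation).} Order the design as $\tilde X_{(1)}\le\cdots\le\tilde X_{(n)}$ with concomitants $\tilde Y_{[i]}$, which remain conditionally independent with $E(\tilde Y_{[i]}\mid\tilde X_{(i)})=0$ and $\mathrm{var}(\tilde Y_{[i]}\mid\tilde X_{(i)})=f_{\tilde X}(\tilde X_{(i)})^{-1}$. Applying the conditional form of \citet{sakhanenko_convergence_1985} \citep[see also][]{shao_strong_1995} to the partial sums $\sum_{i\le m}\tilde Y_{[i]}$, I would build independent $\xi_i\sim N(0,f_{\tilde X}(\tilde X_{(i)})^{-1})$ with $\max_m\abs{\sum_{i\le m}(\tilde Y_{[i]}-\xi_i)}$ controlled by a logarithm of the effective count. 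Because $x\mapsto k(x/h)$ has total variation bounded uniformly in $h$ (its derivative is $k'(\cdot/h)/h$, whose integral equals $\int_0^\infty\abs{k'(u)}\,du$), summation by parts turns this partial-sum bound into a bound on $\frac{1}{\sqrt{nh}}\abs{\sum_i(\tilde Y_{[i]}-\xi_i)k(\tilde X_{(i)}/h)}$; moreover the weights vanish for $\tilde X_{(i)}>Ah$, so only the first $M_h\asymp nh$ partial sums enter, and the coupling error to the intermediate process $\tilde{\mathbb{H}}_n(h)=\frac{1}{\sqrt{nh}}\sum_i\xi_ik(\tilde X_{(i)}/h)$ is $\mathcal{O}_P\!\big(\log(n\underline h_n)/\sqrt{n\underline h_n}\big)$ uniformly over $\underline h_n\le h\le\overline x/A$. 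By construction $\tilde{\mathbb{H}}_n$ is conditionally Gaussian with the sample covariance $\hat\Sigma_n(h,h')=\frac{1}{n\sqrt{hh'}}\sum_if_{\tilde X}(\tilde X_i)^{-1}k(\tilde X_i/h)k(\tilde X_i/h')$.

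\textbf{Stage 2 (covariance replacement).} It remains to couple $\tilde{\mathbb{H}}_n$ (covariance $\hat\Sigma_n$) to $\mathbb{H}_n$ (covariance $\Sigma$). First I would bound $\sup_{h,h'}\abs{\hat\Sigma_n(h,h')-\Sigma(h,h')}$ by a maximal inequality for the empirical process of the class $\{x\mapsto f_{\tilde X}(x)^{-1}k(x/h)k(x/h')\}$, which has polynomial covering numbers and envelope supported on a set of width $\asymp\min(h,h')$; this gives a fluctuation of order $(n\underline h_n)^{-1/2}$ up to logarithmic factors, the worst case being $h=h'=\underline h_n$. Writing $\tilde{\mathbb{H}}_n=\hat\Sigma_n^{1/2}W$ and $\mathbb{H}_n=\Sigma^{1/2}W$ for a common white noise $W$, the H\"older-$1/2$ continuity of the (operator) square root yields a conditional $L^2$ coupling error governed by $\|\hat\Sigma_n-\Sigma\|^{1/2}$, and a chaining argument over the continuum of bandwidths upgrades this to $\sup_{\underline h_n\le h\le\overline x/A}\abs{\tilde{\mathbb{H}}_n(h)-\mathbb{H}_n(h)}=\mathcal{O}_P\!\big((n\underline h_n)^{-1/4}[\log(n\underline h_n)]^{1/2}\big)$. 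Combining the two stages by the triangle inequality gives the theorem, since $(n\underline h_n)^{-1/4}[\log(n\underline h_n)]^{1/2}$ dominates the Stage 1 error whenever $n\underline h_n\to\infty$.

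The main obstacle is Stage 2: the covariance discrepancy is only of order $(n\underline h_n)^{-1/2}$, but passing it through the square-root map to couple the two Gaussian processes costs a square root and produces the characteristic $1/4$ exponent, so the error budget is tight. Consequently the delicate parts are the envelope and covering-number control of the bandwidth-indexed product-kernel class underlying the uniform bound on $\hat\Sigma_n-\Sigma$, and the chaining step that converts the $L^2$ coupling into a supremum over $h$ without losing more than a factor $[\log(n\underline h_n)]^{1/2}$.
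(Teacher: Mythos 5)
Your Stage 1 is sound and matches the paper's first step: conditionally on the design, order the covariates, apply Sakhanenko's coupling to the partial sums of the concomitants, and convert to kernel-weighted sums by summation by parts using the bounded variation of $x\mapsto k(x/h)$. The genuine gap is in Stage 2. The covariance discrepancy you control is an \emph{entrywise} (sup over $(h,h')$) bound, $\sup_{h,h'}\abs{\hat\Sigma_n(h,h')-\Sigma(h,h')}=\mathcal{O}_P\bigl((n\underline h_n)^{-1/2}\bigr)$ up to logs, but the H\"older-$1/2$ inequality for the square root, $\lVert A^{1/2}-B^{1/2}\rVert\le\lVert A-B\rVert^{1/2}$, holds for the \emph{operator} norm. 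These two norms cannot be reconciled here. If you discretize to a grid of $N$ bandwidths, passing from the entrywise bound to an operator-norm bound costs a factor of order $N$ (or, exploiting the geometric decay of correlations on a log-spaced grid, a factor of order one over the grid spacing); meanwhile, the chaining/discretization step forces the grid spacing in $\log h$ to be of order $(n\underline h_n)^{-1/2}$, because the individual processes are only H\"older-$1/2$ in $\log h$, so their oscillation between grid points must itself be $o\bigl((n\underline h_n)^{-1/4}\bigr)$. The dimension penalty then exactly cancels the $(n\underline h_n)^{-1/2}$ covariance gain, and the argument delivers nothing. Working instead in the continuum, with $\hat\Sigma_n-\Sigma$ viewed as an integral operator on $L^2$ of the bandwidth interval, the Schur test does keep the operator norm of order $(n\underline h_n)^{-1/2}$, but then the pointwise variance of the difference process $(\hat\Sigma_n^{1/2}-\Sigma^{1/2})W$ at a given $h$ is the squared $L^2$-norm of a row of the kernel of $\hat\Sigma_n^{1/2}-\Sigma^{1/2}$, which is \emph{not} controlled by the operator norm (a spiky kernel can have small operator norm and large rows). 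So the claimed ``conditional $L^2$ coupling error governed by $\lVert\hat\Sigma_n-\Sigma\rVert^{1/2}$'' and its chaining upgrade do not follow from what you have established; this is precisely the step where your error budget, which you correctly describe as tight, breaks.

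The paper avoids any comparison of covariance operators. Sakhanenko's theorem couples the partial sums $S_m=\sum_{i\le m}\tilde Y_{[i]}$ with a \emph{single} Brownian motion evaluated at the conditional quadratic variation, $S_m\approx W(V_m)$ with $V_m=\sum_{i\le m}f_{\tilde X}(\tilde X_{(i)})^{-1}$. The normalization $var(\tilde Y_i\mid\tilde X_i=x)=f_{\tilde X}(x)^{-1}$ — which you used only to verify the covariance identity — is exactly what makes the random clock linear in expectation: $E\bigl[f_{\tilde X}(\tilde X_i)^{-1}I(\tilde X_i\le x)\bigr]=x$, so $\sup_{x\le Ah}\abs{V_{m(x)}-nx}=\mathcal{O}_P(\sqrt{nh})$ over the window relevant for bandwidth $h$. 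One then compares $W(V_{m(x)})$ with $W(nx)$ for the \emph{same} $W$ via the Brownian modulus of continuity over increments of size $\sqrt{nh}$ within $[0,Cnh]$, which is $\mathcal{O}_P\bigl((nh)^{1/4}[\log(nh)]^{1/2}\bigr)$; after summation by parts and division by $\sqrt{nh}$ this is $\mathcal{O}_P\bigl((nh)^{-1/4}[\log (nh)]^{1/2}\bigr)$, worst at $h=\underline h_n$ — exactly the claimed rate, with the characteristic exponent $1/4$ arising from the Brownian modulus applied to a $\sqrt{nh}$ clock discrepancy rather than from a square root of a covariance discrepancy. The limit process $\mathbb{H}_n(h)=\frac{1}{\sqrt{h}}\int k(x/h)\,d\bigl[W(nx)/\sqrt{n}\bigr]$ is independent of the design and has exactly the stated deterministic covariance, so no second coupling stage is needed. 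This time-change argument (the sense in which the derivation is ``in the spirit of'' Bickel--Rosenblatt) is the idea your proposal is missing; without it, or some substitute for your operator square-root step, the proof does not go through.
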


For our purposes, we need $(n\underline h_n)^{-1/4} [\log (n\underline
h_n)]^{1/2} \cdot (\log\log \underline h_n^{-1})^{1/2}\to 0$, so that the rate
in the above theorem is $o_P(1/\sqrt{\log\log\underline h_n})$. For this, the
condition $n\underline h_n/[(\log \log n)(\log\log\log n)]^2\to\infty$
given in the conditions of Theorem~\ref{highlevel_asym_dist_thm}, is sufficient,
since this implies, for some $a_n\to\infty$, $(n \underline h_n)^{1/4}\ge a_{n}
(\log\log n)^{1/2}(\log\log\log n)^{1/2}$ and this implies, for large enough
$n$,
\begin{align*}
(n\underline h_n)^{-1/4} [\log (n\underline h_n)]^{1/2}
&\le a_n^{-1}
  \frac{\{\log [a_n (\log\log n)^{1/2}(\log\log\log n)^{1/2}]^{4}\}^{1/2}}{(\log\log n)^{-1/2}(\log\log\log n)^{-1/2}}  \\
&=a_n^{-1}
  \frac{\{4[\log a_n+ (1/2)\log\log\log n+ (1/2)\log\log\log\log n]\}^{1/2}}{(\log\log n)^{-1/2}(\log\log\log n)^{-1/2}}  \\
&\le 2 a_n^{-1}(\log a_n+1)^{1/2}(\log\log n)^{-1/2}.
\end{align*}

\subsection{Limit Theorem for the Gaussian Approximation}\label{gaussian_limit_sec}

This section derives the limiting distribution of the approximating Gaussian process as $\overline h_n/\underline h_n$ increases.

\begin{theorem}\label{gaussian_limit_thm}
Let $\mathbb{H}(h)$ be a Gaussian process with mean zero and covariance kernel
\begin{align*}
cov\left(\mathbb{H}(h),\mathbb{H}(h')\right)
  =\frac{\int k(u/h)k(u/h')\, du}{\sqrt{hh'}\int k(u)^2\, du}
  =\sqrt{\frac{h'}{h}}\frac{\int k(u (h'/h))k(u)\, du}{\int k(u)^2\, du},
\end{align*}
where $k$ is a bounded symmetric kernel with a bounded derivative and support
$[-A,A]$. Let $c_1=\frac{Ak(A)^2}{\sqrt{\pi}\int k(u)^2\, du}$,
$c_2=\frac{1}{2\pi}\sqrt{\frac{\int \left[k'(u)u+\frac{1}{2} k(u)\right]^2\,
    du}{\int k(u)^2\, du}}$, and let $b(t)=\log c_2$ if $k(A)=0$ and $b(t)=\log
c_1+\frac{1}{2}\log t$ if $k(A)\ne 0$. Let $\underline h_n$ and $\overline h_n$
be sequences with $\overline h_n/\underline h_n\to\infty$. Then
\begin{align*}
\sqrt{2\log\log (\overline h_n/\underline h_n)}\left(\sup_{\underline h_n\le h\le \overline h_n}\left|\mathbb{H}(h)\right|
  -\sqrt{2\log\log (\overline h_n/\underline h_n)}\right)
-b(\log\log (\overline h_n/\underline h_n))\stackrel{d}{\to} Z\vee Z'
\end{align*}
where $Z$ and $Z'$ are independent extreme value random variables.
\end{theorem}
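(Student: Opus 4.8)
The plan is to pass to the logarithmic time scale, on which $\mathbb{H}$ becomes a \emph{stationary} Gaussian process, and then invoke classical extreme-value theory for the maximum of a stationary Gaussian process over an interval whose length diverges. First I would set $\mathbb{X}(t)=\mathbb{H}(e^t)$. Because the covariance kernel of $\mathbb{H}$ depends on $(h,h')$ only through the ratio $h'/h$, the process $\mathbb{X}$ is mean-zero, unit-variance and stationary, with correlation $r(\tau)=\operatorname{cov}(\mathbb{X}(t),\mathbb{X}(t+\tau))$ a function of $|\tau|$ alone; the change of variables $w=ue^{-t}$ gives $r(\tau)=e^{-|\tau|/2}\int k(ue^{-|\tau|})k(u)\,du\big/\int k(u)^2\,du$. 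Under this reparametrization $\sup_{\underline h_n\le h\le\overline h_n}\mathbb{H}(h)=\sup_{\log\underline h_n\le t\le\log\overline h_n}\mathbb{X}(t)$, a maximum over an interval of length $T_n=\log(\overline h_n/\underline h_n)\to\infty$, so that $\log T_n=\log\log(\overline h_n/\underline h_n)$ and $\sqrt{2\log T_n}$ is exactly the $\sqrt{2\log\log(\overline h_n/\underline h_n)}$ in the statement.

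Second, I would compute the local behavior of $r$ at $\tau=0$, which is what separates the two cases. Writing $r(\tau)=e^{-\tau/2}G(\tau)/G(0)$ with $G(\tau)=\int_{-A}^{A}k(w)k(we^{-\tau})\,dw$ for $\tau>0$ and integrating $\tfrac12\int w(k^2)'(w)\,dw$ by parts produces the boundary term $[wk(w)^2]_{-A}^{A}=2Ak(A)^2$. A short computation then gives $\tfrac{d}{d\tau}\log r(0^+)=-Ak(A)^2/\int k(u)^2\,du$. Hence when $k(A)\ne0$ the linear term survives and $r(\tau)=1-C_1|\tau|+o(|\tau|)$ with $C_1=Ak(A)^2/\int k(u)^2\,du$ (the index $\alpha=1$), whereas when $k(A)=0$ the linear term vanishes, $r$ is twice differentiable at $0$, and $r(\tau)=1-\tfrac{\lambda_2}{2}\tau^2+o(\tau^2)$ (the index $\alpha=2$). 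In the latter case I would obtain $\lambda_2=-r''(0)=\operatorname{var}(\mathbb{X}'(t))$ from the white-noise representation $\mathbb{X}(t)=e^{-t/2}\int k(ue^{-t})\,dB(u)/(\int k^2)^{1/2}$, which after differentiating under the integral and substituting $u\mapsto ue^{-t}$ yields $\lambda_2=\int[uk'(u)+\tfrac12 k(u)]^2\,du\big/\int k(u)^2\,du$. I would also record that $r(\tau)\to0$ as $\tau\to\infty$ — in fact exponentially, because of the $e^{-|\tau|/2}$ prefactor and the compact support of $k$ — so Berman's condition $r(\tau)\log\tau\to0$ holds trivially.

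Third, with $\alpha$, $C$, and the long-range decay established, I would apply the classical extreme-value theorem for stationary Gaussian processes with $r(\tau)=1-C|\tau|^{\alpha}+o(|\tau|^{\alpha})$ \citep{leadbetter_extremes_1983}: writing $M_{T_n}=\sup_{0\le t\le T_n}\mathbb{X}(t)$, one has $a_{T_n}(M_{T_n}-b_{T_n})\stackrel{d}{\to}Z$ with $Z$ Gumbel, $a_T=(2\log T)^{1/2}$, and
\begin{align*}
b_T=(2\log T)^{1/2}+(2\log T)^{-1/2}\Bigl[\bigl(\tfrac{1}{\alpha}-\tfrac12\bigr)\log\log T+\log\bigl(H_\alpha(2\pi)^{-1/2}2^{(2-\alpha)/(2\alpha)}C^{1/\alpha}\bigr)\Bigr].
\end{align*}
Since $a_{T_n}=\sqrt{2\log T_n}$, the quantity $a_{T_n}(M_{T_n}-\sqrt{2\log T_n})-b(\log T_n)$ equals $a_{T_n}(M_{T_n}-b_{T_n})$, so it suffices to identify $b(\log T_n)$ with the bracketed term. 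Substituting the two exactly known Pickands constants $H_1=1$ and $H_2=\pi^{-1/2}$, the coefficient of $\log\log T_n$ becomes $\tfrac12$ when $k(A)\ne0$ and $0$ when $k(A)=0$, and the residual constant collapses to $\log c_1$ and $\log c_2$ respectively; rewriting $\tfrac12\log\log T_n=\tfrac12\log\log\log(\overline h_n/\underline h_n)$ reproduces the stated $b(\cdot)$ and delivers the one-sided display.

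Finally, for the two-sided statement I would run the same argument on $-\mathbb{X}$ and use the standard asymptotic independence of the extremal point processes of $\mathbb{X}$ and $-\mathbb{X}$ (the event that $\mathbb{X}$ and $-\mathbb{X}$ are simultaneously near a high level has negligible probability, so high exceedances of the two decouple). Then $\sup_{[0,T_n]}|\mathbb{X}|=\max\{\sup\mathbb{X},\sup(-\mathbb{X})\}$ has, under the same normalization, the limit $Z\vee Z'$ with $Z,Z'$ independent Gumbel. The main obstacle is the second step: verifying that the linear term in $r(\tau)$ cancels \emph{precisely} when $k(A)=0$ and pinning down the constants $C_1$ and $\lambda_2$, together with matching the Leadbetter normalizing constants through $H_1=1$ and $H_2=\pi^{-1/2}$ to the stated $c_1$ and $c_2$; the extreme-value theorem itself is then invoked as a black box.
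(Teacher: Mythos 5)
Your proposal is correct and takes essentially the same route as the paper's proof: the paper likewise passes to the stationary process $\mathbb{X}(t)=\mathbb{H}(e^t)$ and invokes Theorem 12.3.5 of \citet{leadbetter_extremes_1983} with $\alpha=1$, $C=Ak(A)^2/\int k(u)^2\,du$ when $k(A)\ne 0$ and $\alpha=2$, $C=\int\left[k'(u)u+\tfrac{1}{2}k(u)\right]^2\,du\big/\left(2\int k(u)^2\,du\right)$ when $k(A)=0$. Your covariance expansion at $\tau=0$, the Pickands-constant bookkeeping ($H_1=1$, $H_2=\pi^{-1/2}$) matching $c_1$ and $c_2$, and the two-sided asymptotic-independence argument are precisely the ``elementary calculus'' verifications that the paper defers to its supplemental appendix.
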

\begin{proof}
  We use Theorem 12.3.5 of \citet{leadbetter_extremes_1983} applied to the
  process $\mathbb{X}(t)=\mathbb{H}(e^t)$, which is stationary,
with, in the case where $k(A)\ne 0$, $\alpha=1$ and $C=\frac{Ak(A)^2}{\int k(u)^2\, du}$ and, in the case where $k(A)=0$, $\alpha=2$ and $C=\frac{\int \left[k'(u)u+\frac{1}{2} k(u)\, du\right]^2\, du}{2\int k(u)^2\, du}$.
The calculations and verification of the conditions for this theorem follow from
elementary calculus and are given in Supplemental
Appendix~\ref{ev_calc_sec_supp}.
\end{proof}

\subsection{Proof of Theorem~\ref{highlevel_asym_dist_thm}}

We are now ready to prove Theorem~\ref{highlevel_asym_dist_thm}. Before
proceeding, we recall a result regarding absolute continuity of suprema of
Gaussian processes, which will be used in the proof.

\begin{lemma}\label{pitt_tran_lemma}
Let $\mathbb{X}(t)$ be a Gaussian process on a countable index set $\mathcal{T}$ with $P(\sup_{t\in\mathcal{T}}|\mathbb{X}(t)|<\infty)=1$ and $\inf_{t\in\mathcal{T}} var(\mathbb{X}(t))>0$.  Then $\sup_{t\in\mathcal{T}}\mathbb{X}(t)$ has an absolutely continuous distribution with bounded density.
\end{lemma}
\begin{proof}
See Proposition 3.2 in \citet{pitt_local_1979}.
\end{proof}

It follows from Lemma~\ref{pitt_tran_lemma} that the distribution of $\sup_{h\in[\underline h,\overline h]}\mathbb{H}(h)$ is absolutely continuous for any $0<\underline h\le \overline h<\infty$ (the supremum is equal to the supremum over a countable subset with probability one by continuity of the sample paths).
It also follows that $\sup_{h\in[\underline h,\overline h]}|\mathbb{H}(h)|$ is absolutely continuous, since $\sup_{h\in[\underline h,\overline h]}\mathbb{H}(h)$ and $\sup_{h\in[\underline h,\overline h]} -\mathbb{H}(h)$ are absolutely continuous and absolute continuity of $Y$ and $Z$ implies absolute continuity of $\max\{Y,Z\}$.

\begin{proof}[proof of Theorem~\ref{highlevel_asym_dist_thm}]
  By arguing along subsequences, we can assume without loss of generality that
  $\overline h_n/\underline h_n\to h^*$ for some $h^*\in [0,\infty)$ or
  $h^*=\infty$. In the first case,
\begin{align*}
  \sup_{\underline h_n\le h\le\overline h_n}\frac{\sqrt{n h}|\hat\theta(h)-\theta(h)|}
  {\hat\sigma(h)}
  =\sup_{1\le t\le\overline h_n/\underline h_n}|\mathbb{H}_n(t\underline h_n)|
  +r_n,
\end{align*}
where $r_n\stackrel{p}{\to} 0$ and $\mathbb{H}_n(h)$ is, conditional on $\{|X_i|\}_{i=1}^n$, a Gaussian process with the same distribution as $\mathbb{H}(h)$.  Since multiplying $h$ by a constant does not change the distribution of $\mathbb{H}(h)$, it follows that
\begin{align*}
  \sup_{1\le t\le\overline h_n/\underline h_n}|\mathbb{H}_n(t\underline h_n)|
  \stackrel{d}{=} \sup_{1\le h\le\overline h_n/\underline h_n}|\mathbb{H}(h)|
  \stackrel{d}{\to} \sup_{1\le h\le h^*}|\mathbb{H}(h)|,
\end{align*}
where the last step follows from stochastic equicontinuity of $\mathbb{H}(h)$ on
compact intervals.
The result then follows by continuity of the distribution of $\sup_{1\le h\le
  h^*}|\mathbb{H}(h)|$ at $\cvt{1-\alpha}(h^*,k)$
(which follows from the result in \citealt{pitt_local_1979} stated in
Lemma~\ref{pitt_tran_lemma}).

In the case where $\overline h_n/\underline h_n\to\infty$, let $h_n^*$ be given
by (\ref{hnstar_def_eq}) for some $K$ which will be chosen large enough to
satisfy conditions given below. We can assume without loss of generality that
either $\overline h_n> h_n^*$ for all $n$ large enough or that $\overline h_n\le
h_n^*$ for all $n$ large enough (again, by arguing along subsequences). In the
former case, we apply Lemma~\ref{hstar_bound_lemma} to show that
condition~\eqref{Tn_bound_eq} holds for
$\sqrt{nh}|\hat\theta(h)-\theta(h)|/\hat\sigma(h)$ so
long as $K$ is chosen large enough in the definition of $h_n^*$. Thus, by
Theorem~\ref{ev_limit_thm_general}, it suffices to consider the latter case
where $\overline h_n\le h_n^*$.

By Lemmas~\ref{psi_equiv_lemma}
and~\ref{ytilde_mod_cont_lemma},~\eqref{lln_equiv_general} holds for
$\sqrt{nh}|\hat\theta(h)-\theta(h)|/\hat\sigma(h)$ and
$\frac{1}{\sqrt{nh}}|\sum_{i=1}^n\tilde Y_{i}k(X_i/h)|$. It therefore follows from
Theorem~\ref{ev_limit_thm_general} that it suffices to consider
$\frac{1}{\sqrt{nh}}|\sum_{i=1}^n\tilde Y_{i}k(X_i/h)|$. By
Theorem~\ref{gauss_approx_thm}, this can be replaced by $|\mathbb{H}_n(h)|$, where
$\mathbb{H}_n(h)$ is the Gaussian process conditional on $\{|X_i|\}_{i=1}^n$
defined in the proof of that theorem. By Theorem~\ref{gaussian_limit_thm},
\begin{align*}
\sqrt{2\log\log (\overline h_n/\underline h_n)}\left(\sup_{\underline h_n\le h\le \overline h_n}|\mathbb{H}_n(h)|
  -\sqrt{2\log\log (\overline h_n/\underline h_n)}\right)
-b(\log\log (\overline h_n/\underline h_n))\stackrel{d}{\to} Z\vee Z'.
\end{align*}
Thus, by Theorems~\ref{ev_limit_thm_general} and~\ref{gauss_approx_thm}, the
same holds with $|\mathbb{H}_n(h)|$ replaced by
$\sqrt{nh}|\hat\theta(h)-\theta(h)|/\hat\sigma(h)$. Since
$\cvt{1-\alpha}(\overline h_n/\underline h_n,k)$ is the $1-\alpha$ quantile of a
distribution that converges in distribution to $Z\vee Z'$ by
Theorem~\ref{gauss_approx_thm}, and since the cdf of $Z\vee Z'$ is continuous, the
result follows.  The last display in the statement of the theorem follow directly from this extreme value limit.
\end{proof}

\section{Regularity conditions for applications}\label{sec:techn-deta-appl}

In this appendix, we state three theorems that give primitive regularity
conditions for applications discussed in Section~\ref{applications_sec}. Proofs
for these results are in Supplemental Appendix~\ref{applications_sec_supp}.

To state these conditions, we use the following additional notation. For a
random vector $(X_i,D_i,Y_i)$ with $X_i$ continuously distributed, let
$E(Y_i|D_i=d,X_i=\tilde x_+)=\lim_{x\downarrow \tilde x} E(Y_i|D_i=d,X_i=x)$ and
$E(Y_i|D_i=d,X_i=\tilde x_-)=\lim_{x\uparrow \tilde x} E(Y_i|D_i=d,X_i=x)$. For
a function $f:\mathbb{R}\to\mathbb{R}$ and $\tilde x\in\mathbb{R}$, let
$f(\tilde x_+)=\lim_{x\downarrow \tilde x} f(x)$ and
$f(\tilde x_-)=\lim_{x\uparrow \tilde x} f(x)$ when these limits exist. We say
that a function $f$ is right-continuous at $\tilde x$ with local modulus of
continuity $\ell(x)$ if $\|f(x)-f(\tilde x_+)\|\le \ell(\|x-\tilde x\|)$ for all
$x >\tilde x$ with $\|x-\tilde x\|$ small enough. We say that a function $f$ is
left-continuous at $\tilde x$ with local modulus of continuity $\ell(x)$ if
$\|f(x)-f(\tilde x_-)\|\le \ell(\|x-\tilde x\|)$ for all $x <\tilde x$ with
$\|x-\tilde x\|$ small enough. We say that a function $f$ is continuous at
$\tilde x$ with local modulus $\ell(x)$ if it is both left- and right-continuous
with $f(\tilde x_+)=f(\tilde x_-)=f(\tilde x)$. Note that we define left- and
right-continuity with respect to the left- and right-hand limits of the
function, so that a function may be both left- and right-continuous according to
our definition even if these limits are different (as is typically the case in
RD).

\begin{theorem}\label{reg_disc_thm}
  Consider the regression discontinuity design from
  Section~\ref{reg_discont_sec}. Suppose that
\begin{itemize}
\item[(i)] $|X_i|$ has a density $f_{|X|}(x)$ at $x=0$, $Y_i$ is bounded, and,
  for some deterministic function $\ell(t)$ with $\lim_{t\to 0}\log\log
  t^{-1}\ell(t)=0$, the functions $f_X(x)$, $var((D_i,Y_i)'|X_i=x)$, $E(Y_i|X_i=x)$
  and $E(D_i|X_i=x)$ are left- and right-continuous at $0$ with local modulus of
  continuity $\ell(t)$.

\item[(ii)] $P(D_i=1|X_i=0_+)-P(D_i=1|X_i=0_-)\ne 0$ and $var(Y_i|D_i=d,X_i=0_+)\ne 0$ or $var(Y_i|D_i=d,X_i=0_-)\ne 0$ for $d=0$ or $1$.
\end{itemize}

Then, for $\hat\theta(h)$ and $\theta(h)$ given in Section~\ref{reg_discont_sec}
and $\hat\sigma(h)$ corresponding to the Eicker-Huber-White standard error
estimator given in the supplemental appendix, if the kernel function $k^*$
satisfies part (i) of Assumption~\ref{dgp_kern_assump}, then
Assumptions~\ref{inf_func_assump} and Assumption~\ref{dgp_kern_assump} hold with
$k(u)=(\mu_{k^*,2}-\mu_{k^*,1}|u|)k^*(u)$, so long as $\overline h_n$ is bounded
by a small enough constant and $n\underline h_n/(\log\log \underline
h_n^{-1})^3\to\infty$.
\end{theorem}

Next, consider the problem of constructing uniform confidence bands for average
treatment effects under unconfoundedness as in Section~\ref{te_unconf_sec}. Let
$\hat\theta(h)$ be an estimator of $\theta(h)$ with influence function
representation
\begin{equation}\label{cate_inf_func_eq}
\sqrt{n}(\hat\theta(h)-\theta(h))
=\frac{1}{\sqrt{n}}\sum_{i=1}^n \frac{[\tilde Y_i-\theta(h)]I(X_i\in\mathcal{X}_h)}{P(X_i\in\mathcal{X}_h)}+o_P(1),
\end{equation}
where the $o_P(1)$ term is uniform over $\underline h\le h\le\overline h$ and
$\tilde Y_i:=D_i\frac{Y_i-\mu_1(X_i)}{e(X_i)}
-(1-D_i)\frac{Y_i-\mu_0(X_i)}{1-e(X_i)}+\mu_1(X_i)-\mu_0(X_i)$. See
\citet{crump_dealing_2009} for references to the literature for estimators that
satisfy this condition. This condition requires that the trimmed sample is
constructed using a known propensity score $e(x)$ (while allowing for a fully
nonparametric estimator on the trimmed sample). If instead one uses the trimming
rule $\hat{\mathcal{X}}_{h}=\{x\colon h < \hat{e}(x)\le 1-h\}$ based on an
estimated propensity score $e(x)$, we conjecture that~\eqref{cate_inf_func_eq}
will hold for $\sqrt{n}(\hat \theta(h)-\theta(\widehat{\mathcal{X}}_h))$ under
regularity conditions, where, for a set $\mathcal{X}$, $\theta(\mathcal{X})$ is
defined as $E(Y_i(1)-Y_i(0)\mid X_i\in \mathcal{X})$.\footnote{In the
  pointwise-in-$h$ case, similar results are given in the the working paper
  version \citep{crump_moving_2006} of \citet{crump_dealing_2009}; verifying
  this conjecture essentially involves verifying that their results can be
  generalized to hold uniformly over $\underline h\le h\le\overline h$.} This
will lead to uniform confidence bands for the parameter
$\theta(\widehat{\mathcal{X}}_h)$, which can be interpreted as the average
treatment effect for the random subpopulation $\widehat{\mathcal{X}}_h$. The
influence function~\eqref{cate_inf_func_eq} and the pivotal asymptotic
distribution we derive below are specific to estimators of trimmed average
treatment effects: other classes of estimators and trimming rules with different
influence function representations may not lead to a snooping adjusted critical
value based on a pivotal asymptotic distribution. Note that
$E(\tilde Y_i|X_i)=\tau(X_i)$ so that
$E(\tilde Y_i|X_i\in \mathcal{X}_h)=\theta(h)$. Let
\begin{align*}
\sigma(h)^2=var\left\{\frac{[\tilde Y_i-\theta(h)]I(X_i\in\mathcal{X}_h)}{P(X_i\in\mathcal{X}_h)}\right\}
=\frac{var\left\{[\tilde Y_i-\theta(h)]I(X_i\in\mathcal{X}_h)\right\}}{P(X_i\in\mathcal{X}_h)^2},
\end{align*}

The following theorem proves the validity of the confidence band given in
Section~\ref{te_unconf_sec}.

\begin{theorem}\label{ate_unconf_thm}
Let $0\le \underline h<\overline h< 1/2$.
Suppose that
\begin{itemize}
\item[(i)] the influence function representation (\ref{cate_inf_func_eq}) holds
  uniformly over $\underline h\le h\le \overline h$, and
  $se(h)=\hat\sigma(h)/\sqrt{n}$ where $\hat\sigma(h)$ is consistent for
  $\sigma(h)$ uniformly over $\underline h\le h\le \overline h$

\item[(ii)] $\theta(h)$ is bounded uniformly over $\underline h\le h\le
  \overline h$ and $E[\tilde Y_i^2|X_i]$ is bounded uniformly over $\underline
  h\le e(X_i)\le 1-\underline h$ and

\item[(iii)] $v(\overline h)>0$ where $v(h)=E\{[\tilde Y_i-\theta(h)]^2I(X_i\in\mathcal{X}_h)\}$.
\end{itemize}
Let $\hat t=\frac{se(\underline h)^2N(\underline h)^2}{se(\overline
  h)^2N(\overline h)^2}$ as defined in (\ref{ate_unconf_t_hat_eq}). Then
\begin{align*}
\liminf_{n}P\left(\frac{\sqrt{n}\left|\hat\theta(h)-\theta(h)\right|}{\hat\sigma(h)}
\le \cvt{1-\alpha}(\hat t,k_{\text{uniform}})
\text{ all $h\in[\underline h,\overline h]$}\right)\ge 1-\alpha,
\end{align*}
where $k_{\text{uniform}}$ is the uniform kernel.
If, in addition, $v(h)$ is continuous, the above display holds with the $\liminf$ replaced by $\lim_{n\to \infty}$ and $\ge$ replaced by $=$.
\end{theorem}

The final result gives the regularity conditions for inference on local average
treatment effects.

\begin{theorem}\label{late_thm}
  Consider the setup and notation from Section~\ref{late_sec}. Suppose that
\begin{itemize}
\item[(i)] $Z_i$ has a density $f_Z(z)$ at $z=\underline z$ and $z=\overline z$, $Y_i$ is bounded and, for some function $\ell(t)$ with $\lim_{t\to 0}\log\log t^{-1}\ell(t)=0$, $f_Z$, $var((D_i,Y_i)'|Z_i=z)$, $E(Y_i|Z_i=z)$ and $E(Z_i|Z_i=z)$ are continuous at $\underline z$ and $\overline z$ with local modulus of continuity $\ell(t)$.

\item[(ii)] $P(D_i=1|Z_i=\overline z)-P(D_i=1|Z_i=\underline z)\ne 0$ and $var(Y_i|D_i=d,z_i=\underline z)\ne 0$ or $var(Y_i|D_i=d,Z_i=\overline z)\ne 0$ for $d=0$ or $1$.
\end{itemize}

Then, for $\hat\theta(h)$, $\theta(h)$ and $\hat\sigma(h)$ given above,
Assumptions~\ref{inf_func_assump} and Assumption~\ref{dgp_kern_assump} hold with
$k(u)=I(|u|\le 1)$, so long as $\overline h_n$ is bounded by a small enough constant
and $n\underline h_n/(\log\log \underline h_n^{-1})^3\to\infty$.
\end{theorem}

\section{Specification Searches and Sensitivity Analysis}\label{sensitivity_analysis_sec}

This section discusses the use of uniform-in-the-tuning-parameter confidence
bands in sensitivity analysis and compares them to
pointwise-in-the-tuning-parameter confidence bands. The points made here apply
to any sensitivity analysis of some parameter $\theta(h)$ to a tuning parameter
$h$ \citep[e.g., $h$ may determine the subset of included covariates, as
in][]{leamer_lets_1983}, provided we have an estimator $\hat\theta(h)$ that, for
a given $h$, is approximately unbiased for $\theta(h)$.

Suppose that different readers may disagree on how $\theta(h)$ relates to
$\theta(0)$, the parameter of interest, as $h$ varies. We can report
pointwise-in-$h$ confidence sets $\mathcal{C}_{\text{pointwise}}(h)$ satisfying
\begin{equation*}
  P\left( \theta(h)\in \mathcal{C}_{\text{pointwise}}(h)
  \right)= 1-\alpha\quad \text{for  all $h\in\mathcal{H}$},
\end{equation*}
or uniform-in-$h$ confidence sets $\mathcal{C}_{\text{uniform}}(h)$ satisfying
\begin{equation*}
  P\left( \theta(h)\in \mathcal{C}_{\text{uniform}}(h) \text{ all
      $h\in\mathcal{H}$}\right)= 1-\alpha.
\end{equation*}
If each reader believes that a particular $h$ is most suitable for estimating
and performing inference on $\theta(0)$, and, if given access to the original
data, would only perform analysis based on this $h$, then the researcher can
simply report $\hat\theta(h)$ and $\mathcal{C}_{\text{pointwise}}(h)$ for a
range of values of $h$. The reader would then select an estimate
$\hat{\theta}(h)$ and a confidence set $\mathcal{C}_{\text{pointwise}}(h)$ that
correspond to their prior belief about the most appropriate $h$. The confidence
set $\mathcal{C}_{\text{pointwise}}(h)$ selected by the reader (which the reader
would have always selected regardless of the data) will have the correct
coverage for $\theta(h)$ for the given $h$.

If, however, the researcher has some liberty in choosing which $\hat\theta(h)$
to report and/or emphasize (e.g.\ by reporting some results in the abstract or
main text and others in an appendix), reporting
$\mathcal{C}_{\text{pointwise}}(h)$ can lead to undercoverage, if one interprets
coverage as ``coverage conditional on being reported/emphasized in the main
text.'' In this setting, reporting $\mathcal{C}_{\text{uniform}}(h)$ solves the
problem of undercoverage of $\theta(h)$, so long as the set $\mathcal{H}$
includes all values of $h$ considered by the researcher in choosing which
$\hat\theta(h)$ to report. This becomes particularly important when readers are
less informed about the subject matter or details of the data than the
researcher, since, in this case, readers may defer to the researcher on the
choice of $h$.

To get at these ideas in another way, let us consider some hypothesis testing
problems that a researcher might have in mind in performing a sensitivity
analysis:
\begin{align*}
H_{0,a}&\colon \theta(h)\le 0 \quad\text{for some $h\in\mathcal{H}$}, \\
H_{0,b}&\colon \theta(h)\le 0 \quad\text{for all $h\in\mathcal{H}$},\\
H_{0,c}&\colon \theta(h) \text{ has the same sign for all all $h\in\mathcal{H}$.}
\end{align*}
One may consider formalizing the notion of ``concluding that $\theta(0)$ is
greater than zero in a robust sense'' by either
\begin{equation}\label{hyp_test_1}
\text{rejecting $H_{0,a}$ (and therefore also accepting $H_{0,c}$ in the sense of rejecting its complement)}
\end{equation}
or
\begin{equation}\label{hyp_test_2}
\text{rejecting $H_{0,b}$ and failing to reject $H_{0,c}$}.
\end{equation}
Clearly, (\ref{hyp_test_1}) is a more stringent requirement than
(\ref{hyp_test_2}). Rejecting only when
$\mathcal{C}_{\text{pointwise}}(h)\subseteq (0,\infty)$ for all $h$ provides a
valid test of $H_{0,a}$ since, under $H_{0,a}$ there exists a $h_{0}$ such that
$\theta(h_{0})\leq 0$, so that
$P\left(\mathcal{C}_{\text{pointwise}}(h)\subseteq (0,\infty) \text{ all
    $h$}\right) \le P\left(\mathcal{C}_{\text{pointwise}}(h_{0})\subseteq
  (0,\infty)\right) \le
P\left(\theta(h_0)\not\in\mathcal{C}_{\text{pointwise}}(h_{0})\right)$. Thus,
under the criterion~\eqref{hyp_test_1}, it is sufficient to use pointwise-in-$h$
confidence bands. However, this approach is likely to be conservative in many
practically relevant situations: the confidence set for $\theta(0)$ is
effectively given by the union of all pointwise sets
$\mathcal{C}_{\text{pointwise}}(h)$. In our case, where $\hat\theta(h)$ is a
kernel based estimate with bandwidth $h$, such confidence interval will be very
large since pointwise confidence intervals for small $h$ can be very wide.

If, instead, one takes~\eqref{hyp_test_2} as the criterion for ``concluding that
$\theta(0)$ is greater than zero in a robust sense'', one can perform such a
test by looking at the uniform confidence band, and
concluding~\eqref{hyp_test_2} only if
$\mathcal{C}_{\text{uniform}}(h)\subseteq (0,\infty)$ for some $h$, and
$\mathcal{C}_{\text{uniform}}(h)\cap (0,\infty)\ne 0$ for all $h$. In contrast,
performing this analysis with $\mathcal{C}_{\text{pointwise}}(h)$ does not
provide a test of $H_{0,c}$ with correct size: this formulation of robustness of
the results to the tuning parameter requires a uniform-in-$h$ confidence band.
One can view this approach as a way of formulating a confidence statement for
procedures such as those proposed by \citet{imbens_regression_2008} that examine
whether the sign of of a kernel estimator changes over a range of bandwidths.

\setstretch{1.2} 
\bibliography{../library}

\begin{thebibliography}{50}
\newcommand{\enquote}[1]{``#1''}
\expandafter\ifx\csname natexlab\endcsname\relax\def\natexlab#1{#1}\fi

\bibitem[\protect\citeauthoryear{Andrews and Schafgans}{Andrews and
  Schafgans}{1998}]{andrews_semiparametric_1998}
\textsc{Andrews, D. W.~K. and M.~M.~A. Schafgans} (1998):
  \enquote{Semiparametric Estimation of the Intercept of a Sample Selection
  Model,} \emph{Review of Economic Studies}, 65, 497--517.

\bibitem[\protect\citeauthoryear{Armstrong}{Armstrong}{2015}]{armstrong_adaptive_2015}
\textsc{Armstrong, T.} (2015): \enquote{Adaptive testing on a regression
  function at a point,} \emph{The Annals of Statistics}, 43, 2086--2101.

\bibitem[\protect\citeauthoryear{Armstrong and Chan}{Armstrong and
  Chan}{2016}]{armstrong_multiscale_2016}
\textsc{Armstrong, T.~B. and H.~P. Chan} (2016): \enquote{Multiscale adaptive
  inference on conditional moment inequalities,} \emph{Journal of
  Econometrics}, 194, 24--43.

\bibitem[\protect\citeauthoryear{Bailey and Goodman-Bacon}{Bailey and
  Goodman-Bacon}{2015}]{BaGB15}
\textsc{Bailey, M.~J. and A.~Goodman-Bacon} (2015): \enquote{The War on
  Poverty's Experiment in Public Medicine: Community Health Centers and the
  Mortality of Older Americans,} \emph{American Economic Review}, 105,
  1067--1104.

\bibitem[\protect\citeauthoryear{Bickel and Rosenblatt}{Bickel and
  Rosenblatt}{1973}]{bickel_global_1973}
\textsc{Bickel, P.~J. and M.~Rosenblatt} (1973): \enquote{On some global
  measures of the deviations of density function estimates,} \emph{The Annals
  of Statistics}, 1, 1071--1095.

\bibitem[\protect\citeauthoryear{Calonico, Cattaneo, and Titiunik}{Calonico
  et~al.}{2014}]{cct14}
\textsc{Calonico, S., M.~D. Cattaneo, and R.~Titiunik} (2014): \enquote{Robust
  Nonparametric Confidence Intervals for Regression-Discontinuity Designs,}
  \emph{Econometrica}, 82, 2295--2326.

\bibitem[\protect\citeauthoryear{Card, Dobkin, and Maestas}{Card
  et~al.}{2009}]{CaDoMa09}
\textsc{Card, D., C.~Dobkin, and N.~Maestas} (2009): \enquote{Does Medicare
  save lives?} \emph{Quarterly Journal of Economics}, 124, 597--636.

\bibitem[\protect\citeauthoryear{Card, Lee, Pei, and Weber}{Card
  et~al.}{2015}]{clpw15}
\textsc{Card, D., D.~S. Lee, Z.~Pei, and A.~Weber} (2015): \enquote{Inference
  on Causal Effects in a Generalized Regression Kink Design,}
  \emph{Econometrica}, 83, 2453--2483.

\bibitem[\protect\citeauthoryear{Chamberlain}{Chamberlain}{1986}]{chamberlain_asymptotic_1986}
\textsc{Chamberlain, G.} (1986): \enquote{Asymptotic efficiency in
  semi-parametric models with censoring,} \emph{Journal of Econometrics}, 32,
  189--218.

\bibitem[\protect\citeauthoryear{Chaudhuri and Hill}{Chaudhuri and
  Hill}{2016}]{hill_robust_2013}
\textsc{Chaudhuri, S. and J.~B. Hill} (2016): \enquote{Robust Estimation and
  Inference for Average Treatment Effects,} Unpublished manuscript, University
  of North Carolina at Chapel Hill.

\bibitem[\protect\citeauthoryear{Chernozhukov, Chetverikov, and
  Kato}{Chernozhukov et~al.}{2013}]{chernozhukov_gaussian_2013}
\textsc{Chernozhukov, V., D.~Chetverikov, and K.~Kato} (2013):
  \enquote{Gaussian approximations and multiplier bootstrap for maxima of sums
  of high-dimensional random vectors,} \emph{The Annals of Statistics}, 41,
  2786--2819.

\bibitem[\protect\citeauthoryear{Chernozhukov, Chetverikov, and
  Kato}{Chernozhukov
  et~al.}{2014{\natexlab{a}}}]{chernozhukov_anti-concentration_2014}
---\hspace{-.1pt}---\hspace{-.1pt}--- (2014{\natexlab{a}}):
  \enquote{Anti-concentration and honest, adaptive confidence bands,} \emph{The
  Annals of Statistics}, 42, 1787--1818.

\bibitem[\protect\citeauthoryear{Chernozhukov, Chetverikov, and
  Kato}{Chernozhukov et~al.}{2014{\natexlab{b}}}]{chernozhukov_gaussian_2014}
---\hspace{-.1pt}---\hspace{-.1pt}--- (2014{\natexlab{b}}): \enquote{Gaussian
  approximation of suprema of empirical processes,} \emph{The Annals of
  Statistics}, 42, 1564--1597.

\bibitem[\protect\citeauthoryear{{Connors Jr}, Speroff, Dawson, Thomas,
  {Harrell Jr}, Wagner, Desbiens, Goldman, Wu, Califf, {Fulkerson Jr},
  Vidaillet, Broste, Bellamy, Lynn, and Knaus}{{Connors Jr}
  et~al.}{1996}]{connors96rhc}
\textsc{{Connors Jr}, A.~F., T.~Speroff, N.~V. Dawson, C.~Thomas, F.~E.
  {Harrell Jr}, D.~Wagner, N.~Desbiens, L.~Goldman, A.~W. Wu, R.~M. Califf,
  W.~J. {Fulkerson Jr}, H.~Vidaillet, S.~Broste, P.~Bellamy, J.~Lynn, and W.~A.
  Knaus} (1996): \enquote{The Effectiveness of Right Heart Catheterization in
  the Initial Care of Critically Ill patients,} \emph{The Journal of the
  American Medical Association}, 276, 889--897.

\bibitem[\protect\citeauthoryear{Crump, Hotz, Imbens, and Mitnik}{Crump
  et~al.}{2006}]{crump_moving_2006}
\textsc{Crump, R.~K., V.~J. Hotz, G.~W. Imbens, and O.~A. Mitnik} (2006):
  \enquote{Moving the Goalposts: Addressing Limited Overlap in the Estimation
  of Average Treatment Effects by Changing the Estimand,} Working Paper 330,
  National Bureau of Economic Research.

\bibitem[\protect\citeauthoryear{Crump, Hotz, Imbens, and Mitnik}{Crump
  et~al.}{2009}]{crump_dealing_2009}
---\hspace{-.1pt}---\hspace{-.1pt}--- (2009): \enquote{Dealing with limited
  overlap in estimation of average treatment effects,} \emph{Biometrika}, 96,
  187--199.

\bibitem[\protect\citeauthoryear{Darling and Erdös}{Darling and
  Erdös}{1956}]{darling_limit_1956}
\textsc{Darling, D. and P.~Erdös} (1956): \enquote{A limit theorem for the
  maximum of normalized sums of independent random variables,} \emph{Duke
  Mathematical Journal}, 23, 143--155.

\bibitem[\protect\citeauthoryear{DiNardo and Lee}{DiNardo and
  Lee}{2011}]{DiLe11}
\textsc{DiNardo, J. and D.~S. Lee} (2011): \enquote{{Program Evaluation and
  Research Designs},} in \emph{Handbook of Labor Economics}, ed. by
  O.~Ashenfelter and D.~Card, Elsevier, vol.~4a, 463--536.

\bibitem[\protect\citeauthoryear{Donoho}{Donoho}{1994}]{donoho_statistical_1994}
\textsc{Donoho, D.~L.} (1994): \enquote{Statistical Estimation and Optimal
  Recovery,} \emph{The Annals of Statistics}, 22, 238--270.

\bibitem[\protect\citeauthoryear{Einmahl and Mason}{Einmahl and
  Mason}{1989}]{einmahl_darling-erdos_1989}
\textsc{Einmahl, U. and D.~M. Mason} (1989): \enquote{Darling-Erdos theorems
  for martingales,} \emph{Journal of Theoretical Probability}, 2, 437--460.

\bibitem[\protect\citeauthoryear{Fr{\"o}lich}{Fr{\"o}lich}{2007}]{frolich_nonparametric_2007}
\textsc{Fr{\"o}lich, M.} (2007): \enquote{Nonparametric {IV} estimation of
  local average treatment effects with covariates,} \emph{Journal of
  Econometrics}, 139, 35--75.

\bibitem[\protect\citeauthoryear{Galiani, Gertler, and Schargrodsky}{Galiani
  et~al.}{2005}]{GGS05}
\textsc{Galiani, S., P.~Gertler, and E.~Schargrodsky} (2005): \enquote{Water
  for Life: The Impact of the Privatization of Water Services on Child
  Mortality,} \emph{Journal of Political Economy}, 113, 83--120.

\bibitem[\protect\citeauthoryear{Gelman and Imbens}{Gelman and
  Imbens}{2014}]{gi14}
\textsc{Gelman, A. and G.~W. Imbens} (2014): \enquote{Why high-order
  polynomials should not be used in regression discontinuity designs,} Working
  Paper 20405, National Bureau of Economic Research.

\bibitem[\protect\citeauthoryear{Gin{\'e} and Nickl}{Gin{\'e} and
  Nickl}{2010}]{gine_confidence_2010}
\textsc{Gin{\'e}, E. and R.~Nickl} (2010): \enquote{Confidence bands in density
  estimation,} \emph{The Annals of Statistics}, 38, 1122--1170.

\bibitem[\protect\citeauthoryear{Hall}{Hall}{1991}]{hall91}
\textsc{Hall, P.} (1991): \enquote{On convergence rates of suprema,}
  \emph{Probability Theory and Related Fields}, 89, 447--455.

\bibitem[\protect\citeauthoryear{H{\"a}rdle}{H{\"a}rdle}{1989}]{hardle_asymptotic_1989}
\textsc{H{\"a}rdle, W.} (1989): \enquote{Asymptotic maximal deviation of
  {M}-smoothers,} \emph{Journal of Multivariate Analysis}, 29, 163--179.

\bibitem[\protect\citeauthoryear{Heckman}{Heckman}{1990}]{heckman_varieties_1990}
\textsc{Heckman, J.} (1990): \enquote{Varieties of Selection Bias,} \emph{The
  American Economic Review}, 80, 313--318.

\bibitem[\protect\citeauthoryear{Heckman, Ichimura, and Todd}{Heckman
  et~al.}{1997}]{HeIcTo97}
\textsc{Heckman, J.~J., H.~Ichimura, and P.~E. Todd} (1997): \enquote{Matching
  Evidence Job As An Econometric Estimator: Evidence from Evaluating a Job
  Training Programme,} \emph{Review of Economic Studies}, 64, 605--654.

\bibitem[\protect\citeauthoryear{Heckman, Urzua, and Vytlacil}{Heckman
  et~al.}{2006}]{heckman_understanding_2006}
\textsc{Heckman, J.~J., S.~Urzua, and E.~Vytlacil} (2006):
  \enquote{Understanding Instrumental Variables in Models with Essential
  Heterogeneity,} \emph{Review of Economics and Statistics}, 88, 389--432.

\bibitem[\protect\citeauthoryear{Heckman and Vytlacil}{Heckman and
  Vytlacil}{2005}]{heckman_structural_2005}
\textsc{Heckman, J.~J. and E.~Vytlacil} (2005): \enquote{Structural Equations,
  Treatment Effects, and Econometric Policy Evaluation1,} \emph{Econometrica},
  73, 669--738.

\bibitem[\protect\citeauthoryear{Imbens and Kalyanaraman}{Imbens and
  Kalyanaraman}{2012}]{imbens_optimal_2012}
\textsc{Imbens, G. and K.~Kalyanaraman} (2012): \enquote{Optimal Bandwidth
  Choice for the Regression Discontinuity Estimator,} \emph{The Review of
  Economic Studies}, 79, 933--959.

\bibitem[\protect\citeauthoryear{Imbens and Angrist}{Imbens and
  Angrist}{1994}]{imbens_identification_1994}
\textsc{Imbens, G.~W. and J.~D. Angrist} (1994): \enquote{Identification and
  Estimation of Local Average Treatment Effects,} \emph{Econometrica}, 62,
  467--475.

\bibitem[\protect\citeauthoryear{Imbens and Lemieux}{Imbens and
  Lemieux}{2008}]{imbens_regression_2008}
\textsc{Imbens, G.~W. and T.~Lemieux} (2008): \enquote{Regression discontinuity
  designs: A guide to practice,} \emph{Journal of Econometrics}, 142, 615--635.

\bibitem[\protect\citeauthoryear{Johnston}{Johnston}{1982}]{johnston_probabilities_1982}
\textsc{Johnston, G.~J.} (1982): \enquote{Probabilities of maximal deviations
  for nonparametric regression function estimates,} \emph{Journal of
  Multivariate Analysis}, 12, 402--414.

\bibitem[\protect\citeauthoryear{Khan and Tamer}{Khan and
  Tamer}{2010}]{khan_irregular_2010}
\textsc{Khan, S. and E.~Tamer} (2010): \enquote{Irregular Identification,
  Support Conditions, and Inverse Weight Estimation,} \emph{Econometrica}, 78,
  2021{\textendash}2042.

\bibitem[\protect\citeauthoryear{Leadbetter, Lindgren, and Rootzen}{Leadbetter
  et~al.}{1983}]{leadbetter_extremes_1983}
\textsc{Leadbetter, M.~R., G.~Lindgren, and H.~Rootzen} (1983): \emph{Extremes
  and Related Properties of Random Sequences and Processes}, New York:
  Springer.

\bibitem[\protect\citeauthoryear{Leamer}{Leamer}{1983}]{leamer_lets_1983}
\textsc{Leamer, E.~E.} (1983): \enquote{Let's Take the Con Out of
  Econometrics,} \emph{The American Economic Review}, 73, 31--43.

\bibitem[\protect\citeauthoryear{Lee}{Lee}{2008}]{lee08}
\textsc{Lee, D.~S.} (2008): \enquote{{Randomized experiments from non-random
  selection in U.S. House elections},} \emph{Journal of Econometrics}, 142,
  675--697.

\bibitem[\protect\citeauthoryear{Lee and Lemieux}{Lee and
  Lemieux}{2010}]{LeLe10}
\textsc{Lee, D.~S. and T.~Lemieux} (2010): \enquote{{Regression Discontinuity
  Designs in Economics},} \emph{Journal of Economic Literature}, 48, 281--355.

\bibitem[\protect\citeauthoryear{Lehmann and Romano}{Lehmann and
  Romano}{2005}]{lehmann_testing_2005}
\textsc{Lehmann, E.~L. and J.~P. Romano} (2005): \emph{Testing statistical
  hypotheses}, Springer.

\bibitem[\protect\citeauthoryear{Lemieux and Milligan}{Lemieux and
  Milligan}{2008}]{LeMi08}
\textsc{Lemieux, T. and K.~Milligan} (2008): \enquote{Incentive effects of
  social assistance: A regression discontinuity approach,} \emph{Journal of
  Econometrics}, 142, 807--828.

\bibitem[\protect\citeauthoryear{Liu and Wu}{Liu and
  Wu}{2010}]{liu_simultaneous_2010}
\textsc{Liu, W. and W.~B. Wu} (2010): \enquote{Simultaneous nonparametric
  inference of time series,} \emph{The Annals of Statistics}, 38, 2388--2421.

\bibitem[\protect\citeauthoryear{Ludwig and Miller}{Ludwig and
  Miller}{2007}]{LuMi07}
\textsc{Ludwig, J. and D.~L. Miller} (2007): \enquote{Does Head Start improve
  children's life chances? Evidence from a regression discontinuity design,}
  \emph{Quarterly Journal of Economics}, 122, 159--208.

\bibitem[\protect\citeauthoryear{Pitt and Tran}{Pitt and
  Tran}{1979}]{pitt_local_1979}
\textsc{Pitt, L.~D. and L.~T. Tran} (1979): \enquote{Local Sample Path
  Properties of Gaussian Fields,} \emph{The Annals of Probability}, 7,
  477--493.

\bibitem[\protect\citeauthoryear{Porter}{Porter}{2003}]{porter_estimation_2003}
\textsc{Porter, J.} (2003): \enquote{Estimation in the Regression Discontinuity
  Model,} Unpublished manuscript, University of Wisconsin.

\bibitem[\protect\citeauthoryear{Sakhanenko}{Sakhanenko}{1985}]{sakhanenko_convergence_1985}
\textsc{Sakhanenko, A.~I.} (1985): \enquote{Convergence rate in the invariance
  principle for non-identically distributed variables with exponential
  moments,} in \emph{Advances in Probability Theory: Limit Theorems for Sums of
  Random Variables}, ed. by A.~A. Borovkov, Springer, 2--73.

\bibitem[\protect\citeauthoryear{Schennach}{Schennach}{2015}]{schennach15}
\textsc{Schennach, S.~M.} (2015): \enquote{A bias bound approach to
  nonparametric inference,} Working Paper CWP71/15, Cemmap.

\bibitem[\protect\citeauthoryear{{van Der Klaauw}}{{van Der
  Klaauw}}{2002}]{vdk02}
\textsc{{van Der Klaauw}, W.} (2002): \enquote{Estimating the Effect of
  Financial Aid Offers on College Enrollment: A Regression-Discontinuity
  Approach,} \emph{International Economic Review}, 43, 1249--1287.

\bibitem[\protect\citeauthoryear{van~der Vaart and Wellner}{van~der Vaart and
  Wellner}{1996}]{van_der_vaart_weak_1996}
\textsc{van~der Vaart, A.~W. and J.~A. Wellner} (1996): \emph{Weak convergence
  and empirical processes}, Springer.

\bibitem[\protect\citeauthoryear{White}{White}{2000}]{white_reality_2000}
\textsc{White, H.} (2000): \enquote{A Reality Check for Data Snooping,}
  \emph{Econometrica}, 68, 1097--1126.

\end{thebibliography}


\begin{thebibliography}{18}
\newcommand{\enquote}[1]{``#1''}
\expandafter\ifx\csname natexlab\endcsname\relax\def\natexlab#1{#1}\fi

\bibitem[\protect\citeauthoryear{Abadie and Imbens}{Abadie and
  Imbens}{2006}]{AbIm06}
\textsc{Abadie, A. and G.~W. Imbens} (2006): \enquote{Large sample properties
  of matching estimators for average treatment effects,} \emph{Econometrica},
  74, 235--267.

\bibitem[\protect\citeauthoryear{Abadie, Imbens, and Zheng}{Abadie
  et~al.}{2014}]{AbImZh14}
\textsc{Abadie, A., G.~W. Imbens, and F.~Zheng} (2014): \enquote{Inference for
  Misspecified Models With Fixed Regressors,} \emph{Journal of the American
  Statistical Association}, 109, 1601--1614.

\bibitem[\protect\citeauthoryear{Armstrong and Kolesár}{Armstrong and
  Kolesár}{2015}]{ArKo15snooping}
\textsc{Armstrong, T.~B. and M.~Kolesár} (2015): \enquote{A Simple Adjustment
  for Bandwidth Snooping,} ArXiv: 1412.0267v3.

\bibitem[\protect\citeauthoryear{Bickel and Rosenblatt}{Bickel and
  Rosenblatt}{1973}]{bickel_global_1973}
\textsc{Bickel, P.~J. and M.~Rosenblatt} (1973): \enquote{On some global
  measures of the deviations of density function estimates,} \emph{The Annals
  of Statistics}, 1, 1071--1095.

\bibitem[\protect\citeauthoryear{Calonico, Cattaneo, and Titiunik}{Calonico
  et~al.}{2014}]{cct14}
\textsc{Calonico, S., M.~D. Cattaneo, and R.~Titiunik} (2014): \enquote{Robust
  Nonparametric Confidence Intervals for Regression-Discontinuity Designs,}
  \emph{Econometrica}, 82, 2295--2326.

\bibitem[\protect\citeauthoryear{Fan and Gijbels}{Fan and
  Gijbels}{1996}]{fan_local_1996}
\textsc{Fan, J. and I.~Gijbels} (1996): \emph{Local Polynomial Modelling and
  Its Applications}, vol.~66 of \emph{Monographs on Statistics and Applied
  Probability}, CRC Press.

\bibitem[\protect\citeauthoryear{Imbens and Kalyanaraman}{Imbens and
  Kalyanaraman}{2012}]{imbens_optimal_2012}
\textsc{Imbens, G. and K.~Kalyanaraman} (2012): \enquote{Optimal Bandwidth
  Choice for the Regression Discontinuity Estimator,} \emph{The Review of
  Economic Studies}, 79, 933--959.

\bibitem[\protect\citeauthoryear{Klein and Rio}{Klein and
  Rio}{2005}]{klein_concentration_2005}
\textsc{Klein, T. and E.~Rio} (2005): \enquote{Concentration around the mean
  for maxima of empirical processes,} \emph{The Annals of Probability}, 33,
  1060--1077.

\bibitem[\protect\citeauthoryear{Leadbetter, Lindgren, and Rootzen}{Leadbetter
  et~al.}{1983}]{leadbetter_extremes_1983}
\textsc{Leadbetter, M.~R., G.~Lindgren, and H.~Rootzen} (1983): \emph{Extremes
  and Related Properties of Random Sequences and Processes}, New York:
  Springer.

\bibitem[\protect\citeauthoryear{Lee}{Lee}{2008}]{lee08}
\textsc{Lee, D.~S.} (2008): \enquote{{Randomized experiments from non-random
  selection in U.S. House elections},} \emph{Journal of Econometrics}, 142,
  675--697.

\bibitem[\protect\citeauthoryear{Massart}{Massart}{2000}]{massart_about_2000}
\textsc{Massart, P.} (2000): \enquote{About the Constants in Talagrand's
  Concentration Inequalities for Empirical Processes,} \emph{The Annals of
  Probability}, 28, 863--884.

\bibitem[\protect\citeauthoryear{M{\"o}rters and Peres}{M{\"o}rters and
  Peres}{2010}]{morters_brownian_2010}
\textsc{M{\"o}rters, P. and Y.~Peres} (2010): \emph{Brownian Motion},
  Cambridge, UK; New York, NY: Cambridge University Press.

\bibitem[\protect\citeauthoryear{Pollard}{Pollard}{1984}]{pollard_convergence_1984}
\textsc{Pollard, D.} (1984): \emph{Convergence of stochastic processes}, New
  York, {NY}: Springer.

\bibitem[\protect\citeauthoryear{Sakhanenko}{Sakhanenko}{1985}]{sakhanenko_convergence_1985}
\textsc{Sakhanenko, A.~I.} (1985): \enquote{Convergence rate in the invariance
  principle for non-identically distributed variables with exponential
  moments,} in \emph{Advances in Probability Theory: Limit Theorems for Sums of
  Random Variables}, ed. by A.~A. Borovkov, Springer, 2--73.

\bibitem[\protect\citeauthoryear{Shao}{Shao}{1995}]{shao_strong_1995}
\textsc{Shao, Q.-M.} (1995): \enquote{Strong Approximation Theorems for
  Independent Random Variables and Their Applications,} \emph{Journal of
  Multivariate Analysis}, 52, 107--130.

\bibitem[\protect\citeauthoryear{Talagrand}{Talagrand}{1996}]{talagrand_new_1996}
\textsc{Talagrand, M.} (1996): \enquote{New concentration inequalities in
  product spaces,} \emph{Inventiones mathematicae}, 126, 505--563.

\bibitem[\protect\citeauthoryear{van~der Vaart}{van~der
  Vaart}{1998}]{van_der_vaart_asymptotic_1998}
\textsc{van~der Vaart, A.~W.} (1998): \emph{Asymptotic Statistics}, Cambridge,
  {UK}; New York, {NY}: Cambridge University Press.

\bibitem[\protect\citeauthoryear{van~der Vaart and Wellner}{van~der Vaart and
  Wellner}{1996}]{van_der_vaart_weak_1996}
\textsc{van~der Vaart, A.~W. and J.~A. Wellner} (1996): \emph{Weak convergence
  and empirical processes}, Springer.

\end{thebibliography}

\clearpage

\begin{table}[p]
  \centering
  \renewcommand{\arraystretch}{1.2}      
  \begin{tabular}{@{}l@{\hspace{1em}}rrrrrr@{}}
&
\multicolumn{3}{c}{NW / LL (int)}&\multicolumn{3}{c}{LL (boundary)}\\
\cmidrule(rl){2-4}\cmidrule(rl){5-7}
$\overline{h}/\underline{h}$
& Unif & Tri& Epa& Unif & Tri& Epa\\
\midrule
1.0 & 1.96 & 1.96 & 1.96 & 1.96 & 1.95 & 1.96\\
1.2 & 2.24 & 2.01 & 2.03 & 2.23 & 2.03 & 2.05\\
1.4 & 2.33 & 2.05 & 2.08 & 2.33 & 2.08 & 2.11\\
1.6 & 2.40 & 2.09 & 2.12 & 2.39 & 2.12 & 2.15\\
1.8 & 2.45 & 2.11 & 2.15 & 2.44 & 2.16 & 2.19\\
2   & 2.48 & 2.14 & 2.17 & 2.48 & 2.18 & 2.22\\
3   & 2.60 & 2.22 & 2.27 & 2.60 & 2.27 & 2.32\\
4   & 2.66 & 2.26 & 2.31 & 2.66 & 2.32 & 2.37\\
5   & 2.70 & 2.30 & 2.35 & 2.71 & 2.35 & 2.41\\
6   & 2.73 & 2.32 & 2.37 & 2.73 & 2.37 & 2.43\\
7   & 2.75 & 2.34 & 2.39 & 2.76 & 2.39 & 2.45\\
8   & 2.77 & 2.35 & 2.41 & 2.78 & 2.41 & 2.47\\
9   & 2.79 & 2.37 & 2.42 & 2.79 & 2.43 & 2.48\\
10  & 2.80 & 2.38 & 2.44 & 2.81 & 2.44 & 2.50\\
20  & 2.89 & 2.45 & 2.51 & 2.89 & 2.52 & 2.58\\
50  & 2.97 & 2.53 & 2.59 & 2.98 & 2.60 & 2.66\\
100 & 3.02 & 2.57 & 2.64 & 3.02 & 2.65 & 2.71
\end{tabular}
\caption{ Critical values $\cvt{0.95}(\overline{h}/\underline{h},k)$ for
  level-5\% tests for the Uniform (Unif, $k(u)=\frac{1}{2}I(\abs{u}\leq 1)$),
  Triangular (Tri, $(1-\abs{u})I(\abs{u}\leq 1)$) and Epanechnikov (Epa,
  $3/4(1-u^{2})I(\abs{u}\leq 1)$) kernels. ``NW / LL (int)''
  refers to Nadaraya-Watson (local constant) regression in the interior or at a
  boundary, as well as local linear regression in the interior. ``LL
  (boundary)'' refers to local linear regression at a boundary (including
  regression discontinuity designs).}\label{tab:cvs-condensed}
\end{table}

\clearpage

\begin{figure}[p]
  \centering
    \input{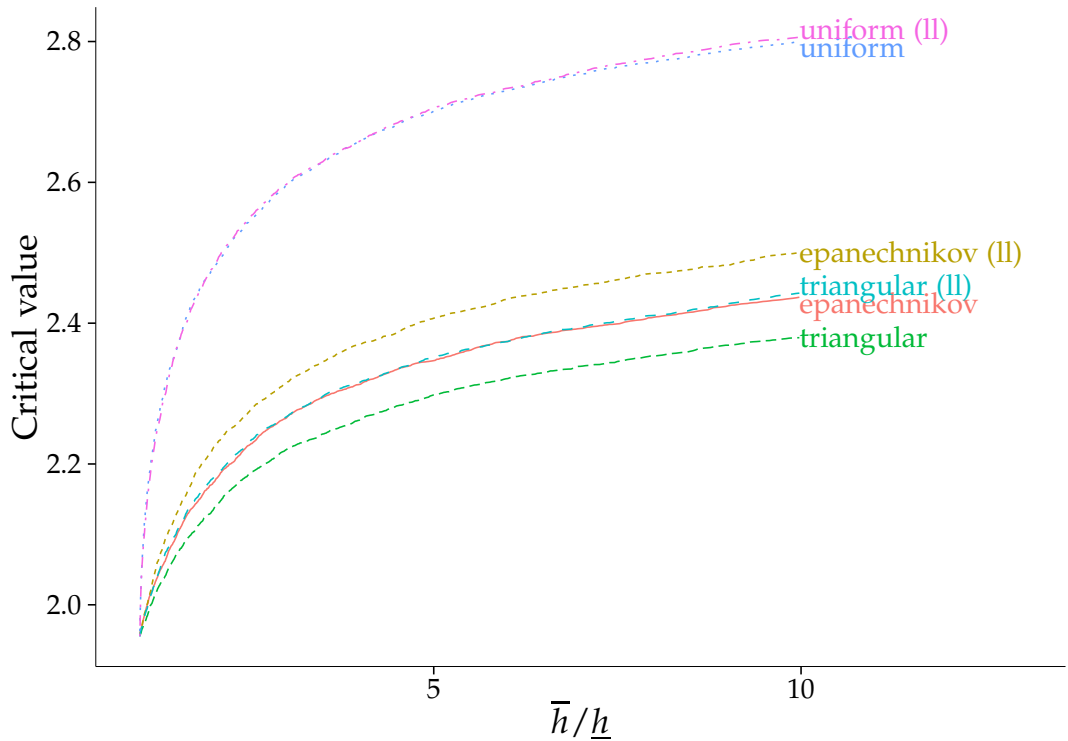}
    \caption{Two-sided 95\% critical values for different kernels. ``uniform'',
      ``triangular'' and ``epanechnikov'' refer to Nadaraya-Watson (local
      constant) regression in the interior or at a boundary as well as local
      linear regression in the interior. ``uniform (ll)'', ``triangular (ll)''
      and ``epanechnikov (ll)'' refer to local linear regression at a boundary.}\label{fig:cvs-twosided}
\end{figure}

\begin{figure}[p]
  \centering
    \input{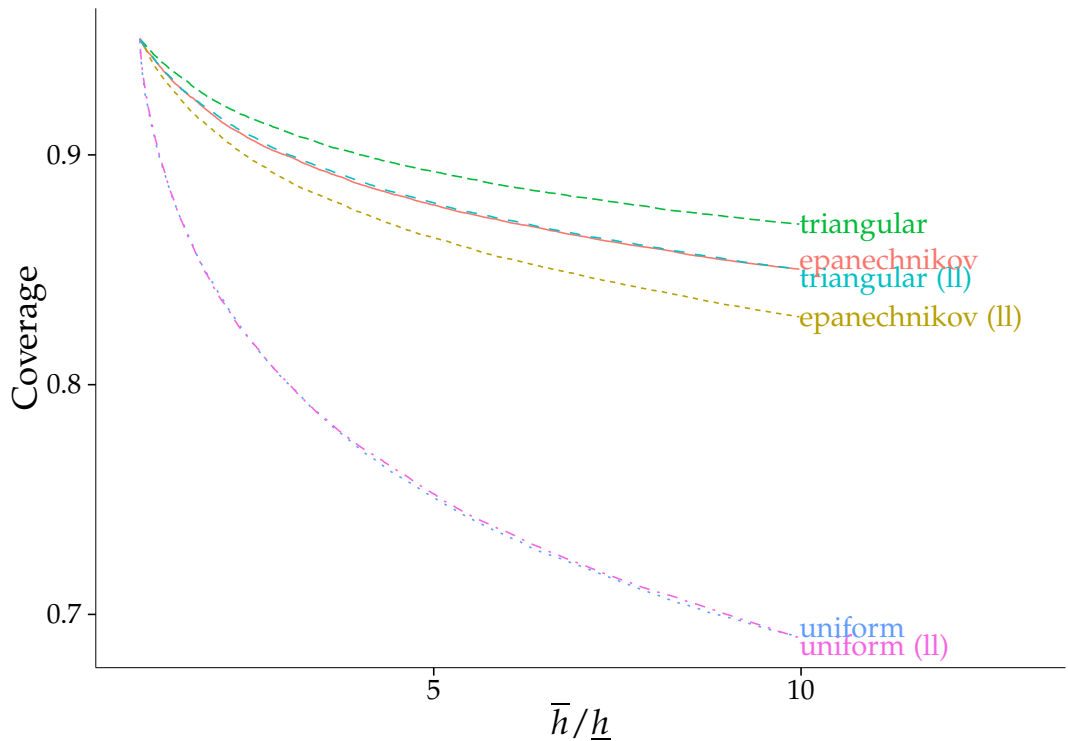}
  \caption{Coverage of unadjusted 95\% confidence bands (i.e.~using critical
    values equal to 1.96) for different kernels. ``uniform'', ``triangular'' and
    ``epanechnikov'' refer to Nadaraya-Watson (local constant) regression in the
    interior or at a boundary as well as local linear regression in the
    interior. ``uniform (ll)'', ``triangular (ll)'' and ``epanechnikov (ll)''
    refer to local linear regression at a boundary.}\label{fig:coverage-twosided}
\end{figure}

\begin{figure}[p]
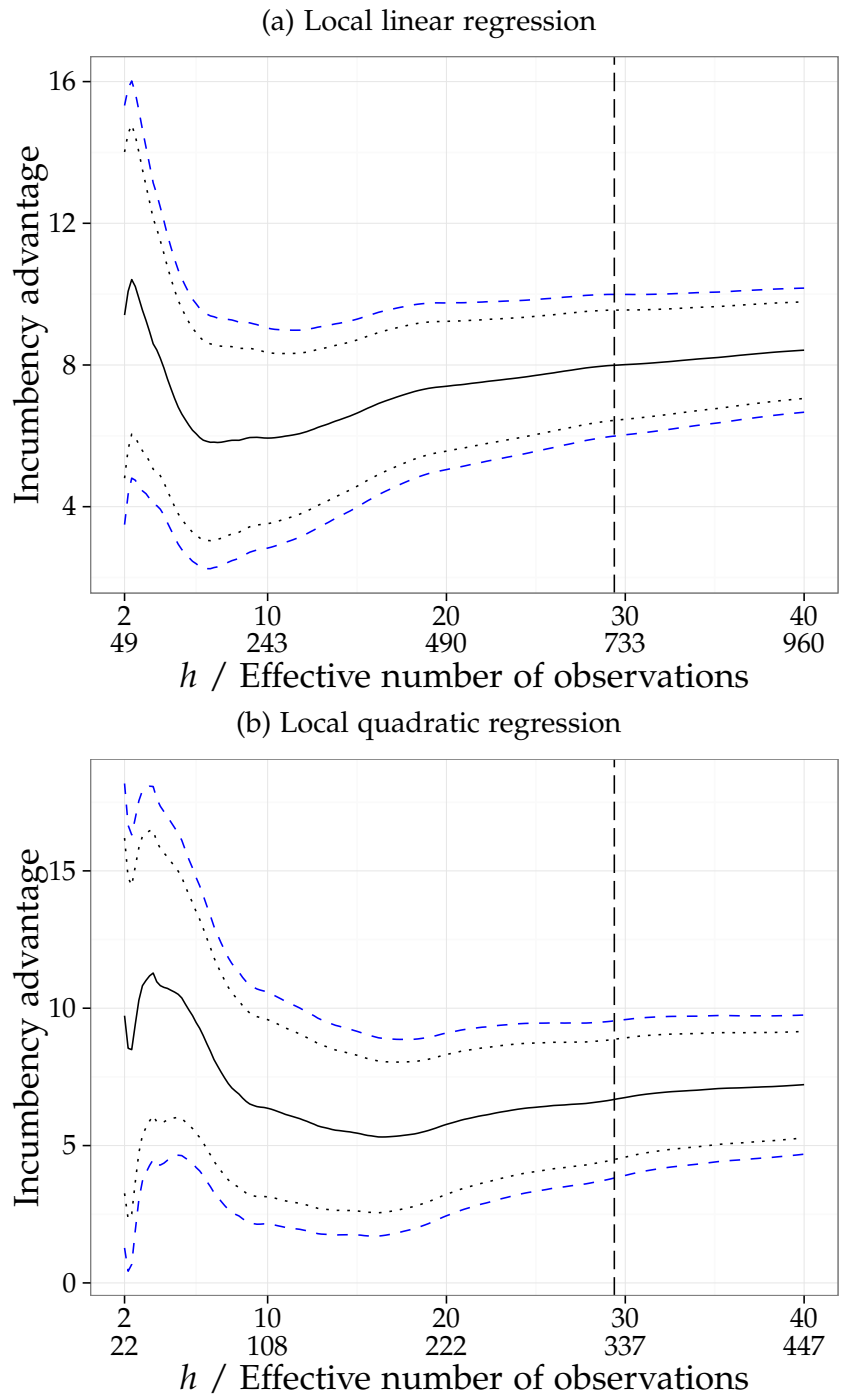

  \begin{minipage}[h]{0.97\linewidth}
    \mbox{(a) Local linear regression}
    \centering

    \input{./LeeDataN2-40.tex}
  \end{minipage}

  \begin{minipage}[h]{0.97\linewidth}
    \mbox{(b) Local quadratic regression}
    \centering

  \input{./LeeDataNQ2-40.tex}
  \end{minipage}

  \caption{Effect of incumbency on percentage vote share in the next election.
    Data are from \citet{lee08}. Local linear (panel (a)) and local quadratic
    (panel (b)) regression with triangular kernel. Point estimate
    $\hat{\theta}(h)$ (solid line), pointwise (dotted), and uniform (dashed)
    confidence bands as function of the bandwidth $h$. The range of bandwidths
    plotted is $(0.02,0.40)$, so that $\overline{h}/\underline{h}=20$, and the
    adjusted critical value is 2.52 (local linear) and 2.56 (local
    quadratic). Vertical dashed line corresponds to estimates using
    \citet{imbens_optimal_2012} bandwidth.}\label{fig:lee-example}
\end{figure}

\begin{figure}[p]
  \begin{minipage}[h]{0.97\linewidth}
    \mbox{(a) Local linear regression}
    \centering

    \input{./Prog-fl0.1-2.tex}
  \end{minipage}

  \begin{minipage}[h]{0.97\linewidth}
    \mbox{(b) Local quadratic regression}
    \centering

  \input{./Prog-fq0.1-2.tex}
  \end{minipage}

  \caption{Effect of the Oportunidades cash transfer program on food
    consumption. Data are from \citet{cct14}. Local linear (panel (a)) and local
    quadratic (panel (b)) regression with triangular kernel. Point estimate
    $\hat{\theta}(h)$ (solid line), pointwise (dotted), and uniform (dashed)
    confidence bands as function of the bandwidth $h$. The range of bandwidths
    plotted is $(0.1,2)$, so that $\overline{h}/\underline{h}=20$, and the
    adjusted critical value is 2.52 (local linear) and 2.56 (local
    quadratic). Vertical dashed line corresponds to estimates using
    \citet{imbens_optimal_2012} bandwidth. }\label{fig:food-example}
\end{figure}

\begin{figure}[p]
  \begin{minipage}[h]{0.97\linewidth}
    \mbox{(a) Local linear regression}
    \centering

    \input{./Prog-nl0.1-2.tex}
  \end{minipage}

  \begin{minipage}[h]{0.97\linewidth}
    \mbox{(b) Local quadratic regression}
    \centering

  \input{./Prog-nq0.1-2.tex}
  \end{minipage}

  \caption{Effect of the Oportunidades cash transfer program on non-food
    consumption. Data are from \citet{cct14}. Local linear (panel (a)) and local
    quadratic (panel (b)) regression with triangular kernel. Point estimate
    $\hat{\theta}(h)$ (solid line), pointwise (dotted), and uniform (dashed)
    confidence bands as function of the bandwidth $h$. The range of bandwidths
    plotted is $(0.1,2)$, so that $\overline{h}/\underline{h}=20$, and the
    adjusted critical value is 2.52 (local linear) and 2.56 (local
    quadratic). Vertical dashed line corresponds to estimates using
    \citet{imbens_optimal_2012} bandwidth. }\label{fig:nfood-example}
\end{figure}

\begin{figure}[p]
    \centering
\begin{tikzpicture}[x=1pt,y=1pt]
\definecolor{fillColor}{RGB}{255,255,255}
\path[use as bounding box,fill=fillColor,fill opacity=0.00] (0,0) rectangle (325.21,252.94);
\begin{scope}
\path[clip] (  0.00,  0.00) rectangle (325.21,252.94);
\definecolor{drawColor}{RGB}{255,255,255}
\definecolor{fillColor}{RGB}{255,255,255}

\path[draw=drawColor,line width= 0.6pt,line join=round,line cap=round,fill=fillColor] (  0.00,  0.00) rectangle (325.21,252.94);
\end{scope}
\begin{scope}
\path[clip] ( 48.73, 43.84) rectangle (319.21,246.94);
\definecolor{fillColor}{RGB}{255,255,255}

\path[fill=fillColor] ( 48.73, 43.84) rectangle (319.21,246.95);
\definecolor{drawColor}{gray}{0.98}

\path[draw=drawColor,line width= 0.6pt,line join=round] ( 48.73, 94.39) --
	(319.21, 94.39);

\path[draw=drawColor,line width= 0.6pt,line join=round] ( 48.73,147.70) --
	(319.21,147.70);

\path[draw=drawColor,line width= 0.6pt,line join=round] ( 48.73,201.01) --
	(319.21,201.01);

\path[draw=drawColor,line width= 0.6pt,line join=round] ( 91.76, 43.84) --
	( 91.76,246.94);

\path[draw=drawColor,line width= 0.6pt,line join=round] (153.23, 43.84) --
	(153.23,246.94);

\path[draw=drawColor,line width= 0.6pt,line join=round] (214.71, 43.84) --
	(214.71,246.94);

\path[draw=drawColor,line width= 0.6pt,line join=round] (276.18, 43.84) --
	(276.18,246.94);
\definecolor{drawColor}{gray}{0.90}

\path[draw=drawColor,line width= 0.2pt,line join=round] ( 48.73, 67.73) --
	(319.21, 67.73);

\path[draw=drawColor,line width= 0.2pt,line join=round] ( 48.73,121.04) --
	(319.21,121.04);

\path[draw=drawColor,line width= 0.2pt,line join=round] ( 48.73,174.35) --
	(319.21,174.35);

\path[draw=drawColor,line width= 0.2pt,line join=round] ( 48.73,227.66) --
	(319.21,227.66);

\path[draw=drawColor,line width= 0.2pt,line join=round] ( 61.02, 43.84) --
	( 61.02,246.94);

\path[draw=drawColor,line width= 0.2pt,line join=round] (122.50, 43.84) --
	(122.50,246.94);

\path[draw=drawColor,line width= 0.2pt,line join=round] (183.97, 43.84) --
	(183.97,246.94);

\path[draw=drawColor,line width= 0.2pt,line join=round] (245.45, 43.84) --
	(245.45,246.94);

\path[draw=drawColor,line width= 0.2pt,line join=round] (306.92, 43.84) --
	(306.92,246.94);
\definecolor{drawColor}{RGB}{0,0,0}

\path[draw=drawColor,line width= 0.6pt,line join=round] ( 61.02,140.85) --
	( 73.32,140.87) --
	( 85.61,142.02) --
	( 97.91,141.13) --
	(110.20,141.41) --
	(122.50,148.29) --
	(134.79,139.72) --
	(147.09,150.89) --
	(159.38,149.34) --
	(171.68,156.38) --
	(183.97,155.72) --
	(196.27,144.29) --
	(208.56,149.91) --
	(220.86,141.58) --
	(233.15,142.17) --
	(245.45,148.09) --
	(257.74,155.00) --
	(270.04,152.89) --
	(282.33,144.71) --
	(294.63,139.32) --
	(306.92,140.30);

\path[draw=drawColor,line width= 0.6pt,dash pattern=on 1pt off 3pt ,line join=round] ( 61.02,210.29) --
	( 73.32,210.05) --
	( 85.61,211.53) --
	( 97.91,209.29) --
	(110.20,209.00) --
	(122.50,216.35) --
	(134.79,206.46) --
	(147.09,215.17) --
	(159.38,215.24) --
	(171.68,220.72) --
	(183.97,220.03) --
	(196.27,206.64) --
	(208.56,213.52) --
	(220.86,203.38) --
	(233.15,204.13) --
	(245.45,209.28) --
	(257.74,215.63) --
	(270.04,213.09) --
	(282.33,202.88) --
	(294.63,200.71) --
	(306.92,199.96);

\path[draw=drawColor,line width= 0.6pt,dash pattern=on 1pt off 3pt ,line join=round] ( 61.02, 71.41) --
	( 73.32, 71.70) --
	( 85.61, 72.51) --
	( 97.91, 72.97) --
	(110.20, 73.82) --
	(122.50, 80.23) --
	(134.79, 72.98) --
	(147.09, 86.60) --
	(159.38, 83.45) --
	(171.68, 92.03) --
	(183.97, 91.40) --
	(196.27, 81.93) --
	(208.56, 86.31) --
	(220.86, 79.78) --
	(233.15, 80.21) --
	(245.45, 86.91) --
	(257.74, 94.36) --
	(270.04, 92.69) --
	(282.33, 86.54) --
	(294.63, 77.93) --
	(306.92, 80.65);
\definecolor{drawColor}{RGB}{0,0,255}

\path[draw=drawColor,line width= 0.6pt,dash pattern=on 4pt off 4pt ,line join=round] ( 61.02,228.62) --
	( 73.32,228.31) --
	( 85.61,229.88) --
	( 97.91,227.28) --
	(110.20,226.84) --
	(122.50,234.32) --
	(134.79,224.08) --
	(147.09,232.15) --
	(159.38,232.63) --
	(171.68,237.71) --
	(183.97,237.01) --
	(196.27,223.11) --
	(208.56,230.31) --
	(220.86,219.70) --
	(233.15,220.49) --
	(245.45,225.43) --
	(257.74,231.64) --
	(270.04,228.98) --
	(282.33,218.24) --
	(294.63,216.92) --
	(306.92,215.71);

\path[draw=drawColor,line width= 0.6pt,dash pattern=on 4pt off 4pt ,line join=round] ( 61.02, 53.08) --
	( 73.32, 53.43) --
	( 85.61, 54.16) --
	( 97.91, 54.98) --
	(110.20, 55.98) --
	(122.50, 62.26) --
	(134.79, 55.36) --
	(147.09, 69.63) --
	(159.38, 66.05) --
	(171.68, 75.04) --
	(183.97, 74.42) --
	(196.27, 65.47) --
	(208.56, 69.51) --
	(220.86, 63.47) --
	(233.15, 63.85) --
	(245.45, 70.75) --
	(257.74, 78.35) --
	(270.04, 76.80) --
	(282.33, 71.18) --
	(294.63, 61.72) --
	(306.92, 64.90);
\definecolor{drawColor}{gray}{0.50}

\path[draw=drawColor,line width= 0.6pt,line join=round,line cap=round] ( 48.73, 43.84) rectangle (319.21,246.95);
\end{scope}
\begin{scope}
\path[clip] (  0.00,  0.00) rectangle (325.21,252.94);
\definecolor{drawColor}{RGB}{0,0,0}

\node[text=drawColor,anchor=base east,inner sep=0pt, outer sep=0pt, scale=  0.96] at ( 43.33, 64.42) {0.025};

\node[text=drawColor,anchor=base east,inner sep=0pt, outer sep=0pt, scale=  0.96] at ( 43.33,117.74) {0.050};

\node[text=drawColor,anchor=base east,inner sep=0pt, outer sep=0pt, scale=  0.96] at ( 43.33,171.05) {0.075};

\node[text=drawColor,anchor=base east,inner sep=0pt, outer sep=0pt, scale=  0.96] at ( 43.33,224.36) {0.100};
\end{scope}
\begin{scope}
\path[clip] (  0.00,  0.00) rectangle (325.21,252.94);
\definecolor{drawColor}{RGB}{0,0,0}

\path[draw=drawColor,line width= 0.6pt,line join=round] ( 45.73, 67.73) --
	( 48.73, 67.73);

\path[draw=drawColor,line width= 0.6pt,line join=round] ( 45.73,121.04) --
	( 48.73,121.04);

\path[draw=drawColor,line width= 0.6pt,line join=round] ( 45.73,174.35) --
	( 48.73,174.35);

\path[draw=drawColor,line width= 0.6pt,line join=round] ( 45.73,227.66) --
	( 48.73,227.66);
\end{scope}
\begin{scope}
\path[clip] (  0.00,  0.00) rectangle (325.21,252.94);
\definecolor{drawColor}{RGB}{0,0,0}

\path[draw=drawColor,line width= 0.6pt,line join=round] ( 61.02, 40.84) --
	( 61.02, 43.84);

\path[draw=drawColor,line width= 0.6pt,line join=round] (122.50, 40.84) --
	(122.50, 43.84);

\path[draw=drawColor,line width= 0.6pt,line join=round] (183.97, 40.84) --
	(183.97, 43.84);

\path[draw=drawColor,line width= 0.6pt,line join=round] (245.45, 40.84) --
	(245.45, 43.84);

\path[draw=drawColor,line width= 0.6pt,line join=round] (306.92, 40.84) --
	(306.92, 43.84);
\end{scope}
\begin{scope}
\path[clip] (  0.00,  0.00) rectangle (325.21,252.94);
\definecolor{drawColor}{RGB}{0,0,0}

\node[text=drawColor,anchor=base,inner sep=0pt, outer sep=0pt, scale=  0.96] at ( 61.02, 31.83) {0};

\node[text=drawColor,anchor=base,inner sep=0pt, outer sep=0pt, scale=  0.96] at ( 61.02, 21.46) {5735};

\node[text=drawColor,anchor=base,inner sep=0pt, outer sep=0pt, scale=  0.96] at (122.50, 31.83) {0.025};

\node[text=drawColor,anchor=base,inner sep=0pt, outer sep=0pt, scale=  0.96] at (122.50, 21.46) {5607};

\node[text=drawColor,anchor=base,inner sep=0pt, outer sep=0pt, scale=  0.96] at (183.97, 31.83) {0.05};

\node[text=drawColor,anchor=base,inner sep=0pt, outer sep=0pt, scale=  0.96] at (183.97, 21.46) {5336};

\node[text=drawColor,anchor=base,inner sep=0pt, outer sep=0pt, scale=  0.96] at (245.45, 31.83) {0.075};

\node[text=drawColor,anchor=base,inner sep=0pt, outer sep=0pt, scale=  0.96] at (245.45, 21.46) {5020};

\node[text=drawColor,anchor=base,inner sep=0pt, outer sep=0pt, scale=  0.96] at (306.92, 31.83) {0.1};

\node[text=drawColor,anchor=base,inner sep=0pt, outer sep=0pt, scale=  0.96] at (306.92, 21.46) {4728};
\end{scope}
\begin{scope}
\path[clip] (  0.00,  0.00) rectangle (325.21,252.94);
\definecolor{drawColor}{RGB}{0,0,0}

\node[text=drawColor,anchor=base,inner sep=0pt, outer sep=0pt, scale=  1.20] at (183.97,  8.40) {$h$ / Number of observations};
\end{scope}
\begin{scope}
\path[clip] (  0.00,  0.00) rectangle (325.21,252.94);
\definecolor{drawColor}{RGB}{0,0,0}

\node[text=drawColor,rotate= 90.00,anchor=base,inner sep=0pt, outer sep=0pt, scale=  1.20] at ( 16.66,145.39) {Effect of RHC on 30-day mortality};
\end{scope}
\end{tikzpicture}
    \caption{Effect of Right Heart Catheterization on 30-day morality. Data are
      from\citet{connors96rhc}. Point estimate $\hat{\theta}(h)$ (solid line),
      pointwise (dotted), and uniform (dashed) confidence bands as function of
      the trimming parameter $h$, plotted over the range
      $h\in[0,0.1]$.}\label{fig:rhc}
\end{figure}
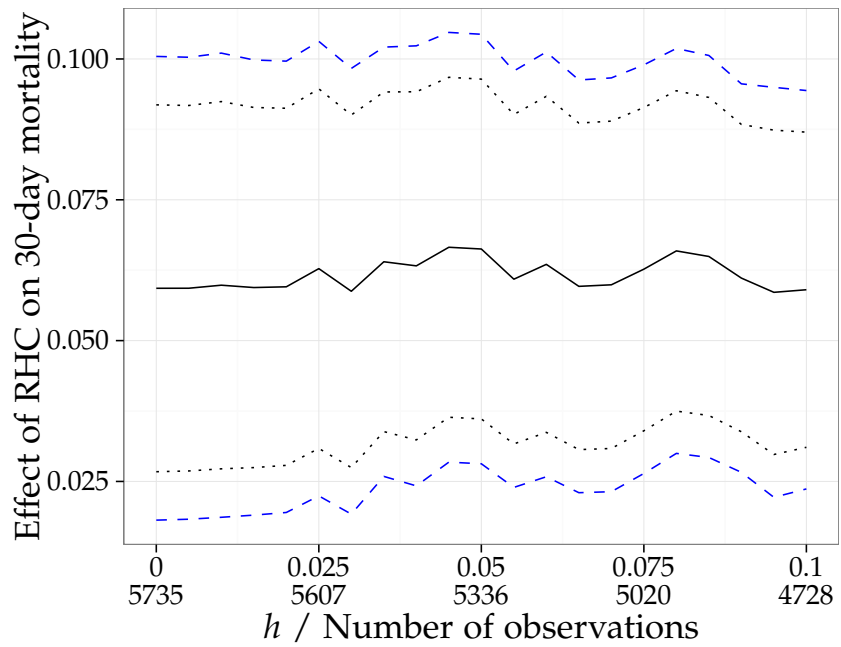

\end{document}


\maketitle

This supplement is organized as follows. Section~\ref{aux_sec_sup} contains
auxiliary results used in Appendix~\ref{sec:proof-main-result} of the main text.
Section~\ref{loc_poly_bound_sec} contains auxiliary results on local polynomial
regression. Section~\ref{applications_sec_supp} proves theorems in
Appendix~\ref{sec:techn-deta-appl}. Section~\ref{sec:critical-values} derives
critical values for one-sided confidence intervals and gives tables of one- and
two-sided critical values. Finally, Section~\ref{mc_sec} presents the results of
a Monte Carlo study.

The following additional notation, which is also used in the appendix in the
main text, is used throughout this supplement. For a sample $\{Z_i\}_{i=1}^n$
and a function $f$ on the sample space,
$E_{n}f(Z_i)=\frac{1}{n}\sum_{i=1}^{n}f(Z_i)$ denotes the sample mean, and
$\mathbb{G}_{n}f(Z_i)=\sqrt{n}(E_n-E)f(Z_i)=\sqrt{n}[E_{n}f(Z_i)-Ef(Z_i)]$
denotes the empirical process. We use $t\vee t'$ and $t\wedge t'$ to denote
elementwise maximum and minimum, respectively. We use $e_k$ to denote the $k$th
basis vector in Euclidean space (where the dimension of the space is clear from
context).

\section{Auxiliary Results}\label{aux_sec_sup}
\allowdisplaybreaks
This section contains auxiliary results that are used in the proof of
Theorem~\ref{highlevel_asym_dist_thm} in Appendix~\ref{sec:proof-main-result} of
the main text, and in the proofs of the results from
Appendix~\ref{sec:techn-deta-appl} of the main text given later in this
supplement.

\subsection{Tail Bounds for Empirical Processes}\label{tail_bounds_sec}

We state some tail bounds based on an inequality of \citet{talagrand_new_1996}
and other empirical process results. Throughout this section, we consider a
class of functions $\mathcal{G}$ on the sample space $\mathbb{R}^{d_Z}$ with an
i.i.d.\ sample of random variables $Z_1,\ldots, Z_n$. We assume throughout that
$\mathcal{G}$ has a polynomial covering number in the sense that, for some
$B,W$, $N_1(\delta,Q,\mathcal{G})\le B\varepsilon^{-W}$ for all finitely
discrete probability measures $Q$, where $N_1$ is defined in, e.g.,
\citet{pollard_convergence_1984}, p. 25.

\begin{lemma}\label{talagrand_ineq_lemma}
  Let $\tilde{\mathcal{G}}$ be a subset of $\mathcal{G}$ such that, for some
  envelope function $G$ and constant $\overline g$, $|g(Z_i)|\le G(Z_i)\le
  \overline g$ a.s.\ for all $g\in\tilde{\mathcal{G}}$. Then, for some constant
  $K$ that depends only on $\mathcal{G}$,
\begin{equation*}
P\left(
\sup_{g\in\tilde{\mathcal{G}}} \left|\mathbb{G}_{n}g(Z_i)\right|
  \ge K \sqrt{E[G(Z_i)^2]} + t \right)
  \le K\exp\left(-\frac{1}{K}\frac{t^2}{E[G(Z_i)^2]+\overline g \left\{\sqrt{E[G(Z_i)^2]}+t\right\}/\sqrt{n}}
\right)
\end{equation*}
\end{lemma}
\begin{proof}
  We apply a result of \citet{talagrand_new_1996} as stated in Equation~(3) of
  \citet{massart_about_2000}. The quantity $v$ from that version of the bound
  is, in our setting, given by $v=E\sup_{g\in\tilde{\mathcal{G}}} \sum_{i=1}^n
  [g(Z_i)-Eg(Z_i)]^2$ which, as shown in \citet[p.~882]{massart_about_2000}, is
  bounded by \citep[see also][]{klein_concentration_2005}
  \begin{equation*}
    n\sup_{g\in\tilde{\mathcal{G}}} E\{[g(Z_i)-Eg(Z_i)]^2\} +32
    \overline{g} E\sup_{g\in\tilde{\mathcal{G}}} \sum_{i=1}^n [g(Z_i)-Eg(Z_i)].
\end{equation*}
By Theorem 2.14.1 in \citet{van_der_vaart_weak_1996},
\begin{equation}\label{vw_2.14.1_bound}
E\sup_{g\in\tilde{\mathcal{G}}} \sum_{i=1}^n [g(Z_i)-Eg(Z_i)]\le \sqrt{n} K_1 \sqrt{E[G(Z_i)]^2},
\end{equation}
for a constant $K_1$ that depends only on $\mathcal{G}$.
Combined with the fact that $E\{[g(Z_i)-Eg(Z_i)]^2\}\le E[G(Z_i)^2]$, this gives the bound
\begin{equation*}
v\le n E[G(Z_i)^2] + 32\overline g K_1\sqrt{n} \sqrt{E[G(Z_i)]^2}.
\end{equation*}
Applying the bound from equation (3) of \citet{massart_about_2000} with these quantities gives
\begin{multline*}
P\left(\sqrt{n} \sup_{g\in\tilde{\mathcal{G}}} \mathbb{G}_n g(Z_i)
  \ge K_1 \sqrt{n}\sqrt{E[G(Z_i)]^2}+ r\right)  \\
\le P\left(\sqrt{n} \sup_{g\in\tilde{\mathcal{G}}} \mathbb{G}_n g(Z_i)
  \ge E\sup_{g\in\tilde{\mathcal{G}}} \sum_{i=1}^n [g(Z_i)-Eg(Z_i)]+r\right)  \\
\le K_2 \exp\left(-\frac{1}{K_2}\frac{r^2}{n E[G(Z_i)^2] + 32\overline g K_1\sqrt{n} \sqrt{E[G(Z_i)]^2} +\overline g r}\right),
\end{multline*}
where the first inequality follows from (\ref{vw_2.14.1_bound}).
Substituting $r=\sqrt{n}t$ gives
\begin{multline*}
P\left(\sup_{g\in\tilde{\mathcal{G}}} \mathbb{G}_n g(Z_i)
  \ge K_1 \sqrt{E[G(Z_i)]^2}+ t\right)  \\
\le K_2 \exp\left(-\frac{1}{K_2}\frac{t^2}{E[G(Z_i)^2] + 32\overline g K_1 \sqrt{E[G(Z_i)]^2}/\sqrt{n} +\overline g t/\sqrt{n}}\right),
\end{multline*}
which gives the result after noting that replacing $K_1$ on the left hand side
as well as $K_2$ and $32 K_1K_2$ on the right hand side with a larger constant
$K$ decreases the left hand side and increases the right hand side, and applying
a symmetric bound to $\inf_{g\in\tilde{\mathcal{G}}} \mathbb{G}_n g(Z_i)$.
\end{proof}

Lemma~\ref{talagrand_ineq_lemma} gives good bounds for $t$ just larger than
$\sqrt{E[G(Z_i)]^2}$, so long as $\sqrt{E[G(Z_i)]^2}/\sqrt{n}$ is small relative
to $E[G(Z_i)]^2$ (i.e.\ so long as $E[G(Z_i)]^2n$ is large). We now state a
version of this result that is specialized to this case.

\begin{lemma}\label{talagrand_var_bound_lemma}
  Let $\tilde{\mathcal{G}}$ be a subset of $\mathcal{G}$ such that, for some
  envelope function $G$ and constant $\overline g$, $|g(Z_i)|\le G(Z_i)\le
  \overline g$ a.s.\ for all $g\in\tilde{\mathcal{G}}$. Then, for some constant
  $K$ that depends only on $\mathcal{G}$,
\begin{equation*}
  P\left(
  \sup_{g\in\tilde{\mathcal{G}}} \left|\mathbb{G}_{n}g(Z_i)\right|
  \ge \sqrt{V} a \right)
  \le K\exp\left(-\frac{a^2}{K}\right)
\end{equation*}
for all $V\ge E[G(Z_i)^2]$ and $a>0$ with $a+1\le \sqrt{V}\sqrt{n}/\overline g$.
\end{lemma}
\begin{proof}
  Substituting $t=r V^{1/2}$ into the bound from
  Lemma~\ref{talagrand_ineq_lemma} gives, letting $K_1$ be the constant $K$ from
  that lemma,
\begin{align*}
P\left(\sup_{g\in\tilde{\mathcal{G}}} \left|\mathbb{G}_{n}g(Z_i)\right|
  \ge (K_1+r)V^{1/2}  \right)
  \le K_1\exp\left(-\frac{1}{K_1}\frac{r^2 V}{V+\overline g \left\{V^{1/2}+r V^{1/2}\right\}/\sqrt{n}}
\right).
\end{align*}
For $\overline g(1+r)\le \sqrt{n}V^{1/2}$, this is bounded by
$K_1\exp\left(-\frac{r^2}{2K_1}\right)$.  Setting $a=K_1+r$ and noting that $K_1\exp\left(-\frac{(a-K_1)^2}{2K_1}\right)\le K_2\exp\left(-\frac{a^2}{K_2}\right)$ for a large enough constant $K_2$
(and that $\overline g(1+a)\le \sqrt{n}V^{1/2}$ implies $\overline g(1+a-K_1)\le \sqrt{n}V^{1/2}$) gives the result.
\end{proof}

\subsection{Tail Bounds for Kernel Estimators}\label{kern_tail_sec}

We specialize some of the results of Section~\ref{tail_bounds_sec} to our
setting. We are interested in functions of the form $g(x,w)=f(w,h,t)k(x/h)$,
where $h$ varies over positive real numbers and $t$ varies over some index set
$T$.

We assume throughout the section that $k(x)$ is a bounded kernel function with support $[-A,A]$, with $k(x)\le B_k<\infty$ for all $k$.
We also assume that $X_i$ is a real valued random variable with with a density $f_X(x)$ with $f_X(x)\le \overline f_X<\infty$ all $x$.

\begin{lemma}\label{kern_bound_lemma}
Suppose that $\{(x,w)\mapsto f(w,h,t)k(x/h)|0\le h\le \overline h,t\in T\}$ is contained in some larger class $\mathcal{G}$ with polynomial covering number, and that, for some constant $B_f$, $|f(W_i,h,t)k(X_i/h)|\le B_f$ for all $h\le \overline h$ and $t\in T$ with probability one.  Then, for some constant $K$ that depends only on $\mathcal{G}$,
\begin{align*}
  P\left(\sup_{0\le h\le \overline h,t\in T} |\mathbb{G}_{n}f(W_i,h,t)k(X_i/h)|
    \ge a B_{f}A^{1/2}\overline f_X^{1/2}\overline h^{1/2}\right) \le
  K\exp(-\frac{a^2}{K})
\end{align*}
for all $a>0$ with $a+1\le A^{1/2}\overline f_X^{1/2}\overline h^{1/2}n^{1/2}$.
\end{lemma}
\begin{proof}
  The result follows from Lemma~\ref{talagrand_var_bound_lemma}, since
  $B_{f}I(|X_i|\le A \overline h)$ is an envelope function for
  $f(W_i,h,t)k(X_i/h)$ as $h$ and $t$ vary over this set.
\end{proof}

\begin{lemma}\label{lil_bound_lemma_supp}
  Suppose that the conditions of Lemma~\ref{kern_bound_lemma} hold, and let
  $a(h)=2\sqrt{K\log\log (1/h)}$ where $K$ is the constant from
  Lemma~\ref{kern_bound_lemma}. Then, for a constant $\varepsilon>0$ that
  depends only on $K$, $A$ and $\overline f_X$,
\begin{align*}
&P\left(|\mathbb{G}_{n}f(W_i,h,t)k(X_i/h)|
  \ge a(h) h^{1/2} B_{f}A^{1/2}\overline f_X^{1/2} \text{ some $(\log\log n)/(\varepsilon n)\le h\le \overline h$, $t\in T$}\right)  \\
&\le K(\log 2)^{-2}\sum_{(2\overline h)^{-1}\le 2^k\le \infty} k^{-2}.
\end{align*}
\end{lemma}
\begin{proof}
  Let $\mathcal{H}^k=(2^{-(k+1)},2^{-k})$. Applying Lemma~\ref{kern_bound_lemma}
  to this set, we have
\begin{align*}
&P\left(|\mathbb{G}_{n}f(W_i,h,t)k(X_i/h)|
  \ge a(h) h^{1/2} B_{f}A^{1/2}\overline f_X^{1/2} \text{ some $h\in \mathcal{H}^k$, $t\in T$}\right)  \\
&\le P\left(\sup_{0\le h\le 2^k,t\in T} |\mathbb{G}_{n}f(W_i,h,t)k(X_i/h)|
  \ge a(2^{-k}) 2^{-(k+1)/2} B_{f}A^{1/2}\overline f_X^{1/2}\right)  \\
&\le K\exp\left(-\frac{[a(2^{-k}) 2^{-1/2}]^2}{K}\right)
=K\exp\left(-2\log\log 2^{k}\right)
=K\exp\left(-2\log(k\log 2)\right)
=K[k\log 2]^{-2}
\end{align*}
so long as $2^{-1/2}a(2^{-k})+1\le A^{1/2}\overline f_X^{1/2}2^{-k/2}n^{1/2}$, where the first inequality follows since $a(h)\ge a(2^{-k})$ and $h\ge 2^{-(k+1)}$ for $h\in\mathcal{H}^k$.

Now, $2^{-1/2}a(2^{-k})+1\le A^{1/2}\overline f_X^{1/2}2^{-k/2}n^{1/2}$ will
hold iff. $[2^{-1/2}a(2^{-k})+1]2^{k/2}\le A^{1/2}\overline f_X^{1/2}n^{1/2}$.
If $2^{k}\le \varepsilon n/\log\log n$ for some $\varepsilon>0$, we will have
$a(2^{-k})\le 2\sqrt{K\log\log [\varepsilon n/\log\log n]}$, so that
$[2^{-1/2}a(2^{-k})+1]2^{k/2}\le \{2^{-1/2}\cdot 2\sqrt{K\log\log [\varepsilon
  n/\log\log n]}+1\}\sqrt{\varepsilon n/\log\log n}$. For large enough $n$, this
is bounded by $4\sqrt{K\varepsilon n}$, which is less than
$A^{1/2}\overline f_X^{1/2}n^{1/2}$ for $\varepsilon$ small enough as required.

Thus, for $\varepsilon$ defined above,
\begin{multline*}
  P\left(|\mathbb{G}_{n}f(W_i,h,t)k(X_i/h)|
    \ge a(h) h^{1/2} B_{f}A^{1/2}\overline f_X^{1/2} \text{ some $(\log\log n)/(\varepsilon n)\le h\le \overline h$, $t\in T$}\right)  \\
  \le \sum_{(2\overline h)^{-1}\le 2^k\le 2\varepsilon n/\log\log n}
  P\left(\sup_{0\le h\le 2^k,t\in T} |\mathbb{G}_{n}f(W_i,h,t)k(X_i/h)|
    \ge a(2^{-k}) 2^{-(k+1)/2} B_{f}A^{1/2}\overline f_X^{1/2}\right)  \\
  \le K(\log 2)^{-2}\sum_{(2\overline h)^{-1}\le 2^k\le 2\varepsilon n/\log\log
    n} k^{-2},
\end{multline*}
which gives the result.
\end{proof}

Using these bounds, we obtain the following uniform bound on
$\mathbb{G}_{n}f(W_i,h,t)k(X_i/h)$.

\begin{lemma}\label{lil_rate_lemma_supp}
Under the conditions of Lemma~\ref{lil_bound_lemma_supp},
\begin{equation*}
\sup_{(\log\log n)/(\varepsilon n)\le h\le \overline h, t\in T} \frac{|\mathbb{G}_{n}f(W_i,h,t)k(X_i/h)|}{(\log\log h^{-1})^{1/2}h^{1/2}}
=\mathcal{O}_P(1).
\end{equation*}
\end{lemma}
\begin{proof}
Given $\varepsilon>0$, we can apply Lemma~\ref{lil_bound_lemma_supp} to find a $\delta>0$ such that
\begin{equation*}
\sup_{(\log\log n)/(\varepsilon n)\le h\le \delta, t\in T} \frac{|\mathbb{G}_{n}f(W_i,h,t)k(X_i/h)|}{(\log\log h^{-1})^{1/2}h^{1/2}}
<2\sqrt{2K} B_{f}A^{1/2} \overline f_X^{1/2}
\end{equation*}
with probability at least $1-K(\log 2)^{-2}\sum_{(2\delta)^{-1}\le 2^k\le
  \infty}k^{-2}>1-\varepsilon/2$. For this choice of $\delta$,
\begin{equation*}
  \sup_{\delta \le
    h\le \overline h, t\in T} \frac{|\mathbb{G}_{n}f(W_i,h,t)k(X_i/h)|}{(\log\log
    h^{-1})^{1/2}h^{1/2}}=\mathcal{O}_P(1)
\end{equation*}
by Lemma~\ref{kern_bound_lemma}. Thus, choosing $C$ large enough so that $C\ge
2\sqrt{2K} B_{f}A^{1/2} \overline f_X^{1/2}$ and
\begin{equation*}
  \sup_{\delta \le h\le \overline h, t\in T}
  \frac{|\mathbb{G}_{n}f(W_i,h,t)k(X_i/h)|}{(\log\log h^{-1})^{1/2}h^{1/2}}\le C
\end{equation*}
with probability at least $1-\varepsilon/2$ asymptotically, we have
\begin{equation*}
  \sup_{(\log\log n)/(\varepsilon n)\le h\le \overline h, t\in T}
  \frac{|\mathbb{G}_{n}f(W_i,h,t)k(X_i/h)|}{(\log\log h^{-1})^{1/2}h^{1/2}}\le C
\end{equation*}
with probability at least $1-\varepsilon$ asymptotically.
\end{proof}

\subsection{Gaussian Approximation}\label{gauss_approx_sec_supp}

This section proves Theorem~\ref{gauss_approx_thm} in
Appendix~\ref{gauss_approx_sec}, which gives a Gaussian process approximation
for the process $\hat{\mathbb{H}}_n(h)$ defined in that section.

For convenience, we repeat the setup here. We show that
$\frac{1}{\sqrt{h}}\mathbb{G}_n\tilde Y_{i}k(X_i/h)
=\frac{1}{\sqrt{nh}}\sum_{i=1}^n\tilde Y_{i}k(X_i/h)$ is approximated by a
Gaussian process with the same covariance kernel. We consider a general setup
with $\{(\tilde X_i,\tilde Y_i)\}_{i=1}^n$ i.i.d., with $\tilde X_i\ge 0$ a.s.\
such that $\tilde X_i$ has a density $f_{\tilde X}(x)$ on $[0,\overline x]$ for
some $\overline x\ge 0$, with $f_{\tilde X}(x)$ bounded away from zero and
infinity on this set. We assume that $\tilde Y_i$ is bounded almost surely, with
$E(\tilde Y_i|\tilde X_i)=0$ and
$var(\tilde Y_i|\tilde X_i=x)=f_{\tilde{X}}(x)^{-1}$. We assume that the kernel
function $k$ has finite support $[0,A]$ and is differentiable on its support
with bounded derivative. For ease of notation, we assume in this section that
$\int k(u)^2\, du=1$. The result applies to our setup with $\tilde Y_i$ given in
(\ref{tilde_yi_eq}) in Appendix~\ref{sec:proof-main-result} in the main text and
$\tilde X_i$ given by $|X_i|$.


Let
\begin{align*}
\hat{\mathbb{H}}_n(h)=\frac{1}{\sqrt{nh}}\sum_{i=1}^n\tilde Y_{i}k(\tilde X_i/h).
\end{align*}

\newtheorem*{thm:gaus_approx_thm}{Theorem~\ref{gauss_approx_thm}}
\begin{thm:gaus_approx_thm}
Under the conditions above, there exists, for each $n$, a process $\mathbb{H}_n(h)$ such that, conditional on $(\tilde X_1,\ldots,\tilde X_n)$, $\mathbb{H}_n$ is a Gaussian process with covariance kernel
\begin{align*}
cov\left(\mathbb{H}_n(h),\mathbb{H}_n(h')\right)
  =\frac{1}{\sqrt{h h'}}\int k(x/h)k(x/h')\, dx
\end{align*}
and
\begin{align*}
  \sup_{\underline h_n\le h\le \overline x/A} \left|\hat{\mathbb{H}}_n(h)-\mathbb{H}_n(h)\right|
  =\mathcal{O}_P\left(
  (n\underline h_n)^{-1/4}[\log (n \underline h_n)]^{1/2}\right)
\end{align*}
for any sequence $\underline h_n$ with $n\underline h_n/ \log\log \underline h_n^{-1}\to\infty$.
\end{thm:gaus_approx_thm}

We now prove the result. Let
$\hat G(x)=\frac{1}{n}\sum_{\tilde X_i\le x}\tilde Y_i$. With this notation, we
can write the process $\hat{\mathbb{H}}_n(h)$ as
\begin{equation*}
\hat{\mathbb{H}}_n(h)
=\frac{1}{\sqrt{nh}}\sum_{i=1}^n\tilde Y_{i}k(\tilde X_i/h)
=\frac{\sqrt{n}}{\sqrt{h}}\int k(x/h)\, d\hat G(x).
\end{equation*}
Let $\hat g(x)=\frac{1}{n}\sum_{\tilde X_i\le x} f_{\tilde X}(\tilde X_i)^{-1}$.
In Lemma~\ref{sakhanenko_approx_lemma} below, a process $\mathbb{B}_n(t)$ is
constructed that is a Brownian motion conditional on $\tilde X_1,\ldots,
\tilde{X}_n$ such that $\mathbb{B}_n(n\hat g(x))$ is, with high probability
conditional on $\tilde X_1,\ldots,\tilde X_n$, close to $n\hat G(x)$. By showing
that $\hat g(x)$ is close to $x$ with high probability and using properties of
the fluctuation of the Brownian motion, it is then shown that
$\mathbb{B}_n(n\hat g(x))$ can be approximated by $\mathbb{B}_n(nx)$, so that
$\hat{\mathbb{H}}_n(h)$ is approximated by the corresponding process with
$\hat{G}(x)$ replaced by $\mathbb{B}_n(nx)/n$.

Formally, let $\mathbb{B}_n(t)$ be given by the (conditional) Brownian motion in
Lemma~\ref{sakhanenko_approx_lemma} below, and define
\begin{align*}
\mathbb{H}_n(h)=\frac{1}{\sqrt{nh}}\int k(x/h)\, d \mathbb{B}_n(nx).
\end{align*}
Note that $\mathbb{H}_n(h)=\frac{1}{\sqrt{h}}\int k(x/h)\, d \tilde{\mathbb{B}}_n(x)$
(where $\tilde{\mathbb{B}}_n(x)=\mathbb{B}_n(nx)/\sqrt{n}$ is another Brownian motion conditional on $\tilde X_1,\ldots,\tilde X_n$), so that, conditional on $(\tilde X_1,\ldots,\tilde X_n)$, $\mathbb{H}_n$ is a Gaussian process with the desired covariance kernel.

Let $R_{1,n}(x)=n\hat G(x)-\mathbb{B}_n(n\hat g(x))$ and
$R_{2,n}(x)=\mathbb{B}_n(n\hat g(x))-\mathbb{B}_n(nx)$.
Then
\begin{equation*}
\hat{\mathbb{H}}_n(h)-\mathbb{H}_n(h)
=\frac{1}{\sqrt{nh}}\int k(x/h)\, d R_{1,n}(x)
+\frac{1}{\sqrt{nh}}\int k(x/h)\, d R_{2,n}(x).
\end{equation*}
Using the integration by parts formula, we have, for $j=1,2$ and $Ah\le \overline x$,
\begin{equation*}
\frac{1}{\sqrt{nh}}\int k(x/h)\, d R_{j,n}(x)
=\frac{R_{j,n}(Ah)k(A)}{\sqrt{nh}}
  -\frac{1}{\sqrt{nh}}\int_{x=0}^{Ah}R_{j,n}(x)k'(x/h)\frac{1}{h}\, dx
\end{equation*}
The first term is bounded by $\frac{\left| R_{j,n}(Ah)\right|k(A)}{\sqrt{nh}}$, and the second term is bounded by
\begin{equation*}
\frac{A}{\sqrt{nh}}\left(\sup_{0\le x\le Ah}\left|R_{j,n}(x)\right|\right)
\left(\sup_{0\le u\le A}\left|k'(u)\right|\right)
\end{equation*}
\citep[see][for a similar derivation]{bickel_global_1973}. By boundedness of
$k'(u)$, it follows that both terms are bounded by a constant times
$\frac{1}{\sqrt{nh}}\sup_{0\le x\le Ah} \left|R_{j,n}(x)\right|$, so that
\begin{equation*}
  \sup_{\underline h_n\le h\le \overline x/A}\left|\hat{\mathbb{H}}_n(h)
    -\mathbb{H}_n(h)\right|
    \le K\sup_{\underline h_n\le h\le \overline x/A}\sum_{j=1}^2\sup_{0\le x\le Ah} \frac{\left|R_{j,n}(x)\right|}{\sqrt{nh}}
    \le K\sum_{j=1}^2\sup_{0\le x\le \overline x} \frac{\left|R_{j,n}(x)\right|}{\sqrt{n[(x/A)\vee \underline h_n]}}.
\end{equation*}
for some constant $K$.  Thus, the result will follow if we can show that $\sup_{0\le x\le \overline x}\frac{\left|R_{1,n}(x)\right|}{\sqrt{n(x\vee \underline h_n)}}$ and $\sup_{0\le x\le \overline x}\frac{\left|R_{2,n}(x)\right|}{\sqrt{n(x\vee \underline h_n)}}$ converge to zero at the required rate.

We first construct $\mathbb{B}_n(t)$ and show that $\sup_{0\le x\le A\overline
  x/A}\frac{\left|R_{1,n}(x)\right|}{\sqrt{n(x\vee \underline h_n)}}$ converges
to zero quickly enough with this construction, using an approximation of
Sakhanenko. Denote the the empirical cdf of $\tilde X_i$ by
$\hat{F}_{\tilde{X}}(x)=\frac{1}{n}\sum_{i=1}^n I(\tilde X_i\le x)$, and let
$\tilde X_{(k)}$ be the $k$th smallest value of $\tilde X_i$.

\begin{lemma}\label{sakhanenko_approx_lemma}
  Under the conditions of Theorem~\ref{gauss_approx_thm}, one can construct
  variables $Z_1,\ldots,Z_n$ such that $Z_i|(\tilde X_1,\ldots,\tilde X_n)\sim
  N(0,f_{\tilde X}(\tilde X_i)^{-1})$ and
\begin{equation*}
P\left(
\left|\sum_{\tilde X_i\le x}Z_i-\sum_{\tilde X_i\le x}\tilde Y_i\right|> K\log \left[n\hat F_{\tilde X}(x)+2\right] \text{ some } 0\le x\le \overline x  \bigg|\tilde X_1,\ldots,\tilde X_n\right)
\le \varepsilon(K)
\end{equation*}
with probability one,
where $\varepsilon(K)$ is a deterministic function with $\varepsilon(K)\to 0$ as $K\to\infty$.
\end{lemma}
\begin{proof}
Using a result of \citet{sakhanenko_convergence_1985} as stated in
Theorem A of \citet{shao_strong_1995}, we can construct $Z_1,\ldots,Z_n$ such that
\begin{align*}
E\exp\left(\lambda A \sup_{0\le x\le \tilde X_{(k)}} \left|\sum_{\tilde X_i\le x}Z_i-\sum_{\tilde X_i\le x}\tilde Y_i\right|  \bigg|\tilde X_1,\ldots,\tilde X_n\right)
\le 1+\lambda\sum_{\tilde X_i\le \tilde X_{(k)}} f_{\tilde X}(\tilde X_i)^{-1}
\end{align*}
where $A$ is a universal constant and $\lambda$ is any constant such that
$\lambda E[\exp(\lambda |\tilde Y_i|)|\tilde Y_i|^3|\tilde X_i]\le E[\tilde Y_i^2|\tilde X_i]$.  Let $\overline Y$ be a bound for $\tilde Y_i$.  Then
$\lambda E[\exp(\lambda |\tilde Y_i|)|\tilde Y_i|^3|\tilde X_i]
\le \lambda \exp(\lambda \overline Y)\overline Y E[|\tilde Y_i|^2|\tilde X_i]$, so the inequality holds for any $\lambda$ with $\lambda \exp(\lambda \overline Y)\overline Y \le 1$.  From now on, we fix $\lambda>0$ so that this inequality holds.

Letting $\underline f_{\tilde X}$ be a lower bound for $f_{\tilde X}(x)$ over $0\le x\le\overline x$ and applying Markov's inequality, the above bound gives
\begin{align*}
&P\left(\lambda A \sup_{0\le x\le \tilde X_{(k)}} \left|\sum_{\tilde X_i\le x}Z_i-\sum_{\tilde X_i\le x}\tilde Y_i\right|> t  \bigg|\tilde X_1,\ldots,\tilde X_n\right)  \\
&\le \exp(-t)E\exp\left(\lambda A \sup_{0\le x\le \tilde X_{(k)}} \left|\sum_{\tilde X_i\le x}Z_i-\sum_{\tilde X_i\le x}\tilde Y_i\right|  \bigg|\tilde X_1,\ldots,\tilde X_n\right)
\le \exp(-t)(1+\lambda\underline f_{\tilde X}^{-1} k).
\end{align*}
Thus,
\begin{align*}
&P\left(
\left|\sum_{\tilde X_i\le x}Z_i-\sum_{\tilde X_i\le x}\tilde Y_i\right|> K\log \left[\sum_{i=1}^n I(\tilde X_i\le x)+2\right] \text{ some } 0\le x\le \overline x  \bigg|\tilde X_1,\ldots,\tilde X_n\right)  \\
&\le P\left(\sup_{0\le x\le \tilde X_{(k)}} \left|\sum_{\tilde X_i\le x}Z_i-\sum_{\tilde X_i\le x}\tilde Y_i\right|> K\log k \text{ some } 2\le k\le n  \bigg|\tilde X_1,\ldots,\tilde X_n\right)  \\
&\le \sum_{k=2}^n P\left(\lambda A\sup_{0\le x\le \tilde X_{(k)}} \left|\sum_{\tilde X_i\le x}Z_i-\sum_{\tilde X_i\le x}\tilde Y_i\right|\ge \lambda A K\log k \bigg|\tilde X_1,\ldots,\tilde X_n\right)  \\
&\le 
\sum_{k=2}^{n}k^{-\lambda A K}(1+\lambda\underline f_{\tilde X}^{-1}k)
\le \sum_{k=2}^\infty k^{-\lambda A K}(1+\lambda\underline f_{\tilde X}^{-1}k),
\end{align*}
which can be made arbitrarily small by making $K$ large.
\end{proof}

Embedding $\sum_{\tilde X_i\le x} Z_i$ in a Brownian motion, we can restate the above construction as follows:
with probability at least $1-K(\varepsilon)$ conditional on $\tilde X_1,\ldots, \tilde X_n$,
\begin{align*}
\left|n\hat G(x)-\mathbb{B}_n(n \hat g(x))
\right|\le K\log [n\hat F_{\tilde X}(x)+2]
\text{ all } 0\le x\le \overline x
\end{align*}
where $\mathbb{B}_n(t)=\mathbb{B}_n(t;\tilde X_1,\ldots,\tilde X_n)$ is a Brownian motion conditional on $\tilde X_1,\ldots,\tilde X_n$.  Let $\overline f_{\tilde X}$ be an upper bound for the density of $\tilde X_i$ on $[0,\overline x]$.

\begin{lemma}
Under the conditions of Theorem~\ref{gauss_approx_thm}, for any $\eta>0$,
\begin{align*}
\hat F_{\tilde X}(x)\le \overline f_{\tilde X}\cdot (1+\eta) (x\vee\underline h_n)
\end{align*}
for all $0\le x \le \overline x$ with probability approaching one.
\end{lemma}
\begin{proof}
By Lemma~\ref{lil_rate_lemma_supp},
\begin{align*}
\sup_{\underline h_n\le x\le \overline x} \frac{\sqrt{n}|\hat F_{\tilde X}(x)-F_{\tilde X}(x)|}{\sqrt{x\log\log x^{-1}}}
=\mathcal{O}_P(1).
\end{align*}
Thus,
\begin{align*}
&\sup_{\underline h_n\le x\le \overline x} \frac{|\hat F_{\tilde X}(x)-F_{\tilde X}(x)|}{x}
=\sup_{\underline h_n\le x\le \overline x} \frac{\sqrt{n}|\hat F_{\tilde X}(x)-F_{\tilde X}(x)|}{\sqrt{x\log\log x^{-1}}}
  \frac{\sqrt{x\log\log x^{-1}}}{\sqrt{n}x}  \\
&=\mathcal{O}_P\left(\sup_{\underline h_n\le x\le \overline x}\frac{\sqrt{\log\log x^{-1}}}{\sqrt{nx}}\right)
=\mathcal{O}_P\left(\frac{\sqrt{\log\log \underline h_n^{-1}}}{\sqrt{n \underline h_n}}\right)
=o_P(1)
\end{align*}
where the last step follows since
$n\underline h_n/ \log\log \underline h_n^{-1}\to\infty$. Thus, for any
$\eta>0$, we have, with probability approaching one,
\begin{align*}
\hat F_{\tilde X}(x)\le \hat F_{\tilde X}(x\vee \underline h_n)\le
F_{\tilde X}(x\vee \underline h_n)+(\eta \overline f_{\tilde X})(x\vee\underline h_n)
\le \overline f_{\tilde X}\cdot (1+\eta)(x\vee \underline h_n)
\end{align*}
for all $x$.
\end{proof}

Combining these two lemmas, we have, for large enough $n$,
\begin{multline*}
\limsup_n P\left(
\left|n\hat G(x)-\mathbb{B}_n(n\hat g(x))\right|> K\log \left[2n\overline f_{\tilde X} (x\vee \underline h_n)+2\right] \text{ some } 0\le x\le \overline x\right)  \\
\le \varepsilon(K)+\limsup_n P\left(
\hat F_{\tilde X}(x)> \overline f_{\tilde X}\cdot 2 (x\vee\underline h_n)
\right)\le \varepsilon(K).
\end{multline*}
Since this can be made arbitrarily small by making $K$ large, it follows that
\begin{align*}
\sup_{0\le x\le \overline x}\frac{\left|n\hat G(x)-\mathbb{B}_n(n\hat g(x))\right|}{\sqrt{n(x\vee \underline h_n)}}
=\mathcal{O}_P\left(
\sup_{0\le x\le \overline x} \frac{\log \left[2n\overline f_{\tilde X} (x\vee \underline h_n)+2\right]}{\sqrt{n(x\vee \underline h_n)}}\right)
=\mathcal{O}_P\left(\frac{\log (n\underline h_n)}{\sqrt{n\underline h_n}}\right),
\end{align*}
which gives the required rate for $R_{1,n}(x)$.

Define the function $LL(x)=\log\log x$ for $\log\log x\ge 1$ and $LL(x)=1$ otherwise.
Given $K$, let $B_n(K)$ be the event that
\begin{equation*}
|n\hat g(x)-nx|\le K\sqrt{n(x\vee \underline h_n) LL(x/\underline h_n)}  \text{ all $0\le x\le \overline x$},
\end{equation*}
and let $C_n(K)$ be the event that
\begin{align*}
\left|\mathbb{B}_n(t')-\mathbb{B}_n(t)\right|
\le K \sqrt{(|t'-t|\vee 1)\cdot \log (t\vee t'\vee 2)} \text{ all $0\le t,t'<\infty$}.
\end{align*}

\begin{lemma}
On the event $B_n(K)\cap C_n(K)$, for large enough $n$,
\begin{align*}
&\frac{|R_{2,n}(x)|}{\sqrt{n(x\vee \underline h_n)}}
\le K^{3/2}[n(x\vee \underline h_n)]^{-1/4}\{LL(x/\underline h_n)\}^{1/4}
\cdot \{\log 2+\log [n(x\vee \underline h_n)]\}^{1/2}  \\
&\le K^{3/2}(n\underline h_n)^{-1/4}
\cdot \{\log 2+\log [n\underline h_n]\}^{1/2}
\end{align*}
for all $0\le x\le \overline x$.
\end{lemma}
\begin{proof}
On this event, for all $0\le x\le \overline x$ and large enough $n$,
\begin{align*}
|R_{2,n}(x)|&=|\mathbb{B}_n(n\hat g(x))-\mathbb{B}_n(nx)|
\le \sup_{|t-nx|\le K\sqrt{n(x\vee \underline h_n) LL(x/\underline h_n)}}
  |\mathbb{B}_n(t)-\mathbb{B}_n(nx)|  \\
&\le \sup_{|t-nx|\le K\sqrt{n(x\vee \underline h_n) LL(x/\underline h_n)}}
  K \sqrt{(|t-nx|\vee 1)\cdot \log [t\vee (nx)\vee 2]}  \\
&\le K \sqrt{K\sqrt{n(x\vee \underline h_n) LL(x/\underline h_n)}\cdot \log [2n(x\vee \underline h_n)]}  \\
&=K^{3/2}n^{1/4}(x\vee \underline h_n)^{1/4}\{LL(x/\underline h_n)\}^{1/4}
\cdot \{\log 2+\log [n(x\vee \underline h_n)]\}^{1/2}.
\end{align*}
\end{proof}

\begin{lemma}
Under the conditions of Theorem~\ref{gauss_approx_thm}, for any $\varepsilon>0$, there exists a $K$ such that
$P(B_n(K))\ge 1-\varepsilon$
for large enough $n$.
\end{lemma}
\begin{proof}
Let
$\mathcal{X}^k=(2^k \underline h_n,2^{k+1}\underline h_n]\cap [0,\overline x]$.  We have, for $k\ge 2$,
\begin{align*}
&P\left(|n\hat g(x)-nx|> K\sqrt{n(x\vee \underline h_n) LL(x/\underline h_n)} \text{ some } x\in\mathcal{X}^k\right)  \\
&=P\left(|\mathbb{G}_{n}f(\tilde X_i)^{-1}I(\tilde X_i\le x)|> K\sqrt{x\cdot LL(x/\underline h_n)} \text{ some } x\in\mathcal{X}^k\right)  \\
&\le P\left(\sup_{x\in\mathcal{X}^k}|\mathbb{G}_{n}f(\tilde X_i)^{-1}I(\tilde X_i\le x)|> K\sqrt{2^k\underline h_n\cdot LL(2^k)} \right)  \\
&\le C\exp\left(-\frac{K^2 LL(2^k)}{C}\right)
\le C \exp\left(-\frac{K^2}{C} \log \log (2^k)\right)
=C [k \log 2]^{-\frac{K^2}{C}}
\end{align*}
for some constant $C$ by Lemma~\ref{kern_bound_lemma}.
Thus,
\begin{align*}
&P\left(|n\hat g(x)-nx|> K\sqrt{n(x\vee \underline h_n) LL(x/\underline h_n)} \text{ some } 4\underline h_n\le x\le\overline x\right)
\le C \sum_{k=2}^{\infty} [k\log 2]^{-K^2/C}
\end{align*}
which can be made arbitrarily small by making $K$ large.  Note also that
\begin{align*}
&P\left(|n\hat g(x)-nx|> K\sqrt{n(x\vee \underline h_n) LL(x/\underline h_n)} \text{ some } 0\le x\le 4\underline h_n\right)  \\
&\le P\left(\sup_{0\le x\le 4\underline h_n} |\mathbb{G}_{n}f(\tilde X_i)^{-1}I(\tilde X_i\le x)|
>K\sqrt{\underline h_n}\right),
\end{align*}
which can also be made arbitrarily small by choosing $K$ large by Lemma~\ref{kern_bound_lemma}.
Combining these bounds gives the result.
\end{proof}

\begin{lemma}
  Under the conditions of Theorem~\ref{gauss_approx_thm}, for any
  $\varepsilon>0$, there exists a $K$ such that with probability one for all
  $n$, $P(C_n(K)|\tilde X_1,\ldots,\tilde X_n)\ge 1-\varepsilon$.
\end{lemma}
\begin{proof}
We have
\begin{multline*}
  1-P(C_n(K)|\tilde X_1,\ldots,\tilde X_n)\\
  =P\left(\left|\mathbb{B}_n(t')-\mathbb{B}_n(t)\right|> K\sqrt{(|t-t'|\vee
      1)\cdot \log (t\vee t'\vee 2)}
    \text{ some } 0\le t,t'< \infty\right)  \\
  =P\left(\left|\mathbb{B}_n(t+s)-\mathbb{B}_n(t)\right|> K\sqrt{(s\vee 1)\cdot
      \log [(t+s)\vee 2]}
    \text{ some } 0\le s,t< \infty\right)  \\
  \le \sum_{k=0}^\infty\sum_{\ell=0}^\infty
  P\left(\left|\mathbb{B}_n(t+s)-\mathbb{B}_n(t)\right|> K\sqrt{(s\vee 1)\cdot
      \log [(t+s)\vee 2]} \text{ some } (s,t)\in \mathcal{S}_{k,\ell}\right)
\end{multline*}
\allowdisplaybreaks
where $\mathcal{S}_{k,\ell}=\{(s,t)|\ell \le s\le \ell+1,(\ell\vee 1) k\le t\le (\ell\vee 1)(k+1)\}$.  Note that
\begin{align*}
&P\left(\left|\mathbb{B}_n(t+s)-\mathbb{B}_n(t)\right|>
  K\sqrt{(s\vee 1)\cdot \log [(t+s)\vee 2]}
\text{ some } (s,t)\in \mathcal{S}_{k,\ell}\right)  \\
&\le P\left(\left|\mathbb{B}_n(t+s)-\mathbb{B}_n(t)\right|>
  K\sqrt{(\ell \vee 1)\cdot \log \{[(\ell\vee 1)k+\ell]\vee 2\}}
\text{ some } (s,t)\in \mathcal{S}_{k,\ell}\right)  \\
&=P\left(\left|\mathbb{B}_n(t+s)-\mathbb{B}_n(t)\right|>
  K\sqrt{(\ell \vee 1)\cdot \log \{[(\ell\vee 1)k+\ell]\vee 2\}}
\text{ some } (s,t)\in \mathcal{S}_{0,\ell}\right)  \\
&\le P\left(\left|\mathbb{B}_n(t)\right|>
  (K/2)\sqrt{(\ell \vee 1)\cdot \log \{[(\ell\vee 1)k+\ell]\vee 2\}}
\text{ some } 0\le t\le (\ell\vee 1)+\ell+1\right)  \\
&\le 4 P\left(\left|\mathbb{B}_n((\ell\vee 1)+\ell+1)\right|>
  (K/2)\sqrt{(\ell \vee 1)\cdot \log \{[(\ell\vee 1)k+\ell]\vee 2\}} \right)  \\
&\le 4 \cdot \frac{1}{\sqrt{2\pi}} \cdot
\exp\left(-\frac{1}{2}\frac{(K/2)^2(\ell \vee 1)\cdot \log \{[(\ell\vee 1)k+\ell]\vee 2\}}{(\ell\vee 1)+\ell+1}\right)  \\
&\le 4 \cdot \frac{1}{\sqrt{2\pi}} \cdot
\exp\left(-\frac{(K/2)^2\log \{[(\ell\vee 1)k+\ell]\vee 2\}}{6}\right)
= 4 \cdot \frac{1}{\sqrt{2\pi}} \cdot
  \{[(\ell\vee 1)k+\ell]\vee 2\}^{-K^2/24}.
\end{align*}
The third line follows since $\mathbb{B}_n(t)$ has the same distribution as $\mathbb{B}_n(t+(\ell\vee 1)k)$.  The fourth line follows since, if $|\mathbb{B}_n(t+s)-\mathbb{B}_n(t)|> C$ for some $C$ and $(s,t)\in\mathcal{S}_{0,\ell}$, we must have $|\mathbb{B}_n(t)|>C/2$ for some $0\le t\le (\ell\vee 1)+\ell+1$.  The fifth line follows from the reflection principle for the Brownian motion
\citep[see Theorem 2.21 in][]{morters_brownian_2010}.
The sixth line uses the fact that $P(Z\ge x)\le \frac{1}{\sqrt{2\pi}}\exp(-x^2/2)$ for $x\ge 1$ and $Z\sim N(0,1)$.

Thus,
\begin{align*}
&P\left(\left|\mathbb{B}_n(t')-\mathbb{B}_n(t)\right|>
  K\sqrt{(|t-t'|\vee 1)\cdot \log (t\vee t'\vee 1)}
\text{ some } 0\le t,t'< \infty\right)  \\
&\le \sum_{k=0}^\infty\sum_{\ell=0}^{\infty} 4 \cdot \frac{1}{\sqrt{2\pi}} \cdot
  \{[(\ell\vee 1)k+\ell]\vee 2\}^{-K^2/24}.
\end{align*}
This can be made arbitrarily small by making $K$ large.
\end{proof}

Theorem~\ref{gauss_approx_thm} now follows since, for any constant $\varepsilon>0$, there is a constant $K$ such that $\sup_{\underline h_n\le h\le \bar x/A}|\hat{\mathbb{H}}_n(h)-\mathbb{H}_n(h)|$ is less than $K\{(\log n\underline h_n)(n\underline h_n)^{-1/2}+(n\underline h_n)^{-1/4}[\log (n\underline h_n)]^{1/4}\}$ with probability at least $1-\varepsilon$ asymptotically.


\subsection{Calculations for Extreme Value Limit}\label{ev_calc_sec_supp}

This section provides the calculations for the asymptotic distribution derived
in Theorem~\ref{gaussian_limit_thm} in Section~\ref{gaussian_limit_sec} of the
appendix.

As described in the proof of Theorem~\ref{gaussian_limit_thm},
we use Theorem 12.3.5 of \citet{leadbetter_extremes_1983} applied to the process $\mathbb{X}(t)=\mathbb{H}(e^t)$, which is stationary,
with, in the case where $k(A)\ne 0$, $\alpha=1$ and $C=\frac{Ak(A)^2}{\int k(u)^2\, du}$ and, in the case where $k(A)=0$, $\alpha=2$ and $C=\frac{\int \left[k'(u)u+\frac{1}{2} k(u)\, du\right]^2\, du}{2\int k(u)^2\, du}$.
In the notation of that theorem, we have
\begin{equation*}
r(t)=cov\left(\mathbb{X}(s),\mathbb{X}(s+t)\right)
=\frac{e^{\frac{1}{2}t}\int k(ue^{t})k(u)\, du}{\int k(u)^2\,du}.
\end{equation*}
Since $r(t)$ is bounded by a constant times $e^{\frac{1}{2}t}\cdot e^{-t}$, the condition $r(t)\log t\stackrel{t\to\infty}{\to} 0$ holds, so it remains to verify that
$r(t)=1-C|t|^\alpha+o(|t|^\alpha)$ with $\alpha$ and $C$ given above.

Since $k(ue^{t})k(u)$ has a continuous derivative with respect to $t$ on its
support, which for $t\geq 0$ is $[-Ae^{-t},Ae^{-t}]$, it follows by Leibniz's
rule and symmetry of $k$ that, for $t\geq 0$ $\frac{d}{dt}\int k(ue^{t})k(u)\,
du=-2Ae^{-t}k(A)k(Ae^{-t})+\int k'(ue^t)k(u)ue^t\, du$ for $t\ge 0$. Thus, for
$t\ge 0$,
\begin{multline*}
  \frac{d}{dt_+}r(t)=\frac{e^{\frac{1}{2}t}\frac{d}{dt_+}\int k(ue^{t})k(u)\, du+\frac{1}{2}e^{\frac{1}{2}t}\int k(ue^{t})k(u)\, du}{\int k(u)^2\, du}  \\
  =\frac{e^{\frac{1}{2}t}\left[-2Ae^{-t}k(A)k(Ae^{-t})+\int k'(ue^t)k(u)ue^t\,
      du\right] +\frac{1}{2}e^{\frac{1}{2}t}\int k(ue^{t})k(u)\, du}{\int
    k(u)^2\, du}.
\end{multline*}
Thus,
\begin{equation*}
\frac{d}{dt_+}r(t)\bigg|_{t=0}
=\frac{-2Ak(A)^2+\int k'(u)k(u)u\, du
+\frac{1}{2}\int k(u)^2\, du}{\int k(u)^2\, du}
=\frac{-Ak(A)^2}{\int k(u)^2\, du}
\end{equation*}
where the last step follows by noting that, applying integration by parts with $k(u)u$ playing the part of $u$ and $k'(u)du$ playing the part of $dv$,
\begin{multline*}
  \int k(u)k'(u)u\, du
  =\left[k(u)^2u\right]_{-A}^A-\int k(u)[k(u)+k'(u)u]\, du  \\
  =2k(A)^2A-\int k(u)^2\, du-\int k(u)k'(u)u\, du
\end{multline*}
so that $\int k(u)k'(u)u\, du=k(A)^2A-\frac{1}{2}\int k(u)^2\, du$.  For the case where $k(A)\ne 0$, it follows from this and a symmetric argument for $t\le 0$ that
$r(t)=1-C|t|-o(|t|)$ for $C=\frac{Ak(A)^2}{\int k(u)^2\, du}$ as required.

For the case where $k(A)=0$, applying Leibniz's rule as above shows that $r(t)$ is differentiable with,
\begin{equation*}
r'(t)=e^{\frac{1}{2}t}\frac{\int k'(ue^t)k(u)ue^t\, du
+\frac{1}{2}\int k(ue^{t})k(u)\, du}{\int k(u)^2\, du}.
\end{equation*}
Thus, $r'(0)=0$ (using the integration by parts identity above)
and $r(t)$ is twice differentiable with
\begin{equation*}
r''(t)=
e^{\frac{1}{2}t}\frac{\frac{d}{dt}\int k'(ue^t)k(u)ue^t\, du
+\frac{1}{2}\left(\frac{d}{dt}\int k(ue^{t})k(u)\, du
+\int k'(ue^t)k(u)ue^t\, du
+\int k(ue^{t})k(u)\, du\right)
}{\int k(u)^2\, du}.
\end{equation*}
We have
\begin{multline*}
  \frac{d}{dt}\int k'(ue^t)k(u)ue^t\, du
  =\frac{d}{dt}\int k'(v)k(ve^{-t})ve^{-t}\, dv  \\
  =\int k'(v)k'(ve^{-t})(-ve^{-t})ve^{-t}\, dv -\int k'(v)k(ve^{-t})ve^{-t}\, dv
\end{multline*}
and
$\frac{d}{dt}\int k(ue^{t})k(u)\, du
=\int k'(ue^t)k(u)ue^t\, du$,
so this gives
\begin{multline*}
r''(t)=
e^{\frac{1}{2}t}\frac{-\int k'(v)k'(ue^{-t})u^2e^{-2t}\, du
-\frac{1}{2}\int k'(ue^t)k(u)ue^t\, du}{\int k(u)^2\, du}  \\
+\frac{1}{2}e^{\frac{1}{2}t}\frac{\int k'(ue^t)k(u)ue^t\, du
+\frac{1}{2}\int k(ue^{t})k(u)\, du}{\int k(u)^2\, du}.
\end{multline*}
Thus,
\begin{equation*}
r''(0)
=\frac{-\int [k'(u)u]^2\, du
+\frac{1}{4}\int k(u)^2\, du}{\int k(u)^2\, du}.
\end{equation*}
Since, by the integration by parts argument above,
$\frac{1}{4}\int k(u)^2\, du=\frac{1}{2}\int k(u)^2\, du-\frac{1}{4}\int k(u)^2\, du=
-\int k(u)k'(u)u\, du-\frac{1}{4}\int k(u)^2\, du$,
this is equal to
\begin{equation*}
\frac{-\int [k'(u)u]^2\, du
-\int k(u)k'(u)u\, du-\frac{1}{4}\int k(u)^2\, du}{\int k(u)^2\, du}
=-\frac{\int \left[k'(u)u+\frac{1}{2} k(u)\right]^2\, du}{\int k(u)^2\, du}
\end{equation*}
which gives the required expansion with $C$ given by one half of the negative of the above display and $\alpha=2$.



\subsection{Delta Method}\label{delta_method_sec}

We state some results that allow us to obtain influence function representations
with the necessary uniform rate for differentiable functions of estimators.
These results amount to applying the delta method to our setting and keeping
track of the uniform rates.

Let $\hat\beta(h)$ be an estimator of a parameter $\beta(h)\in\mathbb{R}^{d_\beta}$ with influence function representation
\begin{equation*}
\sqrt{nh}(\hat\beta(h)-\beta(h))
  =\frac{1}{\sqrt{nh}} \sum_{i=1}^n\psi_\beta(W_i,h)k(X_i/h) + R_{1,n}(h)
\end{equation*}
for some function $\psi_\beta$ and a kernel function $k$, where
$\psi_\beta(W_i,h)k(X_i/h)$ has mean zero and
$\sup_{\underline h_n\le h\le \overline h}|R_{1,n}(h)|=o_P(1/\sqrt{\log\log \underline h_n^{-1}})$.
Let $g$ be a function from $\mathbb{R}^{d_\beta}$ to $\mathbb{R}^{d_\theta}$ and consider the parameter $\theta(h)=g(\beta(h))$ and the estimator $\hat\theta(h)=g(\hat\beta(h))$.

Let $\hat V_\beta(h)$ be an estimate of $V_\beta(h)=\frac{1}{h}E\psi_\beta(W_i,h)\psi_\beta(W_i,h)'k(X_i/h)^2$,
the (pointwise in $h$) asymptotic variance of $\hat\beta(h)$.  A natural estimator of the asymptotic variance $V_\theta(h)$ of $\hat\theta$ is
\begin{align*}
\hat V_\theta(h)=D_g(\hat\beta(h))'\hat V_\beta(h)D_g(\hat\beta(h))'.
\end{align*}

\begin{lemma}\label{delta_method_lemma}
Suppose that $\beta(h)$ is bounded uniformly over $h\le \overline h_n$ where $\overline h_n=\mathcal{O}(1)$ and
\begin{itemize}
\item[(i)] For large enough $n$, $g$ is differentiable on an open set containing the range of $\beta(h)$ over $h\le \overline h_n$, with Lipschitz continuous derivative $D_g$.

\item[(ii)] $\psi_\beta$ and $k$ are bounded, $k$ has finite support, and the class of functions $(w,x)\mapsto \psi_\beta(w,h)k(x/h)$ has polynomial uniform covering number.

\item[(iii)] $|X_i|$ has a bounded density on $[0,\overline h_n]$ for large enough $n$.
\end{itemize}
Then, if $n\underline h_n/(\log\log n)^3\to\infty$,
\begin{align*}
\sup_{\underline h_n\le h\le \overline h_n}\left|\sqrt{nh}(\hat\theta(h)-\theta(h))
-\frac{1}{\sqrt{nh}} \sum_{i=1}^{n}D_g(\beta(h))\psi_\beta(W_i,h)k(X_i/h)\right|
=o_P\left(1\Big/\sqrt{\log\log \underline h_n^{-1}}\right).
\end{align*}
If, in addition, $\sup_{\underline h_n\le h\le\overline h_n}\|\hat V_\beta(h)-V_\beta(h)\|\stackrel{p}{\to} 0$, then,
for some constant $K$ and some $R_{n,2}(h)$ with
$\sup_{\underline h_n\le h\le\overline h_n} \frac{\sqrt{nh}}{\sqrt{\log\log h^{-1}}}|R_{n,2}(h)|=\mathcal{O}_P(1)$,
\begin{align*}
\left\|\hat V_\theta(h)-V_\theta(h)\right\|
\le K \left\|\hat V_\beta(h)-V_\beta(h)\right\| + R_{n,2}(h)
\end{align*}
for all $\underline h_n\le h\le \overline h_n$ with probability approaching one.

\end{lemma}
\begin{proof}
By a first order Taylor expansion, we have, for some $\beta^*(h)$ with $\|\beta^*(h)-\beta(h)\|\le\|\hat\beta(h)-\beta(h)\|$,
\begin{align*}
&\sqrt{nh}(\hat\theta(h)-\theta(h))
=\sqrt{nh}(g(\hat\beta(h))-g(\beta(h)))
=\sqrt{nh}D_g(\beta^*(h))(\hat\beta(h)-\beta(h))  \\
&=D_g(\beta^*(h))\frac{1}{\sqrt{nh}} \sum_{i=1}^n\psi_\beta(W_i,h)k(X_i/h)
  + D_g(\beta^*(h))R_{1,n}(h)  \\
&=\frac{1}{\sqrt{nh}} \sum_{i=1}^{n}D_g(\beta(h))\psi_\beta(W_i,h)k(X_i/h)
  +[D_g(\beta^*(h))-D_g(\beta(h))]\frac{1}{\sqrt{nh}} \sum_{i=1}^n\psi_\beta(W_i,h)k(X_i/h)  \\
&  + D_g(\beta^*(h))R_{1,n}(h)
\end{align*}
Applying Lemma~\ref{lil_rate_lemma},
$\hat\beta(h)-\beta(h)$ is $\mathcal{O}_P(\sqrt{\log\log h^{-1}}/\sqrt{nh})$ uniformly over $\underline h_n\le h\le \overline h_n$ and
$\frac{1}{\sqrt{nh}} \sum_{i=1}^n\psi_\beta(W_i,h)k(X_i/h)$ is $\mathcal{O}_P(\sqrt{\log\log h^{-1}})$ uniformly over $\underline h_n\le h\le \overline h_n$.
so that, by the Lipschitz condition on $D_g$, the second term is $\mathcal{O}_P(\log\log h^{-1}/\sqrt{nh})$ uniformly over $\underline h_n\le h\le\overline h_n$, which is $o_P(1/\sqrt{\log\log \underline h_n^{-1}})$ uniformly over $\underline h_n\le h\le \overline h_n$
since $\sqrt{n\underline h_n}/(\log\log \underline h_n)^{3/2}\to \infty$.  The last term is $o_P(1/\sqrt{\log\log h_n^{-1}})$ uniformly over $\underline h_n\le h\le \overline h_n$ by the conditions on $R_{1,n}(h)$, the uniform consistency of $\hat\beta(h)$ and the Lipschitz condition on $D_g$.

For the second claim, note that
\begin{align*}
\hat V_\theta-V_\theta
&=D_g(\hat\beta(h))\hat V_\beta(h)D_g(\hat\beta(h))'
  -D_g(\beta(h))V_\beta(h)D_g(\beta(h))'  \\
&=[D_g(\hat\beta(h))-D_g(\beta(h))]\hat V_\beta(h)D_g(\hat\beta(h))'
+D_g(\beta(h))[\hat V_\beta(h)-V_\beta(h)]D_g(\hat\beta(h))'  \\
&\quad +D_g(\beta(h))V_\beta(h)[D_g(\hat\beta(h))-D_g(\beta(h))]'.
\end{align*}
The first and last terms converge at a $\sqrt{\log\log h^{-1}}/\sqrt{nh}$ rate uniformly over $\underline h_n\le h\le \overline h_n$ by Lemma~\ref{lil_rate_lemma} and the Lipschitz continuity on $D_g$.  The second term is bounded by a constant times $\|\hat V_\beta(h)-\hat V_\beta(h)\|$ uniformly over $\underline h_n\le h\le \overline h$ with probability approaching one by the uniform consistency of $\hat\beta(h)$ and the Lipschitz continuity of $D_g$.
\end{proof}

\subsection{Sufficient Conditions Based on Non-normalized Influence Function}

In some cases, it will be easier to verify the conditions for an influence function approximation to
$\sqrt{nh}(\hat\theta(h)-\theta(h))$ rather than the normalized version
$\sqrt{nh}(\hat\theta(h)-\theta(h))/\hat\sigma(h)$.  The following lemma is useful in these cases.

\begin{lemma}\label{nonnorm_equiv_lemma}
Suppose that the following conditions hold for some $\tilde\psi(W_i,h)$.

\begin{enumerate}
\item $E\tilde\psi(W_i,h)k(X_i/h)=0$ and $k$ is bounded and symmetric with
  finite support $[-A,A]$.
\item $|X_i|$ has a density $f_{|X|}$ with $f_{|X|}(0)>0$,
  $\tilde\psi(W_i,h)k(X_i/h)$ is bounded uniformly over $h\le \underline h_n$
  and, for some deterministic function $\ell(h)$ with $\ell(h)\log\log h^{-1}\to
  0$ as $h\to 0$, the following expressions are bounded by $\ell(t)$:
  $|f_{|X|}(t)-f_{|X|}(0)|$,
  $|E\left[\tilde\psi(W_i,0)||X_i|=t\right]-E\left[\tilde\psi(W_i,0)||X_i|=0\right]|$,
  $|var\left[\tilde\psi(W_i,0)||X_i|=t\right]-var\left[\tilde\psi(W_i,0)||X_i|=0\right]|$
  and $|(\tilde\psi(W_i,t)-\tilde\psi(W_i,0))k(X_i/h)|$.
\end{enumerate}

Let $\sigma^2(h)=\frac{1}{h}var(\tilde\psi(W_i,h)k(X_i/h))$ for $h>0$ and let
$\sigma^2(0)=var\left[\tilde\psi(W_i,0)||X_i|=0\right]f_{|X|}(0)\cdot \int_{u=0}^\infty
k(u)^2\, du$. Let $\psi(W_i,h)=\tilde\psi(W_i,h)/\sigma(h)$ so that
$\frac{1}{h}var[\psi(W_i,h)k(X_i/h)]=1$. Suppose that
$var\left[\tilde\psi(W_i,0)||X_i|=0\right]>0$. Then the above assumptions hold
with $\tilde\psi$ replaced by $\psi$ for $h$ small enough and with $\ell(t)$
possibly redefined.
\end{lemma}
\begin{proof}
  First, note that the only condition we need to verify is the one that involves
  $|[\psi(W_i,h)-\psi(W_i,0)]k(X_i/h)|$, since the remaining
  conditions are only changed by multiplication by a constant when $\tilde\psi$
  is replaced by $\psi$. Note that
\begin{multline*}
  \sigma^2(h)-\frac{1}{h}var(\tilde\psi(W_i,0)k(X_i/h))
  =\frac{1}{h}var(\tilde\psi(W_i,h)k(X_i/h))-\frac{1}{h}var(\tilde\psi(W_i,0)k(X_i/h))  =\\
  \frac{1}{h}var\{[\tilde\psi(W_i,h)-\tilde\psi(W_i,0)]k(X_i/h)\}
  +2\frac{1}{h}cov\{[\tilde\psi(W_i,h)-\tilde\psi(W_i,0)]k(X_i/h),\tilde\psi(W_i,0)k(X_i/h)\}.
\end{multline*}
Since $|(\tilde\psi(W_i,h)-\tilde\psi(W_i,0))k(X_i/h)|\le \ell(h)I(|X_i|\le
Ah)$, $\tilde\psi(W_i,h)k(X_i/h)$ and $\tilde\psi(W_i,0)k(X_i/h)$ are bounded,
the last two terms are bounded by a constant times
$\ell(h)\frac{1}{h}EI(|X_i|\le Ah)$, which is bounded by a constant times
$\ell(h)$ by the assumption on the density of $|X_i|$.

Thus, let us consider
\begin{multline*}
  \frac{1}{h}var(\tilde\psi(W_i,0)k(X_i/h))  \\
  =\frac{1}{h}\int_{x=0}^\infty
  var\left[\tilde\psi(W_i,0)||X_i|=x\right]k(x/h)^2 f_{|X|}(x)\, dx +\frac{1}{h}
  var\left\{E\left[\tilde\psi(W_i,0)||X_i|\right]k(X_i/h)\right\}.
\end{multline*}
Arguing as in the proof of Lemma~\ref{a0_mu0_lemma} (using the fact that
$E\tilde\psi(W_i,h)k(X_i/h)=0$ and taking limits), it can be seen that
$E\left[\tilde\psi(W_i,0)||X_i|=0\right]=0$ under these conditions. Thus, the
last term is bounded by $\ell(Ah)^2\frac{1}{h}Ek(X_i/h)^2$. The first term is
equal to $var(\tilde\psi(W_i,0)\mid |X_i|=0)f_{|X|}(0)\cdot \int_{u=0}^\infty
k(u)^2\, du$ plus a term that is bounded by a constant times $\ell(Ah)$.

It follows that, letting $\sigma^2(0)=var\left[\tilde\psi(W_i,0)||X_i|=0\right]f_{|X|}(0)\int_{u=0}^\infty k(u)^2\, du$ as defined above, we have,
for some constant $K$,
$|\sigma^2(h)-\sigma^2(0)|\le K\ell(Ah)$.  Thus,
\begin{multline*}
\left|[\psi(W_i,h)-\psi(W_i,0)]k(X_i/h)\right|  \\
\le \frac{1}{\sigma(0)}\left|[\tilde\psi(W_i,h)-\tilde\psi(W_i,0)]k(X_i/h)\right|
+\left|\tilde\psi(W_i,h)k(X_i/h)\right|\cdot \left|\frac{1}{\sigma(h)}-\frac{1}{\sigma(0)}
\right|.
\end{multline*}
The first term is bounded by a constant times $\ell(h)$ by assumption.  The last term is bounded by a constant times $|\sigma^2(h)-\sigma^2(0)|$, which is bounded by a constant times $\ell(Ah)$ as shown above.
\end{proof}

\section{Local polynomial estimators: regression discontinuity/estimation at the boundary}\label{loc_poly_bound_sec}

This section gives primitive conditions for smooth functions of estimates based
on local polynomial estimates at the boundary, or at a discontinuity in the
regression function. The results are used in Section~\ref{applications_sec_supp}
below to verify the conditions of Theorem~\ref{highlevel_asym_dist_thm} for the
applications in Section~\ref{applications_sec} in the main text. Throughout this
section, we consider a setup with $\{(X_i,Y_i')'\}_{i=1}^n$ i.i.d.\ with $X_i$ a
real valued random variable and $Y_i$ taking values in $\mathbb{R}^{d_Y}$. We
consider smooth functions of the left and right hand limits of the regression
function at a point, which we normalize to be zero.

Let
$(\hat\beta_{u,j,1}(h),\hat\beta_{u,j,2}(h)/h,\ldots,\hat\beta_{u,j,r+1}(h)/h^{r})$
be the coefficients of an $r$th order local polynomial estimate of
$E[Y_{i,j}|X_i=0_{+}]$ based on the subsample with $X_{i}\geq 0$ with a kernel
function $k^{*}$. Similarly, let
$(\hat\beta_{\ell,j,1}(h),\hat\beta_{\ell,j,2}(h)/h,\ldots,\hat\beta_{\ell,j,r+1}(h)/h^{r})$
be the coefficients of an $r$th order local polynomial estimate of
$E[Y_{i,j}|X_i=0_{-}]$ based on the subsample with $X_{i}< 0$, where the
polynomial is taken in $|X_i|$ rather than $X_i$ (this amounts to multiplying
even elements of $\beta_{\ell,j}$ by $-1$). The scaling by powers of $h$ is used
to handle the different rates of convergence of the different coefficients. Let
$p(x)=(1,x,x^2,\ldots,x^{r})'$, and define
$\hat\beta_{u,j}=(\hat\beta_{u,j,1}(h),\hat\beta_{u,j,2}(h),\ldots,\hat\beta_{u,j,r+1}(h))$
and
$\hat\beta_{\ell,j}=(\hat\beta_{\ell,j,1}(h),\hat\beta_{\ell,j,2}(h),\ldots,\hat\beta_{\ell,j,r+1}(h))$.
Let $p(x)=(1,x,x^2,\ldots,x^{r})'$. Then $\hat\beta_{u,j}$ minimizes
\begin{align*}
  \sum_{i=1}^n (Y_{i,j}-p(|X_{i}/h|)'\beta_{u,j})^2 I(X_i\ge 0) k^*(X_i/h)
\end{align*}
and
$\hat\beta_{\ell,j}$ minimizes
\begin{align*}
  \sum_{i=1}^n (Y_{i,j}-p(\abs{X_{i}/h})'\beta_{u,j})^2 I(X_i < 0) k^*(X_i/h).
\end{align*}
Define
\begin{align*}
  \Gamma_u(h)&=\textstyle\frac{1}{h}Ep(|X_i/h|)p(|X_i/h|)'k^*(X_i/h)I(X_i\ge 0),\\
  \Gamma_\ell(h)&=\textstyle\frac{1}{h}Ep(|X_i/h|)p(|X_i/h|)'k^*(X_i/h)I(X_i< 0),\\
  \hat\Gamma_u(h)&=\textstyle\frac{1}{nh}\sum_{i=1}^{n} p(|X_i/h|)p(|X_i/h|)'k^*(X_i/h)I(X_i\ge 0)&\text{ and}\\
  \hat\Gamma_\ell(h)&=\textstyle\frac{1}{nh}\sum_{i=1}^{n}p(|X_i/h|)p(|X_i/h|)'k^*(X_i/h)I(X_i<
  0).
\end{align*}
Let $\mu_{k^*,\ell}=\int_{0}^\infty u^\ell k^*(u)\, du$, and
let $M$ be the matrix with $i,j$th element given by $\mu_{k^*,i+j-2}$.

Let $\hat\alpha_u(h)=(\hat\beta_{1,1,u}(h),\ldots\hat\beta_{1,d_Y,u}(h))'$ and
$\hat\alpha_\ell(h)=(\hat\beta_{1,1,\ell}(h),\ldots\hat\beta_{1,d_Y,\ell}(h))'$,
and similarly for $\alpha_u(h)$ and $\alpha_{\ell}(h)$ (i.e. $\alpha_u$ and
$\alpha_\ell$ contain the constant terms in the local polynomial regressions for
each $j$). Let $\hat\alpha(h)=(\hat\alpha_u(h)',\hat\alpha_\ell(h)')$ and
$\alpha(h)=(\alpha_u(h)',\alpha_\ell(h)')$. We are interested in
$\theta(h)=g(\alpha(h))$ for a differentiable function $g$ from
$\mathbb{R}^{2d_Y}$ to $\mathbb{R}$, and an estimator
$\hat\theta(h)=\hat g(\alpha(h))$. We consider standard errors defined by the
delta method applied to the robust covariance matrix formula obtained by
treating the local linear regressions as a system of $2d_Y$ weighted least
squares regressions. Let $\nu_u(h)=e_1'\Gamma_u(h)^{-1}$ and let
$\nu_\ell(h)=e_1'\Gamma_\ell(h)^{-1}$. Let
$\hat\nu_u(h)=e_1'\hat\Gamma_u(h)^{-1}$ and let
$\nu_\ell(h)=e_1'\hat\Gamma_\ell(h)^{-1}$. Let $\psi_\alpha(X_i,Y_i,h)$ be the
$(2d_Y)\times 1$ random vector with $j$th element given by
\begin{equation*}
\psi_{\alpha,j}(X_i,Y_i,h)=
\begin{cases}
  \nu_u(h)p(|X_i/h|)
  [Y_{i,j}-p(|X_i/h|)'\beta_{u,j}(h)]I(X_i\ge 0) & \text{if $j=1,\dotsc, d_{Y}$,} \\
  \nu_\ell(h)p(|X_i/h|) [Y_{i,j-d_Y}-p(|X_i/h|)'\beta_{\ell,j-d_Y}(h)]I(X_i< 0)
  & \text{if $j=d_{Y}+1,\dotsc,2d_{Y}$.}
\end{cases}
\end{equation*}
Let $\hat\psi_{\alpha}(X_i,Y_i,h)$ be defined analogously,
\begin{equation*}
\hat\psi_{\alpha,j}(X_i,Y_i,h)=
\begin{cases}
  \hat\nu_u(h)p(|X_i/h|)
  [Y_{i,j}-p(|X_i/h|)'\hat\beta_{u,j}(h)]I(X_i\ge 0) & \text{if $j=1,\dotsc, d_Y$,} \\
  \hat\nu_\ell(h)p(|X_i/h|)
  [Y_{i,j-d_Y}-p(|X_i/h|)'\hat\beta_{\ell,j-d_Y}(h)]I(X_i< 0) & \text{if
    $j=d_Y+1,\dotsc, 2d_Y$.}
\end{cases}
\end{equation*}
Let
\begin{align*}
V_{\alpha}(h)=\frac{1}{h}E \psi_\alpha(X_i,Y_i,h)\psi_\alpha(X_i,Y_i,h)'k^*(X_i/h)^2
\end{align*}
and let
\begin{align*}
\hat V_{\alpha}(h)=\frac{1}{h}E_n \hat\psi_\alpha(X_i,Y_i,h)\hat\psi_\alpha(X_i,Y_i,h)'k^*(X_i/h)^2.
\end{align*}
Let $\hat\sigma(h)=D_g(\hat\alpha(h))\hat V_\alpha(h)D_g(\hat\alpha(h))'$, and
$\sigma(h)=D_g(\alpha(h)) V_\alpha(h)D_g(\alpha(h))'$, where $D_g$ is the
derivative of $g$.

We make the following assumption throughout this section.
In the following assumption, $\ell(t)$ is an arbitrary nondecreasing function satisfying $\lim_{t\downarrow 0}\ell(t)\log\log t^{-1}=0$.

\begin{assumption}\label{loc_poly_assump}
\begin{itemize}
\item[(i)] $X_i$ has a density $f_X(x)$ with
$|f_X(x)-f_{X,-}|\le \ell(x)$ for $x<0$ and $|f_X(x)-f_{X,+}|\le \ell(x)$ for some $f_{X,+}>0$ and $f_{X,-}>0$.

\item[(ii)] $Y_i$ is bounded and, for some matrices $\Sigma_-$ and $\Sigma_+$ and vectors $\tilde\mu_-$ and $\tilde\mu_+$, $\tilde \Sigma(x)=var(Y_i|X_i=x)$ and $\tilde\mu(x)=E(Y_i|X_i=x)$ satisfy
$\|\tilde \Sigma(x)-\Sigma_+\|\le \ell(x)$ and
$\|\tilde \mu(x)-\tilde \mu_+\|\le \ell(x)$
for $x>0$ and
$\|\tilde \Sigma(x)-\Sigma_-\|\le \ell(x)$ and
$\|\tilde \mu(x)-\tilde \mu_-\|\le \ell(x)$
for $x<0$.

\item[(iii)] $k^*$ is symmetric with finite support $[-A,A]$, is bounded with a bounded, uniformly continuous first derivative on $(0,A)$, and satisfies $\int k(u)\, du\ne 0$, and the matrix $M$ is invertible.

\item[(iv)] $D_g$ is bounded and is Lipschitz continuous on
an open set containing
the range of $\alpha(h)$ over
$\overline h_n$ for $n$ large enough.

\item[(v)] $D_{g,u}(\alpha(0))\tilde\Sigma_+ D_{g,u}(\alpha(0))>0$ or
  $D_{g,\ell}(\alpha(0))\tilde\Sigma_- D_{g,u}(\ell)>0$.

\item[(vi)] $\overline h_n=\mathcal{O}(1)$ and $n\underline h_n/(\log\log n)^3\to\infty$.
\end{itemize}
\end{assumption}

\begin{theorem}\label{loc_poly_bound_thm}
  Under Assumption~\ref{loc_poly_assump}, Assumptions~\ref{inf_func_assump} and
  Assumption~\ref{dgp_kern_assump} hold with $k(u)=e_1'M^{-1}p(|u|)k^*(u)$ and
  $\psi$ defined below so long as $n\underline h_n/(\log\log \underline
  h_n^{-1})^3\to \infty$ and $\overline h_n$ is small enough for large $n$.
\end{theorem}

Throughout, we assume that $\overline h_n$ is small enough so that
$\|\Gamma_u(h)^{-1}\|$ and $\|\Gamma_\ell(h)^{-1}\|$ are bounded uniformly over
$h\le\overline h_n$ for large enough $n$ (this will hold for small enough
$\overline h_n$ by Lemma~\ref{loc_poly_gamma_lemma} below).

\begin{lemma}\label{lpoly_inf_func_lemma}
  Suppose that Assumption~\ref{loc_poly_assump} holds. Then
\begin{align*}
  \sup_{\underline h_n\le h\le\overline h_n}\frac{\sqrt{nh}}{\sqrt{\log\log
      h^{-1}}} \left\| \hat\Gamma_u(h)-\Gamma_u(h)
  \right\|&=\mathcal{O}_P(1),\\
  \sup_{\underline h_n\le h\le\overline h_n}\frac{\sqrt{nh}}{\sqrt{\log\log
      h^{-1}}} \left\| \hat\Gamma_u(h)^{-1}-\Gamma_u(h)^{-1}
  \right\|&=\mathcal{O}_P(1),\\
{\sup_{\underline h_n\le h\le\overline h_n} \frac{nh}{\log\log h^{-1}} \bigg\|
    \hat\beta_{u,j}(h)-\beta_{u,j}(h)\hspace{18em}}& \\
  -\frac{1}{h} E_n\Gamma_u(h)^{-1}p(X_i/h)k^*(X_i/h)
    [Y_i-p(X_i/h)'\beta(h)]I(X_i\ge 0) \bigg\|&=\mathcal{O}_P(1),
\end{align*}
and
\begin{equation*}
\sup_{\underline h_n\le h\le\overline h_n}\frac{\sqrt{nh}}{\sqrt{\log\log h^{-1}}}
\left\|
\hat\beta_{u,j}(h)-\beta_{u,j}(h)
\right\|=\mathcal{O}_P(1)
\end{equation*}
for each $j$.  The same holds with $I(X_i\ge 0)$ replaced by $I(X_i<0)$, $\Gamma_u$ replaced by $\Gamma_\ell$, $\hat\Gamma_u$ replaced by $\hat\Gamma_\ell$, etc.
\end{lemma}
\begin{proof}
The first display follows from Lemma~\ref{lil_rate_lemma}.
For the second display, note that
$\hat\Gamma(h)^{-1}-\Gamma(h)^{-1}
=-\hat\Gamma(h)^{-1}(\hat\Gamma(h)-\Gamma(h))\Gamma(h)^{-1}$,
so $\|\hat\Gamma(h)^{-1}-\Gamma(h)^{-1}\|
\le \|\hat\Gamma(h)^{-1}\|\|\hat\Gamma(h)-\Gamma(h)\|\|\Gamma(h)^{-1}\|$.
$\|\Gamma(h)^{-1}\|$ is bounded by assumption and
$\|\hat\Gamma(h)^{-1}\|$ is $\mathcal{O}_P(1)$ uniformly over $\underline h_n\le h\le \overline h_n$ by this and the first display in the lemma.
For the third display, note that
\begin{equation*}
  \hat\beta_{u,j}(h)-\beta_{u,j}(h)
  =\hat\Gamma_u(h)^{-1}\frac{1}{h}E_{n}p(X_i/h)k^*(X_i/h)
  [Y_i-p(X_i/h)'\beta(h)]I(X_i\ge 0).
\end{equation*}
Thus, letting $\mathcal{B}=-\frac{1}{h}E_n\Gamma_u(h)^{-1}p(X_i/h)k^*(X_i/h)
     [Y_i-p(X_i/h)'\beta(h)]I(X_i\ge 0)$,
\begin{multline*}
\sup_{\underline h_n\le h\le\overline h_n}\frac{nh}{\log\log h^{-1}}
\left\|
\hat\beta_{u,j}(h)-\beta_{u,j}(h)
  \mathcal{B}
\right\|  \\
\le \sup_{\underline h_n\le h\le\overline h_n}\frac{\sqrt{nh}}{\sqrt{\log\log h^{-1}}}
\left\|\hat\Gamma_u(h)^{-1}-\Gamma_u(h)^{-1}\right\|  \\
\cdot \sup_{\underline h_n\le h\le\overline h_n}\frac{\sqrt{nh}}{\sqrt{\log\log h^{-1}}}
\left\|\frac{1}{h}E_{n}p(X_i/h)k^*(X_i/h)
     [Y_i-p(X_i/h)'\beta(h)]I(X_i\ge 0)\right\|.
\end{multline*}
The first term is $\mathcal{O}_P(1)$ by the second display in the lemma.  The second term is $\mathcal{O}_P(1)$ by Lemma~\ref{lil_rate_lemma}.
The last display in the lemma follows from the third display and Lemma~\ref{lil_rate_lemma}.
\end{proof}

Applying the above lemma, we obtain the following.
\begin{lemma}\label{alpha_inf_func_lemma}
Under Assumption~\ref{loc_poly_assump},
\begin{align*}
\sup_{\underline h_n\le h\le\overline h_n}\frac{nh}{\log\log h^{-1}}
\left\|\hat\alpha(h)-\alpha(h)
  -\frac{1}{nh}\sum_{i=1}^n \psi_\alpha(X_i,Y_i,h)k^*(X_i/h)\right\|
=\mathcal{O}_P(1)
\end{align*}
and
\begin{align*}
\sup_{\underline h_n\le h\le\overline h_n}
  \frac{\sqrt{nh}}{\sqrt{\log\log h^{-1}}}\left\|\hat V_\alpha(h)-V_\alpha(h)\right\|
  =\mathcal{O}_P(1).
\end{align*}
\end{lemma}
\begin{proof}
The first claim follows by Lemma~\ref{lpoly_inf_func_lemma}.
The second claim follows by
using the fact that $\hat V_\alpha(h)$ is a Lipschitz continuous function of the $\hat\beta$ and $\hat\nu$ terms and terms that can be handled with Lemma~\ref{lil_rate_lemma}.
\end{proof}

\begin{lemma}\label{lpoly_theta_approx_lemma}
  Suppose that Assumption~\ref{loc_poly_assump} holds. Then
\begin{equation*}
\sup_{\underline h_n\le h\le\overline h_n}\sqrt{nh}
\left\|\hat\theta(h)-\theta(h)
  -\frac{1}{\sqrt{nh}}\sum_{i=1}^n D_g(\alpha(h))\psi_\alpha(X_i,Y_i,h)k^*(X_i/h)\right\|
=o_P\left(1/\sqrt{\log\log \underline h_n^{-1}}\right)
\end{equation*}
and
\begin{equation*}
\sup_{\underline h_n\le h\le\overline h_n}
  \frac{\sqrt{nh}}{\sqrt{\log\log h^{-1}}}\left\|\hat\sigma(h)-\sigma(h)\right\|
  =\mathcal{O}_P(1).
\end{equation*}
\end{lemma}
\begin{proof}
  By Lemma~\ref{alpha_inf_func_lemma},
\begin{multline*}
\sup_{\underline h_n\le h\le \overline h_n}
\left\|\sqrt{nh}\left(\hat\alpha(h)-\alpha(h)\right)
-\frac{1}{\sqrt{nh}}\sum_{i=1}^n\psi_\alpha(X_i,Y_i,h)k^*(X_i/h)\right\|  \\
=\mathcal{O}_P\left(\sup_{\underline h\le h\le\overline h_n} (\log\log h^{-1})/\sqrt{nh}\right)
=\mathcal{O}_P\left((\log\log \underline h_n^{-1})/\sqrt{n\underline h_n}\right)
=o_P\left(1/\sqrt{\log\log \underline h_n^{-1}}\right)
\end{multline*}
since $(\log\log \underline h_n^{-1})^{3/2}/\sqrt{n\underline h_n}\to 0$.  Thus, the result follows by Lemma~\ref{delta_method_lemma}.
\end{proof}

Let $m_j(x,h)=p(x/h)'\beta_{u,j}(h)$ for $x\ge 0$ and
$m_j(x,h)=p(x/h)'\beta_{\ell,j-d_Y}(h)$ for $x< 0$.  Let $D_{g,u}(\alpha)$ be the row vector with the first $d_Y$ elements of $D_g(\alpha)$, and let $D_{g,\ell}(\alpha)$ be the row vector with the remaining $d_Y$ elements.  With this notation, we have
\begin{multline*}
D_g(\alpha(h))\psi_{\alpha}(X_i,Y_i,h)  \\
=\left\{I(X_i\ge 0)\nu_u(h)p(|X_i/h|)D_{g,u}(\alpha(h))+
I(X_i< 0)\nu_\ell(h)p(|X_i/h|)D_{g,\ell}(\alpha(h))\right\}
[Y_{i}-m(X_i,h)].
\end{multline*}
Let $\gamma_{u,j}(h)=\frac{1}{h}EY_{i,j}p(|X_i/h|)k^*(X_i/h)I(X_i\ge 0)$
and $\gamma_{u,\ell}(h)=\frac{1}{h}EY_{i,j}p(|X_i/h|)k^*(X_i/h)I(X_i< 0)$.
Let $\gamma_{u,j}(0)$ be the $(r+1)\times 1$ vector with $q$th element given by
$f_{X,+}\tilde \mu_{+,j} \mu_{k^*,q}$.
Let $\gamma_{\ell,j}(0)$ be the $(r+1)\times 1$ vector with $q$th element given by
$f_{X,-}\tilde \mu_{-,j} \mu_{k^*,q}$.
Let $\alpha(0)=(\tilde\mu_+',\tilde\mu_-')'$ (it will be shown below that $\lim_{h\to 0} \alpha(h)=\alpha(0)$).

We now verify the conditions of the main result with
$k(u)=e_1'M^{-1}p(|u|)k^*(u)$ and
\begin{equation*}
\psi(W_i,h)=
\frac{D_g(\alpha(h))\psi_\alpha(X_i,Y_i,h)}{e_1'M^{-1}p(|X_i/h|)\sigma(h)}
\end{equation*}
for $h>0$ and
\begin{equation*}
\psi(W_i,0)=\frac{1}{\sigma(0)}\left[D_{g,u}(\alpha(0))f_{X,+}^{-1}(Y_i-\mu_{+})I(X_i\ge 0)+D_{g,\ell}(\alpha(0))f_{X,-}^{-1}(Y_i-\mu_{-})I(X_i<0)\right]
\end{equation*}
where $\sigma^2(0)=\lim_{h\to 0}\sigma^2(h)$
(this choice of $\psi(W_i,0)$ will be justified by the calculations below).

\begin{lemma}\label{loc_poly_gamma_lemma}
Under Assumption~\ref{loc_poly_assump}, for some constant $K$,
\begin{align*}
\|\Gamma_{u}(h)-f_{X,+}M\|&\le K \ell(Ah),\\
\|\Gamma_{\ell}(h)-f_{X,-}M\|&\le K \ell(Ah),\\
\|\gamma_{u}(h)-\gamma_{u}(0)\|&\le K \ell(Ah),\\
\text{and}\qquad\|\gamma_{\ell}(h)-\gamma_{\ell}(0)\|&\le K \ell(Ah).
\end{align*}
\end{lemma}
\begin{proof}
We have
\begin{equation*}
  \begin{split}
    \gamma_{u,j}(h)&=\frac{1}{h}EY_{i,j}p(|X_i/h|)k^*(X_i/h)I(X_i\ge 0)
    =\frac{1}{h}\int_{x=0}^\infty \tilde\mu_j(x)p(x/h)k^*(x/h)f_X(x)\, dx  \\
    & =\int_{x=0}^\infty \tilde\mu_j(uh)p(u)k^*(u)f_X(uh)\, dx.
  \end{split}
\end{equation*}
Thus, by boundedness of $k^*$, the quantity
$\|\gamma_{u,j}(h)-\gamma_{u,j}(0)\|$ is bounded by a constant times $\sup_{0\le
  x\le Ah} |\tilde\mu_j(x)f_X(x)-\tilde\mu_{+,j}f_{X,+}|$, which is bounded by a
constant times $\ell(Ah)$ by assumption. Similarly,
\begin{align*}
\Gamma_{u,j,m}(h)&=\frac{1}{h}E (X_i/h)^{j+m-2}k^*(X_i/h)I(X_i\ge 0)
=\frac{1}{h}\int_{x=0}^\infty (x/h)^{j+m-2}k^*(x/h)f_X(x)\, dx  \\
&=\int_{x=0}^\infty u^{j+m-2}k^*(u)f_X(uh)\, du,
\end{align*}
so $|\Gamma_{u,j,m}(h)-f_{X,+}M_{j,m}|$ is bounded by a constant times
$\sup_{0\le x\le Ah} |f_X(x)-f_{X,+}|\le \ell(Ah)$.
The proof for $\Gamma_\ell$ and $\gamma_\ell$ is similar.
\end{proof}

Note that
$\beta_{u,j}(h)=\Gamma_u(h)^{-1}\gamma_{u,j}(h)
  \to \tilde\mu_{+,j} M^{-1}(1,\mu_{k^*,1},\ldots,\mu_{k^*,r})'=\tilde\mu_{+,j}(1,0,\ldots,0)'$ as $h\to 0$, where the last equality follows since
$M^{-1}(1,\mu_{k^*,1},\ldots,\mu_{k^*,r})'$ is the first column of $M^{-1}M=I_{r+1}$ (the second through $r$th elements of $\beta_{u,j}$ are given by the corresponding coefficients of the local polynomial scaled by powers of $h$, so this is a result of the fact that the coefficients of the local polynomial do not increase too quickly as $h\to 0$).  By these calculations and Lemma~\ref{loc_poly_gamma_lemma}, we obtain the following.

\begin{lemma}
Under Assumption~\ref{loc_poly_assump}, for some constant $K$ and $h$ small enough,
\begin{align*}
  \left|\beta_{u,j}(h)-\tilde\mu_{+,j}(1,0,\ldots,0)'\right| &\le K\ell(Ah),\\
  \text{and}\qquad  \left|\beta_{\ell,j}(h)-\tilde\mu_{-,j}(1,0,\ldots,0)'\right| &\le K\ell(Ah).
\end{align*}
\end{lemma}
\begin{proof}
The result is immediate from Lemma~\ref{loc_poly_gamma_lemma}, the fact that $\|\Gamma_u(h)^{-1}\|$ and $\|\Gamma_\ell(h)^{-1}\|$ are bounded uniformly over small enough $h$ (which follows from Lemma~\ref{loc_poly_gamma_lemma} and invertibility of $M$) and fact that the function that takes $\Gamma$ and $\gamma$ to $\Gamma^{-1}\gamma$ is Lipschitz over $\Gamma$ and $\gamma$ with $\Gamma^{-1}$ and $\gamma$ bounded.
\end{proof}

Note that, since $\alpha(h)$ is made up of the first component of each of the $\beta_{u,j}(h)$ and $\beta_{\ell,j}(h)$ vectors, the above lemma also implies that
$|\alpha(h)-\alpha(0)|\le K\ell(Ah)$ for $\alpha(0)$ defined above.  For convenience, let us also define $\beta_{u,j}(0)$ and $\beta_{\ell,j}(0)$ to be the limits of $\beta_{u,j}(h)$ and $\beta_{\ell,j}(h)$ derived above.

\begin{lemma}\label{nu_lemma}
Under Assumption~\ref{loc_poly_assump}, for some constant $K$ and $h$ small enough,
\begin{align*}
  \|\nu_u(h)-e_1'M^{-1}f_{X,+}^{-1}\| & \le K\ell(Ah)&
  \text{and}\qquad
  \|\nu_\ell(h)-e_1'M^{-1}f_{X,-}^{-1}\| &\le K\ell(Ah).
\end{align*}
\end{lemma}
\begin{proof}
  The result is follows immediately from Lemma~\ref{loc_poly_gamma_lemma} and
  the the fact that $\|\Gamma_u(h)^{-1}\|$ and $\|\Gamma_\ell(h)^{-1}\|$ are
  bounded over small enough $h$.
\end{proof}

\begin{lemma}\label{loc_poly_psicont_lemma}
Under Assumption~\ref{loc_poly_assump}, for some constant $K$ and $h$ small enough,
\begin{align*}
\left|\left[\sigma(h)\psi(W_i,h)-\sigma(0)\psi(W_i,0)\right]k(X_i/h)\right|
\le K\ell(Ah).
\end{align*}
\end{lemma}
\begin{proof}
We have
\begin{align*}
  &\left[\sigma(h)\psi(W_i,h)-\sigma(0)\psi(W_i,0)\right]k(X_i/h)
    =D_g(\alpha(h))\psi_\alpha(X_i,Y_i,h)k^*(X_i/h)  \\
  &-\left[D_{g,u}(\alpha(0))f_{X,+}^{-1}(Y_i-\mu_{+})I(X_i\ge
    0)+D_{g,\ell}(\alpha(0))f_{X,-}^{-1}(Y_i-\mu_{-})I(X_i<0)\right]\cdot\\
  &  e_1'M^{-1}p(|X_i/h|)k^*(X_i/h)  \\
  &=D_g(\alpha(h))\psi_\alpha(X_i,Y_i,h)k^*(X_i/h) -D_g(\alpha(0))
    \tilde{\psi}_\alpha(X_i,Y_i,h)k^*(X_i/h)
\end{align*}
where the first $d_Y$ columns of $\tilde \psi_{\alpha}(X_i,Y_i,h)$ are given by
$e_1'M^{-1}p(|X_i/h|)f_{X,+}^{-1}(Y_i-\mu_+)I(X_i\ge 0)$ and the remaining $d_Y$
columns are given by $e_1'M^{-1}p(|X_i/h|)f_{X,-}^{-1}(Y_i-\mu_-)I(X_i< 0)$.
Note that the above expression can be written as
\begin{multline*}
T(X_i/h,Y_i,\nu_u(h),\nu_\ell(h),\alpha(h),
  \{\beta_{u,j,m}(h)\}_{1\le j\le d_Y,1\le m\le r+1},\{\beta_{\ell,j,m}(h)\}_{1\le j\le d_Y,1\le m\le r+1})  \\
-T(X_i/h,Y_i,\nu_u(0),\nu_\ell(0),\alpha(0),
  \{\beta_{u,j,m}(0)\}_{1\le j\le d_Y,1\le m\le r+1},\{\beta_{\ell,j,m}(0)\}_{1\le j\le d_Y,1\le m\le r+1})
\end{multline*}
for a function $T$ that is Lipschitz in its remaining arguments uniformly over $X_i/h,Y_i$ on bounded sets.  Combining this with the previous lemmas gives the result.
\end{proof}

It follows from Lemmas~\ref{loc_poly_psicont_lemma}
and~\ref{nonnorm_equiv_lemma} that the conclusion of
Lemma~\ref{loc_poly_psicont_lemma} also holds with $\sigma(h)\psi(W_i,h)$
replaced by $\psi(W_i,h)$, so long as the remaining conditions of
Lemma~\ref{nonnorm_equiv_lemma} (those involving the conditional expectation and
variance of $\psi(W_i,0)$) hold. We have
\begin{align*}
&E[\psi(W_i,0)|X_i=x]  \\
&=\frac{1}{\sigma(0)}\left\{D_{g,u}(\alpha(0))f^{-1}_{X,+}[\tilde\mu(x)-\tilde\mu_+]I(x\ge 0)
+D_{g,\ell}(\alpha(0))f^{-1}_{X,-}[\tilde\mu(x)-\tilde\mu_-]I(x<0)\right\}
\end{align*}
and
\begin{align*}
var[\psi(W_i,0)|X_i=x]
&=\frac{1}{\sigma^2(0)}\left\{D_{g,u}(\alpha(0))\tilde\Sigma(x)D_{g,u}(\alpha(0))'
f^{-2}_{X,+}I(x\ge 0)\right.  \\
&\left.+D_{g,\ell}(\alpha(0))\tilde\Sigma(x)D_{g,\ell}(\alpha(0))'
f^{-2}_{X,-}I(x< 0)\right\}
\end{align*}
By the conditions on $\tilde\mu(x)$ and $\tilde\Sigma(x)$, it follows that these expressions are left and right continuous in $x$ at $0$ with modulus $\ell(x)$ satisfying the necessary conditions.  By this and the conditions on $f_X$, it follows that the same holds for $E[\psi(W_i,0)||X_i|=x]$ and $var[\psi(W_i,0)||X_i|=x]$.  In addition, the assumptions guarantee that $var[\psi(W_i,0)||X_i|=x]$ is bounded away from zero for small $x$ so that $\sigma(0)>0$.

Thus, for $\psi(W_i,h)$ defined above,
\begin{align*}
&\sup_{\underline h_n\le h\le \overline h_n} \left\|\frac{\sqrt{nh}(\hat\theta(h)-\theta(h))}{\hat\sigma(h)}
-\frac{1}{\sqrt{nh}}\sum_{i=1}^n\psi(W_i,h)k(X_i/h)
\right\|  \\
&\le \sup_{\underline h_n\le h\le \overline h_n} \left\|\frac{\sqrt{nh}(\hat\theta(h)-\theta(h))}{\sigma(h)}
-\frac{1}{\sqrt{nh}}\sum_{i=1}^n\psi(W_i,h)k(X_i/h)
\right\|  \\
&+\sup_{\underline h_n\le h\le \overline h_n} \left\|
\sqrt{nh}(\hat\theta(h)-\theta(h))
\right\|\cdot
\left\|
\frac{1}{\sigma(h)}-\frac{1}{\hat\sigma(h)}
\right\|.
\end{align*}
By Lemma~\ref{lpoly_theta_approx_lemma}, the first term is of the order
$\mathcal{O}_P(1/\sqrt{\log\log \underline h_n^{-1}})$, and the last term is of
the order
$\mathcal{O}_P(\sqrt{\log\log \underline h_n^{-1}}\cdot {\sqrt{\log\log
    \underline h_n^{-1}}}/{\sqrt{n\underline h_n}})$. Thus, for
$(\log\log \underline h_n^{-1})^3/n\underline h_n\to 0$, both terms will be
$o_P(1/\sqrt{\log\log \underline h_n^{-1}})$ as required. This completes the
proof of Theorem~\ref{loc_poly_bound_thm}.

\subsection{Equivalent Kernels for Local Linear Regression}

Thus section gives the equivalent kernels for local polynomial regression at the
boundary and in the interior, and outlines how our results can be extended to
cover local polynomial regression at local-to-boundary points. Let
\begin{equation*}
  k(u;t)= e_{1}'M(t)^{-1}p( u)
  k^{*}(u),
\end{equation*}
where
  \begin{equation}\label{eq:M-matrix}
    M(t)=\int_{u=0}^\infty p(u-t)p(u-t)'k^{*}(u-t)\, du=
    \int_{u=-t}^{\infty}p(u)p(u)' k^{*}(u)\,du.
  \end{equation}
  Then the equivalent kernel for local polynomial regression at the boundary is
  given by $k(u;0)$. For $r=1$, we have
\begin{align*}
 e_1' M(0)^{-1}p(u)=e_{1}'\left(\begin{array}{cc}  \mu_{k^*,0} & \mu_{k^*,1}  \\
     \mu_{k^*,1} & \mu_{k^*,2}
\end{array}\right)^{-1}
\left(\begin{array}{c} 1  \\ |u|
\end{array}\right)
=\frac{\mu_{k^{*},2}-\mu_{k^{*},1}|u|}{\mu_{k^*,0}\mu_{k^*,2}-\mu_{k^*,1}^{2}}.
\end{align*}
For $r=2$, we have
\begin{equation*}
  e_{1}M^{-1}p(u)=\frac{1}{D}
  \left(\left(\mu_{k^*,4}\mu_{k^*,2}-\mu_{k^*,3}^{2}\right)
    +\left(\mu_{k^*,1}\mu_{k^*,4}-\mu_{k^*,2}\mu_{k^*,3}\right)\abs{u}
    +\left(   \mu_{k^*,2}^{2}-\mu_{k^*,1}\mu_{k^*,3}\right)u^{2}
  \right),
\end{equation*}
where $D=\det(M)=\mu_{k^*,0}(\mu_{k^*,2}\mu_{k^*,4}-\mu_{k^*,3}^{2})-\mu_{k^*,1}(\mu_{k^*,1}\mu_{k^*,4}-\mu_{k^*,2}\mu_{k^*,3})+
\mu_{k^*,2}(\mu_{k^*,1}\mu_{k^*,3}-\mu_{k^*,2}^{2})$. The moments
$\mu_{k^{*},j}$ for the uniform, triangular, and Epanechnikov kernel are given by\\[1em]
  \begin{tabular}{@{}lrrrrr@{}}
    Name & $\mu_{0}$ & $\mu_{1}$ & $\mu_{2}$ & $\mu_{3}$ & $\mu_{4}$\\
    \midrule
    Uniform & $\frac{1}{2}$ & $\frac{1}{4}$ & $\frac{1}{6}$ & $\frac{1}{8}$
    & $\frac{1}{10}$ \\
    Triangular & $\frac{1}{2}$ &$\frac{1}{6}$& $\frac{1}{12}$ & $\frac{1}{20}$ &
    $\frac{1}{30}$\\
    Epanechnikov & $\frac{1}{2}$ & $\frac{3}{16}$ & $\frac{1}{10}$ &
    $\frac{1}{16}$ & $\frac{3}{70}$ \\[1em]
  \end{tabular}

  Plugging these moments into the definitions of equivalent kernels in the two
  displays above then yields the definitions of equivalent kernels for local
  linear and local quadratic regressions. These definitions are summarized in
  Table~\ref{tab:kernel-definitions}.

  Theorem~\ref{loc_poly_bound_thm} can be extended to apply to local polynomial
  estimation in the interior, provided that the definition of the equivalent
  kernel is appropriately altered to $k(u;\infty)$ (so that the integral on the
  right-hand side of Equation~\eqref{eq:M-matrix} is over the whole real line
  rather than the interval $(0,\infty)$ as in the boundary case). Our package
  \texttt{BWSnooping} can be used to calculate the appropriate critical values
  in this case. Note that for $r=1$, the equivalent kernel and the original
  kernel coincide, so that one can use Table~\ref{tab:cvs-condensed} to look up
  the appropriate critical value.

  Finally, let us outline how our results can be extended to cover estimating a
  conditional mean at a point that is local to the boundary of the support of
  the distribution of the conditioning variable. Here we can use the
  local-to-boundary formulation of the problem as in Section 3.2.5 of
  \citet{fan_local_1996}. In particular, consider local polynomial estimation of
  $E(Y_i\mid X_i=x_0)$ where $x_0=c\underline h_n$ and the lower support point
  of the density of $X_i$ is zero. Letting $\hat{\theta}(h)$ denote the $r$th
  order local polynomial estimator based on a kernel $k^{*}$, it can be
  shown that under regularity conditions, $\sup_{h\in [\underline h_n,\overline
    h_n]} \sqrt{nh}|\hat\theta(h)-\theta(h)|/\hat\sigma(h)$ can be approximated
  by $\sup_{t\in[1,\overline{h}_{n}/\underline{h}_{n}]}|\mathbb{H}(t)|$, where
  $\mathbb{H}(t)$ is a Gaussian process with covariance function
  $cov(\mathbb{H}(s),\mathbb{H}(t))=\rho(s,t;c)$, with
  \begin{align*}
  \rho(s,t;c) =\frac{\int_{u=-c}^\infty k(u/s; c/s) k(u/t; c/t)\,
    du}{\sqrt{\int_{u=-c}^\infty k(u/s; c/s)^2\, du}
    \sqrt{\int_{u=-c}^\infty k(u/t; c/t)^2\, du}}.
\end{align*}
Note that the critical value depends only on $h/\underline h_n$ and $c$ (along
with the kernel and order of the local polynomial). Similar result obtains for
one-sided $t$-statistics.

\section{Proofs for Theorems in
  Appendix~\ref{sec:techn-deta-appl}}\label{applications_sec_supp}

\subsection{Regression Discontinuity/LATEs for Largest Sets of Compliers}

This section proves Theorems~\ref{reg_disc_thm} and~\ref{late_thm}.
First, note that the regression discontinuity and LATE applications can both be written as functions of local polynomial estimators in the above setup, with $d_Y=2$ and
$Y_{i}$ playing the role of $Y_{i,1}$ and $D_i$ playing the role of $Y_{i,2}$.  For the LATE application, we define
$X_i=-(Z_i-\underline z)I(|Z_i-\underline z|\le |Z_i-\overline z|)
  +(\overline z-Z_i)I(|Z_i-\underline z|> |Z_i-\overline z|)$.  Both of these applications fit into the setup of Section~\ref{loc_poly_bound_sec} with, letting
$\alpha(h)=(\alpha_u(h)',\alpha_\ell(h)')=(\alpha_{u,Y}(h),\alpha_{u,D}(h),\alpha_{\ell,Y}(h),\alpha_{\ell,D}(h))'$
(where we use the suggestive subscripts ``$Y$'' and ``$D$'' rather than $1$ and $2$),
$g(\alpha)=\frac{\alpha_{u,Y}-\alpha_{\ell,Y}}{\alpha_{u,D}-\alpha_{\ell,D}}$.  Then, letting $\Delta_D=\alpha_{u,D}-\alpha_{\ell,D}$, we have
\begin{align*}
D_g(\alpha)
  =\left[\begin{array}{cccc}\frac{1}{\Delta_D} & \frac{-g(\alpha)}{\Delta_D} &
\frac{-1}{\Delta_D} & \frac{g(\alpha)}{\Delta_D}\end{array}\right].
\end{align*}
This is Lipschitz continuous and bounded over bounded sets with $\alpha_{u,D}-\alpha_{\ell,D}$ bounded away from zero.

For the last condition (non-degeneracy of the conditional variance), note that
$D_{g,u}(\alpha(0))\tilde\Sigma_+ \cdot D_{g,u}(\alpha(0))
=\frac{1}{\Delta_D(0)^2}var[Y_i-g(\alpha(0))D_i|X_i=0_+]$, which will be nonzero
so long as $corr(D_i,Y_i\mid X_i=0_+)<1$ and $var(Y_i\mid X_i=0_+)>0$. A sufficient
condition for this is that $var(Y_i\mid D_i=d,X_i=0_+)>0$ is nonzero for $d=0$ or
$d=1$, and this (or the corresponding statement with $+$ replaced by $-$) holds
under the conditions of the theorem.

\subsection{Trimmed Average Treatment Effects under Unconfoundedness}\label{unconf_proof_sec}

This section proves Theorem~\ref{ate_unconf_thm}. We first give an intuitive
derivation of the critical value, which explains why it differs in this setting,
and provide the technical details at the end.

To derive the form of the correction in this case, note that, under the conditions of the theorem,
$\frac{\sqrt{n}(\hat\theta(h)-\theta(h))}{\hat\sigma(h)}$
will converge to a Gaussian process $\mathbb{G}(h)$ with covariance
\begin{equation*}
cov(\mathbb{G}(h),\mathbb{G}(h'))
=\frac{cov\left\{[\tilde Y_i-\theta(h)]I(X_i\in\mathcal{X}_h),
[\tilde Y_i-\theta(h')]I(X_i\in\mathcal{X}_{h'})\right\}}
{\sqrt{var\left\{[\tilde Y_i-\theta(h)]I(X_i\in\mathcal{X}_h)\right\}
var\left\{[\tilde Y_i-\theta(h')]I(X_i\in\mathcal{X}_{h'})\right\}}}.
\end{equation*}
Let $v(h)=var\{[\tilde Y_i-\theta(h)]I(X_i\in\mathcal{X}_h)\}$ as defined in the statement of the theorem.
Note that, for $h\ge h'$,
\begin{align*}
&cov\left\{[\tilde Y_i-\theta(h)]I(X_i\in\mathcal{X}_h),
[\tilde Y_i-\theta(h')]I(X_i\in\mathcal{X}_{h'})\right\}
=E\left\{[\tilde Y_i-\theta(h)][\tilde Y_i-\theta(h')]I(X_i\in\mathcal{X}_{h})\right\}  \\
&=E\left\{[\tilde Y_i-\theta(h)]^2I(X_i\in\mathcal{X}_{h})\right\}
+[\theta(h)-\theta(h')]E\left\{[\tilde Y_i-\theta(h)]I(X_i\in\mathcal{X}_{h})\right\}
=v(h)
\end{align*}
where the last step follows since
$E\left\{[\tilde Y_i-\theta(h)]I(X_i\in\mathcal{X}_{h})\right\}=0$. Note also
that $v(h)$ is weakly decreasing in $h$, which can be seen by noting that
$v(h)=\inf_a E\left\{\left[\tilde Y_i-a\right]^2I(X_i\in\mathcal{X}_h)\right\}$,
since $\theta(h)$ is the conditional expectation of $\tilde Y_i$ given
$X_i\in\mathcal{X}_h$. Thus,
\begin{align*}
cov(\mathbb{G}(h),\mathbb{G}(h'))=\frac{v(h\vee h')}{\sqrt{v(h)v(h')}}
=\frac{v(h)\wedge v(h')}{\sqrt{v(h)v(h')}},
\end{align*}
so
$\mathbb{G}(h)\stackrel{d}{=}\frac{\mathbb{B}(v(h))}{\sqrt{v(h)}}$
where $\mathbb{B}$ is a Brownian motion.
Thus, the distribution of
$\sup_{\underline h\le h\le \overline h} \frac{\sqrt{n}(\hat\theta(h)-\theta(h))}{\hat\sigma(h)}$
can be approximated by the distribution of
$\sup_{v(\overline h)\le t\le v(\underline h)} \frac{\mathbb{B}(t)}{\sqrt{t}}
\stackrel{d}{=} \sup_{1\le t\le v(\underline h)/v(\overline h)} \frac{\mathbb{B}(t)}{\sqrt{t}}$.
Note that
$v(h)=\sigma(h)^2P(X_i\in\mathcal{X}_h)^2$, so that
\begin{align*}
\frac{v(\underline h)}{v(\overline h)}
=\frac{\sigma(\underline h)^2P(X_i\in\mathcal{X}_{\underline h})^2}
  {\sigma(\overline h)^2P(X_i\in\mathcal{X}_{\overline h})^2}.
\end{align*}
Thus, $\hat t$ is a consistent estimator for
$\frac{v(\underline h)}{v(\overline h)}$ under the conditions of the theorem.

The formal result then obtains by noting that, by Theorem 19.5 in
\citet{van_der_vaart_asymptotic_1998},
$\frac{\sqrt{n}(\hat\theta(h)-\theta(h))}{\hat\sigma(h)} \stackrel{d}{\to}
\mathbb{G}(h)$, taken as processes over $h\in[\underline h,\overline h]$ with
the supremum norm. By the calculations above,
\begin{align*}
  \sup_{h\in[\underline h,\overline h]} \left|\mathbb{G}(h)\right|
  \stackrel{d}{=}\sup_{h\in[\underline h,\overline h]}
  \left|\frac{\mathbb{B}(v(h))}{\sqrt{v(h)}}\right|,
\end{align*}
where $\mathbb{B}$ is a Brownian motion.
The result then follows since
$\{t|v(h)=t\text{ some }h\in[\underline h,\overline h]\}\subseteq [v(\overline h),v(\underline h)]$,
and the two sets are equal if $v(h)$ is continuous.

\section{Additional details for critical values}\label{sec:critical-values}
We first give a one-sided version of Theorem~\ref{highlevel_asym_dist_thm} and
Corollary~\ref{theta0_corollary}. To state the result, recall the definitions of
$b(t,k)$ and the Gaussian process $\mathbb{H}(h)$ as defined in the statement of
Theorem~\ref{highlevel_asym_dist_thm}.

\begin{theorem}\label{th:highlevel_thm_onesided}
  Let $\cvo{1-\alpha}(t,k)$ be the $1-\alpha$ quantile of
  $\sup_{1\le h\le t}\mathbb{H}(h)$. Suppose that $\underline h_n\to 0$,
  $\overline h_n=\mathcal{O}_P(1)$, and
  $n\underline h_n/[(\log \log n)(\log\log\log n)]^2\to\infty$. Then, under
  Assumptions~\ref{inf_func_assump} and~\ref{dgp_kern_assump},
  \begin{equation*}
    P\left(
      \theta(\theta)\in \hor{\hat\theta(h)- \hat\sigma(h)\cdot
        \cvo{1-\alpha}{\overline h_n/\underline h_n}{k}/\sqrt{nh},\infty}
      \text{ all }h\in [\underline h_n\le h\le \overline h_n]
    \right)
    \stackrel{n\to\infty}{\to} 1-\alpha
  \end{equation*}
  The above display also holds with
  $\cvo{1-\alpha}(\overline h_n/\underline h_n,k)$ replaced by
  \begin{equation*}
    \frac{-\log\left(-\log (1-\alpha)\right)+b(\overline h_n/\underline h_n,
    k)}{\sqrt{2\log\log(\overline h_n/\underline h_n)}} +\sqrt{2\log\log
    (\overline h_n/\underline h_n)},
  \end{equation*}
  provided $\overline h_n/\underline h_n\to\infty$.
  If
  $\sup_{h\in [\underline h_n,\overline
    h_n]}\frac{\sqrt{nh}(\theta(h)-\theta(0))}{\hat\sigma(h)} \le
  o_P\big((\log\log (\overline h_n/\underline h_n))^{-1/2}\big)$, then
  \begin{equation*}
  \liminf_{n\to\infty }P\left(
    \theta(0)\in \hor{\hat\theta(h)- \hat\sigma(h)\cdot
      \cvo{1-\alpha}{\overline h_n/\underline h_n}{k}/\sqrt{nh},\infty}
    \text{ all }h\in [\underline h_n\le h\le \overline h_n]
  \right)
  \ge 1-\alpha.
\end{equation*}
\end{theorem}
Unlike in the two-sided case, the bias does not have to be negligible so long as
it can be signed: if $\theta(h)-\theta(0)$ is known to be weakly negative
(positive), then bias can only improve the coverage of a lower (upper) one-sided
CI (see Section~\ref{sec:adaptation_example}). The proof of
Theorem~\ref{th:highlevel_thm_onesided} is analogous to the proof of
Theorem~\ref{highlevel_asym_dist_thm} given in
Appendix~\ref{sec:proof-main-result}.

Tables~\ref{tab:cvs-ts} and~\ref{tab:cvs-os} give two- and one-sided critical
values $\cvo{1-\alpha}(\overline h_n/\underline h_n,k)$ and
$\cvt{1-\alpha}(\overline h_n/\underline h_n,k)$ for several kernel functions
$k$, $\alpha$ and a selected of values of $\overline h_n/\underline h_n$ for
90\%, 95\%, and 99\% confidence intervals. The critical values can also be
obtained using our R package \texttt{BWSnooping}, which can be downloaded from
\url{https://github.com/kolesarm/BWSnooping}.  The package also
includes critical values for local quadratic regression, and computes critical
values for other significance levels and other ratios of maximum to minimum
bandwidth $\overline{h}/\underline{h}$.

For comparison, Figure~\ref{fig:extreme-value} plots critical values based on
the extreme value approximation (given in the second part of
Theorem~\ref{highlevel_asym_dist_thm}) along with those based directly on the
Gaussian process.

\section{Monte Carlo evidence}\label{mc_sec}
We conduct a small Monte Carlo study of inference in a sharp regression
discontinuity design to further illustrate our method and to examine how well it
works in practice. In each replication, we generated a random sample
$\left\{X_{i},\varepsilon_{i}\right\}_{i=1}^{n}$, with size $n=500$,
$X_{i}=2Z_{i}-1$, where $Z_{i}$ has Beta distribution with parameters $2$ and
$4$, and $\varepsilon_{i}\sim\mathcal{N}(0,0.1295^{2})$. The regression
discontinuity point is normalized to zero. The outcome $Y_{i}$ is given by
$Y_{i}=g_{j}(X_{i})+\varepsilon_{i}$, where the regression function $g_{j}$ depends
on the design. We consider two regression functions. The first one corresponds
to a polynomial fit to the \citet{lee08} data,
\begin{equation*}
  g_{1}(x)=\begin{cases}
0.48+1.27x+7.18x^{2}+20.21x^{3}+21.54x^{4}+7.33x^{5} & \text{if $x<0$,} \\
0.52+0.84x-3.00x^{2}+7.99x^{3}-9.01x^{4}+3.56x^{5} & \text{otherwise.}
\end{cases}
\end{equation*}
This design corresponds exactly to the data generating process in
\citet[IK]{imbens_optimal_2012} and \citet[CCT]{cct14}. The second regression
function corresponds to another design in IK, and is given by
\begin{equation*}
  g_{2}(x)=0.42+0.1I(x\geq 0)+0.84x+7.99x^{3}-9.01x^{4}+3.56x^{5}.
\end{equation*}
Figure~\ref{fig:rd-muX} plots the conditional mean functions $g_{1}$ and $g_{2}$
that generate the data in Designs 1 and 2. The results for designs in which the
error term $\varepsilon_{i}$ is heteroscedastic are very similar, and reported
in an earlier version of the paper \citep{ArKo15snooping}.

In each design, we consider estimates based on local linear regression using the
uniform and the triangular kernel. We use the bandwidth selector proposed by IK
to select a baseline bandwidth, and then construct confidence bands for
estimators in bandwidth range around this baseline bandwidth. We also consider
the robust bias correction method of CCT discussed in
Section~\ref{reg_discont_sec} by running a local quadratic regression at the
same bandwidths. To define these estimators, let $p(x)=(1,x,\dotsc,x^{r})$
denote a polynomial expansion of order $r$. Given an i.i.d.~sample
$\left\{Y_{i},X_{i}\right\}_{i=1}^{n}$, the RD estimator is given by the
difference between the intercepts of polynomial linear regressions of order
$r$ with the same bandwidth on either side of the cutoff,
\begin{equation*}
  \hat{\theta}(h)=\hat{\alpha}_{u}(h)-\hat{\alpha}_{\ell}(h),
\end{equation*}
where $\hat{\alpha}_{u}(h)=e_{1}'\beta_{u}(h)$,
$\hat{\alpha}_{\ell}(h)=e_{1}'\beta_{\ell}(h)$,
\begin{align*}
  \hat{\beta}_{u}(h)&= \hat{\Gamma}_{u}(h)^{-1}\sum_{i=1}^{n}I({X_{i}\geq 0})
  k^{*}(X_{i}/h)p(\abs{X_{i}})Y_{i},\\
  \hat{\beta}_{\ell}(h)&= \hat{\Gamma}_{l}(h)^{-1}\sum_{i=1}^{n}I({X_{i}< 0})
  k^{*}(X_{i}/h)p(\abs{X_{i}})Y_{i},
\end{align*}
$k^{*}$ is a kernel, and
\begin{align*}
\hat{  \Gamma}_{u}(h)&=\sum_{i}I({X_{i}\geq 0})
  k^{*}(X_{i}/h)p(\abs{X_{i}})p(\abs{X_{i}})',\\
\hat{  \Gamma}_{\ell}(h)&=\sum_{i}I({X_{i}< 0})
  k^{*}(X_{i}/h)p(\abs{X_{i}})p(\abs{X_{i}})'.
\end{align*}
The corresponding function $\theta(h)$ is plotted in Figures~\ref{fig:rd-thetah}
and~\ref{fig:rd-thetah-quadratic} for the local linear and local quadratic
estimators.

To estimate the variance of the estimator, we use the Eicker-Huber-White (EHW)
robust variance estimator that treats the two linear linear regressions on
either side of the cutoff as a weighted linear regression. In
Theorem~\ref{reg_disc_thm} below, we show formally that using this estimator
leads to uniformly valid confidence intervals. We also consider a modification
of the EHW estimator that uses a nearest neighbor (NN) estimator to estimate
$var(Y_{i}\mid X_{i})$ in the middle part of the Eicker-Huber-White
``sandwich'', rather than using the regression residuals. This estimator was
introduced by \citet{AbIm06} and \citet{AbImZh14}, and it was studied by
\citet{cct14} in an RD context. The nearest neighbor (NN) and EHW variance
estimators have the form
\begin{equation*}
  \hat{\sigma}^{2}(h)=nh \left(\widehat{var}(\hat{\alpha}_{u}(h))
    +\widehat{var}(\hat{\alpha}_{\ell}(h))\right),
\end{equation*}
where
\begin{equation*}
  \widehat{var}(\hat{\alpha}_{u}(h))=e_{1}'
  \hat{\Gamma}_{u}(h)^{-1}\left(\sum_{i=1}^{n}
    I({X_{i}\geq 0})    \hat{\sigma}_{u}^{2}(X_{i})
    k^{*}(X_{i}/h)p(\abs{X_{i}})p(\abs{X_{i}})'
  \right) \hat{\Gamma}_{u}(h)^{-1} e_{1}
\end{equation*}
and similarly for $\widehat{var}(\hat{\alpha}_{u}(h))$, where
$\hat{\sigma}_{u}^{2}(X_{i})$ and $\hat{\sigma}_{\ell}^{2}(X_{i})$ are some
estimators of $var(Y_{i}\mid X_{i})$. The EHW estimator sets
$\hat{\sigma}_{u}^{2}(X_{i})=(Y_{i}-X_{i}'\hat{\beta}_{u})^{2}$, and the NN
estimators sets
    \begin{equation*}
      \hat{\sigma}_{u}^{2}(X_{i})=
      I(X_{i}\geq 0)\frac{J}{J+1}\left(Y_{i}-\sum_{j=1}^{J}Y_{\ell_{u,j}(i)}\right)^{2},
  \end{equation*}
  where $\ell_{u,j}(i)$ is the $j$th closest unit to $i$ among $\left\{k\neq
    i\colon X_{k}\geq 0\right\}$, and $J=3$.

Table~\ref{tab:rd-1-thetah-ik} reports empirical coverage of the confidence
bands for $\theta(h)$ for the two designs we consider. Our adjustment works well
overall, with the empirical coverage being close to 95\% for almost all
specifications, in contrast with the naive confidence bands (using the
unadjusted 1.96 critical value), which undercover. As plotted in
Figure~\ref{fig:coverage-twosided}, Theorem~\ref{highlevel_asym_dist_thm}
predicts that with $\bar{h}/\underline{h}=2$, the coverage should be 91.6\% for
the triangular kernel, and 83.9\% for the uniform kernel. When
$\bar{h}/\underline{h}=4$, the coverage of the naive confidence bands should
drop to 88.5\% and 76.8\%, respectively. The Monte Carlo results match these
predictions closely. There are a few specifications in which the adjusted
confidence bands based on EHW standard errors undercover. This happens when
small bandwidths are considered, and is due to the well-known downward bias of
EHW standard errors in small samples, so that the pointwise confidence intervals
fail to achieve nominal coverage in the first place. Since our method only
corrects for the multiple comparisons, it cannot solve this problem. Overall,
the adjusted confidence bands have coverage that is as good as the coverage of
the underlying pointwise confidence intervals.

Typically in regression discontinuity studies, the primary object of interest is
$\theta(0)$, the average treatment effect conditional on $X=0$. We therefore
also report empirical coverage of the confidence bands for $\theta(0)$ in
Table~\ref{tab:rd-1-theta0-ik}. Confidence bands around undersmoothed local
linear estimator, (that correspond to the bandwidth range
$[\hat{h}_{IK}/4,\hat{h}_{IK}/2]$) perform well, provided NN standard errors,
which perform better in small samples, are used. At larger values of the
bandwidth, $\hat{\theta}(h)$ is a biased estimator of $\theta(0)$. The pointwise
confidence intervals based on the local linear regression do not take this bias
into account, and they fail to achieve proper coverage. Consequently, although
our adjustment ensures that the coverage of the adjusted confidence band is
within the range of the pointwise confidence intervals, it still falls short of
95\% due to the pointwise confidence intervals performing poorly. On the other
hand, confidence bands around the bias-adjusted confidence intervals (that
correspond to local quadratic regression) perform well, especially when the
NN standard errors are used.

In conclusion, our adjustment performs well in terms of coverage of $\theta(h)$,
with empirical coverage close to nominal coverage, especially when combined with
NN standard errors. If our method is combined with undersmoothing (corresponding
to bandwidth ranges smaller than $\hat{h}_{IK}$), or bias-correction (such as
when the CCT method for constructing confidence intervals is used), so that the
underlying pointwise confidence intervals achieve good coverage of $\theta(0)$,
our method also achieves good coverage of $\theta(0)$.

\setstretch{1.2} 
\bibliography{../library}

\clearpage

\begin{landscape}
\begin{table}[p]
  \centering
  \renewcommand{\arraystretch}{1.2}      
  \begin{tabular}{@{}l@{\hspace{1em}}rrr@{\hspace{1em}}rrr@{\hspace{1em}}rrr@{\hspace{1em}}rrr@{\hspace{1em}}rrr@{\hspace{1em}}rrr@{}}
&
\multicolumn{9}{c@{\hspace{1em}}}{NW / Loc.~linear (interior)}&\multicolumn{9}{c}{Loc.~linear (boundary)}\\
\cmidrule(rl){2-10}\cmidrule(rl){11-19}
 & \multicolumn{3}{c}{Unif} & \multicolumn{3}{c}{Tri} & \multicolumn{3}{c}{Epa} &
\multicolumn{3}{c}{Unif} & \multicolumn{3}{c}{Tri} & \multicolumn{3}{c}{Epa}\\
$\overline{h}/\underline{h}$ & 0.1 & 0.05 & 0.01 & 0.1 & 0.05 & 0.01& 0.1 & 0.05 & 0.01& 0.1 & 0.05 & 0.01& 0.1 & 0.05 & 0.01& 0.1 & 0.05 & 0.01\\
\midrule
1.0 & 1.65 & 1.96 & 2.57 & 1.64 & 1.96 & 2.58 & 1.64 & 1.96 & 2.58 & 1.63 & 1.96 & 2.57 & 1.64 & 1.95 & 2.57 & 1.64 & 1.96 & 2.57\\
1.2 & 1.92 & 2.24 & 2.85 & 1.70 & 2.01 & 2.63 & 1.71 & 2.03 & 2.65 & 1.92 & 2.23 & 2.83 & 1.72 & 2.03 & 2.64 & 1.73 & 2.05 & 2.66\\
1.4 & 2.02 & 2.33 & 2.93 & 1.74 & 2.05 & 2.67 & 1.77 & 2.08 & 2.69 & 2.02 & 2.33 & 2.93 & 1.77 & 2.08 & 2.69 & 1.80 & 2.11 & 2.72\\
1.6 & 2.09 & 2.40 & 2.98 & 1.78 & 2.09 & 2.70 & 1.81 & 2.12 & 2.72 & 2.09 & 2.39 & 3.00 & 1.80 & 2.12 & 2.73 & 1.84 & 2.15 & 2.76\\
1.8 & 2.14 & 2.45 & 3.03 & 1.81 & 2.11 & 2.72 & 1.85 & 2.15 & 2.75 & 2.14 & 2.44 & 3.04 & 1.84 & 2.16 & 2.76 & 1.88 & 2.19 & 2.81\\
2   & 2.18 & 2.48 & 3.07 & 1.83 & 2.14 & 2.75 & 1.87 & 2.17 & 2.78 & 2.18 & 2.48 & 3.08 & 1.87 & 2.18 & 2.78 & 1.91 & 2.22 & 2.83\\
3   & 2.30 & 2.60 & 3.18 & 1.91 & 2.22 & 2.83 & 1.96 & 2.27 & 2.86 & 2.30 & 2.60 & 3.18 & 1.96 & 2.27 & 2.86 & 2.01 & 2.32 & 2.91\\
4   & 2.37 & 2.66 & 3.24 & 1.96 & 2.26 & 2.86 & 2.02 & 2.31 & 2.92 & 2.36 & 2.66 & 3.24 & 2.01 & 2.32 & 2.90 & 2.06 & 2.37 & 2.95\\
5   & 2.41 & 2.70 & 3.28 & 2.00 & 2.30 & 2.90 & 2.05 & 2.35 & 2.95 & 2.41 & 2.71 & 3.27 & 2.05 & 2.35 & 2.94 & 2.11 & 2.41 & 2.99\\
6   & 2.44 & 2.73 & 3.31 & 2.02 & 2.32 & 2.92 & 2.08 & 2.37 & 2.97 & 2.44 & 2.73 & 3.31 & 2.08 & 2.37 & 2.96 & 2.13 & 2.43 & 3.01\\
7   & 2.47 & 2.75 & 3.34 & 2.04 & 2.34 & 2.94 & 2.10 & 2.39 & 2.99 & 2.47 & 2.76 & 3.33 & 2.10 & 2.39 & 2.98 & 2.16 & 2.45 & 3.04\\
8   & 2.49 & 2.77 & 3.35 & 2.06 & 2.35 & 2.95 & 2.12 & 2.41 & 3.01 & 2.49 & 2.78 & 3.35 & 2.12 & 2.41 & 2.99 & 2.18 & 2.47 & 3.05\\
9   & 2.51 & 2.79 & 3.37 & 2.07 & 2.37 & 2.96 & 2.14 & 2.42 & 3.02 & 2.50 & 2.79 & 3.37 & 2.14 & 2.43 & 3.00 & 2.20 & 2.48 & 3.06\\
10  & 2.52 & 2.80 & 3.38 & 2.08 & 2.38 & 2.97 & 2.15 & 2.44 & 3.04 & 2.52 & 2.81 & 3.39 & 2.15 & 2.44 & 3.01 & 2.21 & 2.50 & 3.07\\
20  & 2.61 & 2.89 & 3.45 & 2.16 & 2.45 & 3.03 & 2.23 & 2.51 & 3.10 & 2.61 & 2.89 & 3.45 & 2.23 & 2.52 & 3.08 & 2.29 & 2.58 & 3.14\\
50  & 2.70 & 2.97 & 3.51 & 2.24 & 2.53 & 3.10 & 2.31 & 2.59 & 3.15 & 2.70 & 2.98 & 3.52 & 2.32 & 2.60 & 3.16 & 2.38 & 2.66 & 3.21\\
100 & 2.75 & 3.02 & 3.56 & 2.29 & 2.57 & 3.14 & 2.36 & 2.64 & 3.20 & 2.76 & 3.02 & 3.56 & 2.37 & 2.65 & 3.20 & 2.44 & 2.71 & 3.25\\
\end{tabular}
\caption{Critical values $\cvt{1-\alpha}(\overline{h}/\underline{h},k)$ for
  level $\alpha=0.1$, $0.05$, and $0.01$ for the Uniform (Unif,
  $k(u)=\frac{1}{2}I(\abs{u}\leq 1)$), Triangular (Tri,
  $(1-\abs{u})I(\abs{u}\leq 1)$) and Epanechnikov (Epa,
  $3/4(1-u^{2})I(\abs{u}\leq 1)$) kernels. ``NW / Loc.~linear (interior)''
  refers to Nadaraya-Watson (local constant) regression in the interior or at a
  boundary, as well as local linear regression in the interior. ``Loc.~linear
  (boundary)'' refers to local linear regression at a boundary (including
  regression discontinuity designs).}\label{tab:cvs-ts}
\end{table}
\end{landscape}

\begin{landscape}
\begin{table}[p]
  \centering
  \renewcommand{\arraystretch}{1.2}      
  \begin{tabular}{@{}l@{\hspace{1em}}rrr@{\hspace{1em}}rrr@{\hspace{1em}}rrr@{\hspace{1em}}rrr@{\hspace{1em}}rrr@{\hspace{1em}}rrr@{}}
&
\multicolumn{9}{c@{\hspace{1em}}}{NW / Loc.~linear (interior)}&\multicolumn{9}{c}{Loc.~linear (boundary)}\\
\cmidrule(rl){2-10}\cmidrule(rl){11-19}
 & \multicolumn{3}{c}{Unif} & \multicolumn{3}{c}{Tri} & \multicolumn{3}{c}{Epa} &
\multicolumn{3}{c}{Unif} & \multicolumn{3}{c}{Tri} & \multicolumn{3}{c}{Epa}\\
$\overline{h}/\underline{h}$ & 0.1 & 0.05 & 0.01 & 0.1 & 0.05 & 0.01& 0.1 & 0.05 & 0.01& 0.1 & 0.05 & 0.01& 0.1 & 0.05 & 0.01& 0.1 & 0.05 & 0.01\\
\midrule

1.0 & 1.29 & 1.66 & 2.33 & 1.29 & 1.66 & 2.34 & 1.29 & 1.66 & 2.34 & 1.28 & 1.64 & 2.33 & 1.29 & 1.65 & 2.33 & 1.28 & 1.65 & 2.33\\
1.2 & 1.57 & 1.94 & 2.64 & 1.35 & 1.72 & 2.39 & 1.36 & 1.73 & 2.41 & 1.57 & 1.93 & 2.59 & 1.36 & 1.72 & 2.40 & 1.38 & 1.74 & 2.42\\
1.4 & 1.67 & 2.04 & 2.73 & 1.39 & 1.76 & 2.44 & 1.42 & 1.79 & 2.45 & 1.67 & 2.03 & 2.69 & 1.41 & 1.78 & 2.46 & 1.44 & 1.80 & 2.47\\
1.6 & 1.75 & 2.11 & 2.79 & 1.42 & 1.80 & 2.47 & 1.46 & 1.83 & 2.50 & 1.74 & 2.10 & 2.76 & 1.46 & 1.81 & 2.49 & 1.49 & 1.85 & 2.52\\
1.8 & 1.80 & 2.15 & 2.83 & 1.46 & 1.83 & 2.49 & 1.49 & 1.86 & 2.53 & 1.80 & 2.15 & 2.81 & 1.49 & 1.84 & 2.52 & 1.53 & 1.88 & 2.55\\
2 & 1.84 & 2.19 & 2.85 & 1.48 & 1.85 & 2.52 & 1.52 & 1.89 & 2.55 & 1.84 & 2.19 & 2.84 & 1.52 & 1.87 & 2.55 & 1.56 & 1.91 & 2.58\\
3 & 1.97 & 2.31 & 2.97 & 1.56 & 1.93 & 2.58 & 1.62 & 1.98 & 2.64 & 1.96 & 2.30 & 2.95 & 1.62 & 1.96 & 2.62 & 1.67 & 2.01 & 2.67\\
4 & 2.04 & 2.38 & 3.02 & 1.61 & 1.97 & 2.63 & 1.68 & 2.03 & 2.68 & 2.03 & 2.36 & 3.02 & 1.67 & 2.01 & 2.67 & 1.73 & 2.06 & 2.72\\
5 & 2.09 & 2.42 & 3.05 & 1.65 & 2.01 & 2.66 & 1.71 & 2.07 & 2.71 & 2.08 & 2.41 & 3.04 & 1.71 & 2.05 & 2.70 & 1.77 & 2.11 & 2.76\\
6 & 2.12 & 2.45 & 3.08 & 1.68 & 2.03 & 2.68 & 1.74 & 2.09 & 2.74 & 2.12 & 2.44 & 3.07 & 1.74 & 2.08 & 2.72 & 1.80 & 2.13 & 2.77\\
7 & 2.15 & 2.48 & 3.10 & 1.71 & 2.05 & 2.70 & 1.77 & 2.11 & 2.76 & 2.14 & 2.47 & 3.09 & 1.76 & 2.10 & 2.74 & 1.83 & 2.16 & 2.80\\
8 & 2.17 & 2.50 & 3.12 & 1.72 & 2.07 & 2.72 & 1.79 & 2.13 & 2.77 & 2.17 & 2.49 & 3.11 & 1.79 & 2.12 & 2.75 & 1.85 & 2.18 & 2.81\\
9  & 2.19 & 2.52 & 3.14 & 1.74 & 2.09 & 2.73 & 1.80 & 2.14 & 2.79 & 2.18 & 2.51 & 3.12 & 1.80 & 2.14 & 2.77 & 1.87 & 2.20 & 2.82\\
10 & 2.21 & 2.53 & 3.16 & 1.76 & 2.10 & 2.74 & 1.82 & 2.16 & 2.81 & 2.20 & 2.52 & 3.13 & 1.82 & 2.15 & 2.79 & 1.88 & 2.21 & 2.84\\
20 & 2.29 & 2.62 & 3.23 & 1.83 & 2.17 & 2.80 & 1.91 & 2.24 & 2.87 & 2.29 & 2.61 & 3.22 & 1.90 & 2.23 & 2.86 & 1.97 & 2.29 & 2.91\\
50 & 2.40 & 2.71 & 3.31 & 1.92 & 2.25 & 2.87 & 2.00 & 2.32 & 2.94 & 2.39 & 2.70 & 3.30 & 1.99 & 2.31 & 2.92 & 2.06 & 2.38 & 2.99\\
100& 2.46 & 2.77 & 3.36 & 1.98 & 2.30 & 2.92 & 2.06 & 2.37 & 2.99 & 2.45 & 2.76 & 3.35 & 2.05 & 2.37 & 2.96 & 2.12 & 2.44 & 3.03\\
\end{tabular}
\caption{One-sided critical values $\cvo{1-\alpha}(\overline{h}/\underline{h},k)$ for
  level $\alpha=0.1$, $0.05$, and $0.01$ for the Uniform (Unif,
  $k(u)=\frac{1}{2}I(\abs{u}\leq 1)$), Triangular (Tri,
  $(1-\abs{u})I(\abs{u}\leq 1)$) and Epanechnikov (Epa,
  $3/4(1-u^{2})I(\abs{u}\leq 1)$) kernels. ``NW / Loc.~linear (interior)''
  refers to Nadaraya-Watson (local constant) regression in the interior or at a
  boundary, as well as local linear regression in the interior. ``Loc.~linear
  (boundary)'' refers to local linear regression at a boundary (including
  regression discontinuity designs).}\label{tab:cvs-os}
\end{table}
\end{landscape}

\begin{table}[p]
  \centering
  \renewcommand*{\arraystretch}{1.5}
  \begin{tabular}{llll}
   Name & $k^{*}(u)$ & {Order} & $k(u)$\\
   \midrule
   \multirow{3}{*}{Uniform} &    \multirow{3}{*}{$\frac{1}{2}I(\abs{u}\leq 1)$} &
   0 & $\frac{1}{2}I(\abs{u}\leq 1)$\\
   && 1& $(4-6\abs{u})I(\abs{u}\leq 1)$\\
   &&2 &$(9-36\abs{u}+30u^2)I(\abs{u}\leq 1)$\\[1em]
   \multirow{3}{*}{Triangular} &    \multirow{3}{*}{$(1-\abs{u})_{+}$}
   & 0 & $(1-\abs{u})_{+}$\\
   && 1 & $6(1-2\abs{u})(1-\abs{u})_{+}$\\
   && 2 & $12(1-5\abs{u}+5u^{2})(1-\abs{u})_{+}$\\[1em]
   \multirow{3}{*}{Epanechnikov} &    \multirow{3}{*}{$\frac{3}{4}(1-u^{2})_{+}$}
   & 0 & $\frac{3}{4}(1-u^{2})_{+}$\\
   && 1 &$\frac{6}{19}(16-30\abs{u})(1-u^2)_{+}$\\
   && 2 & $\frac{1}{8}(85-400\abs{u}+385{u}^{2})(1-u^{2})_{+}$
  \end{tabular}
  \caption{Definitions of kernels and equivalent kernels for regression
    discontinuity / estimation at a boundary. Order refers to the
    order of the local polynomial.}\label{tab:kernel-definitions}
\end{table}

\begin{table}[p]
  \centering
  \renewcommand*{\arraystretch}{1.2}
  \begin{tabular}{@{}llllllll@{}}
    && \multicolumn{3}{@{}c}{Uniform kernel} &
    \multicolumn{3}{c@{}}{Triangular kernel}\\
    \cmidrule(rl){3-5}\cmidrule(rl){6-8}
    $(\underline{h},\bar{h})$ & $\hat{\sigma}(h)$ & Pointwise & Naive
    & Adj. & Pointwise & Naive& Adj. \\
    \midrule
    \multicolumn{3}{@{}l}{Design 1: Local Linear regression}\\
    \multirow{2}{*}{($\hat{h}_{IK}/4$, $\hat{h}_{IK}/2$)}
&  EHW     & (92.7, 94.4) & 83.7 & 93.9 & (91.8, 94.0) & 88.5 & 92.3\\
&       NN & (94.6, 95.8) & 87.3 & 95.3 & (94.2, 95.3) & 91.2 & 94.2\\[0.1ex]
    \multirow{2}{*}{($\hat{h}_{IK}/2$, $\hat{h}_{IK}$)}
&  EHW     & (94.2, 94.7) & 85.0 & 95.0 & (93.9, 94.5) & 90.5 & 94.0\\
&       NN & (95.3, 96.1) & 87.9 & 96.3 & (94.9, 95.9) & 92.3 & 95.3\\[0.1ex]
    \multirow{2}{*}{($\hat{h}_{IK}/2$, $2\hat{h}_{IK}$)}
&  EHW     & (90.4, 94.7) & 74.8 & 93.4 & (92.1, 94.5) & 85.6 & 93.0\\
&       NN & (91.8, 96.1) & 77.4 & 94.4 & (93.4, 95.9) & 88.2 & 94.4\\[0.1ex]
\multicolumn{3}{@{}l}{Design 1: Local quadratic regression}\\
\multirow{2}{*}{($\hat{h}_{IK}/4$, $\hat{h}_{IK}/2$)}
&    EHW  & (89.5, 92.6) & 78.1 & 90.2& (88.5, 91.9) & 83.3 & 88.7\\
&      NN & (93.8, 94.8) & 85.2 & 94.3 & (93.2, 94.5) & 89.0 & 93.0\\[0.1ex]
\multirow{2}{*}{($\hat{h}_{IK}/2$, $\hat{h}_{IK}$)}
&    EHW  & (92.7, 94.3) & 82.7 & 93.5& (92.0, 94.0) & 88.0 & 92.4\\
&      NN & (94.8, 95.7) & 87.1 & 95.5 & (94.5, 95.4) & 91.3 & 94.6\\[0.1ex]
\multirow{2}{*}{($\hat{h}_{IK}/2$, $2\hat{h}_{IK}$)}
&    EHW  & (84.4, 95.1) & 68.6 & 90.1& (89.4, 94.8) & 80.1 & 89.7\\
&      NN & (87.1, 96.2) & 74.9 & 92.9 & (91.3, 96.0) & 84.5 & 92.5\\
    \multicolumn{3}{@{}l}{Design 2: Local Linear regression}\\
    \multirow{2}{*}{($\hat{h}_{IK}/4$, $\hat{h}_{IK}/2$)}
&  EHW     & (85.6, 91.4) & 73.7 & 86.5 & (83.0, 90.0) & 78.3 & 83.2\\
&       NN & (93.3, 94.3) & 84.6 & 93.9 & (92.7, 93.5) & 88.7 & 92.1\\[0.1ex]
    \multirow{2}{*}{($\hat{h}_{IK}/2$, $\hat{h}_{IK}$)}
&  EHW     & (91.3, 92.7) & 80.3 & 91.9 & (90.2, 92.1) & 86.0 & 90.3\\
&       NN & (94.0, 94.6) & 85.3 & 94.3 & (93.5, 94.2) & 90.2 & 93.2\\[0.1ex]
    \multirow{2}{*}{($\hat{h}_{IK}/2$, $2\hat{h}_{IK}$)}
&  EHW     & (88.5, 92.8) & 70.6 & 90.4 & (86.3, 92.1) & 78.8 & 87.7\\
&       NN & (85.0, 94.8) & 73.0 & 91.3 & (81.0, 94.2) & 76.0 & 85.3\\[0.1ex]
\multicolumn{3}{@{}l}{Design 2: Local quadratic regression}\\
\multirow{2}{*}{($\hat{h}_{IK}/4$, $\hat{h}_{IK}/2$)}
&   EHW  & (74.6, 86.9) & 57.6 & 72.9& (74.3, 85.4) & 65.5 & 72.0\\
&    NN  & (92.2, 93.1) & 80.7 & 91.6& (91.5, 92.4) & 85.1 & 89.6\\[0.1ex]
\multirow{2}{*}{($\hat{h}_{IK}/2$, $\hat{h}_{IK}$)}
&   EHW  & (87.5, 91.5) & 74.1 & 87.6& (85.8, 90.9) & 80.3 & 85.9\\
&    NN  & (92.2, 93.3) & 81.6 & 92.1& (91.6, 92.8) & 86.8 & 90.6\\[0.1ex]
\multirow{2}{*}{($\hat{h}_{IK}/2$, $2\hat{h}_{IK}$)}
&   EHW  & (87.5, 94.6) & 67.6 & 88.3& (85.8, 94.0) & 76.8 & 86.6\\
&    NN  & (92.2, 95.2) & 74.8 & 92.2& (91.6, 94.5) & 83.2 & 91.0\\
  \end{tabular}
  \caption{Monte Carlo study of regression discontinuity.
    Empirical coverage of $\theta(h)$ for nominal 95\% confidence bands
    around IK bandwidth. ``Pointwise'' refers to range of coverage of pointwise
    confidence intervals. ``Naive'' refers to the coverage of the naive
    confidence band that uses the unadjusted critical value equal to 1.96.
    ``Adj.'' refers to confidence bands using adjusted critical values based
    on Theorem~\ref{highlevel_asym_dist_thm}.  Variance estimators are described
    in the text. 10,000 Monte Carlo draws, 100 grid points
    for $h$.}\label{tab:rd-1-thetah-ik}
\end{table}

\begin{table}[p]
  \centering
  \renewcommand*{\arraystretch}{1.2}
  \begin{tabular}{@{}llllllll@{}}
    && \multicolumn{3}{@{}c}{Uniform kernel} &
    \multicolumn{3}{c@{}}{Triangular kernel}\\
    \cmidrule(rl){3-5}\cmidrule(rl){6-8}
    $(\underline{h},\bar{h})$ & $\hat{\sigma}(h)$ & Pointwise & Naive
    & Adj. & Pointwise & Naive& Adj. \\
    \midrule
    \multicolumn{3}{@{}l}{Design 1: Local Linear regression}\\
    \multirow{2}{*}{($\hat{h}_{IK}/4$, $\hat{h}_{IK}/2$)}
&  EHW     & (90.3, 92.7) & 79.9 & 92.0 & (90.0, 92.1) & 85.5 & 90.1\\
&       NN & (92.4, 94.7) & 84.5 & 94.4 & (92.4, 94.1) & 89.2 & 92.7\\[0.1ex]
    \multirow{2}{*}{($\hat{h}_{IK}/2$, $\hat{h}_{IK}$)}
&  EHW     & (73.7, 89.8) & 62.0 & 80.7 & (76.7, 89.5) & 74.0 & 80.9\\
&       NN & (77.1, 92.0) & 66.9 & 84.2 & (80.1, 91.9) & 78.0 & 84.1\\[0.1ex]
    \multirow{2}{*}{($\hat{h}_{IK}/2$, $2\hat{h}_{IK}$)}
&  EHW     & (73.2, 89.8) & 54.3 & 80.9 & (76.5, 89.5) & 69.4 & 81.1\\
&       NN & (76.7, 92.0) & 59.8 & 84.9 & (79.9, 91.9) & 74.2 & 84.9\\[0.1ex]
\multicolumn{3}{@{}l}{Design 1: Local quadratic regression}\\
\multirow{2}{*}{($\hat{h}_{IK}/4$, $\hat{h}_{IK}/2$)}
&    EHW  & (89.6, 92.7) & 78.3 & 90.0& (88.6, 92.2) & 83.5 & 88.6\\
&      NN & (93.8, 94.7) & 85.1 & 94.4 & (93.2, 94.3) & 88.8 & 93.0\\[0.1ex]
\multirow{2}{*}{($\hat{h}_{IK}/2$, $\hat{h}_{IK}$)}
&    EHW  & (90.3, 93.6) & 80.2 & 92.2& (89.3, 93.0) & 85.2 & 90.0\\
&      NN & (92.4, 95.3) & 84.8 & 94.7 & (91.5, 94.7) & 88.4 & 92.8\\[0.1ex]
\multirow{2}{*}{($\hat{h}_{IK}/2$, $2\hat{h}_{IK}$)}
&    EHW  & (74.8, 93.6) & 54.8 & 81.8& (78.6, 93.0) & 70.3 & 82.5\\
&      NN & (78.6, 95.3) & 62.2 & 86.3 & (82.2, 94.7) & 75.3 & 86.5\\
    \multicolumn{3}{@{}l}{Design 2: Local Linear regression}\\
    \multirow{2}{*}{($\hat{h}_{IK}/4$, $\hat{h}_{IK}/2$)}
&  EHW     & (85.6, 91.4) & 73.7 & 86.5 & (83.0, 90.1) & 78.3 & 83.2 \\
&       NN & (93.3, 94.3) & 84.6 & 93.9 & (92.7, 93.5) & 88.7 & 92.1 \\[0.1ex]
    \multirow{2}{*}{($\hat{h}_{IK}/2$, $\hat{h}_{IK}$)}
&  EHW     & (91.3, 92.6) & 80.2 & 91.8 & (90.2, 91.9) & 85.7 & 90.1 \\
&       NN & (94.0, 94.6) & 85.1 & 94.3 & (93.5, 94.1) & 89.9 & 93.1 \\[0.1ex]
    \multirow{2}{*}{($\hat{h}_{IK}/2$, $2\hat{h}_{IK}$)}
&  EHW     & (59.3, 92.6) & 47.8 & 75.3 & (53.9, 91.9) & 47.4 & 60.7 \\
&       NN & (63.1, 94.6) & 54.0 & 79.8 & (57.6, 94.1) & 52.9 & 65.5 \\[0.1ex]
\multicolumn{3}{@{}l}{Design 2: Local quadratic regression}\\
\multirow{2}{*}{($\hat{h}_{IK}/4$, $\hat{h}_{IK}/2$)}
&   EHW  & (74.6, 86.9) & 57.6 & 72.9& (74.3, 85.4) & 65.5 & 72.0\\
&    NN & (93.6, 94.8) & 84.4 & 93.6 & (92.8, 94.0) & 87.7 & 91.6\\[0.1ex]
\multirow{2}{*}{($\hat{h}_{IK}/2$, $\hat{h}_{IK}$)}
&   EHW  & (87.5, 91.5) & 74.2 & 87.6& (85.8, 90.9) & 80.3 & 85.9\\
&    NN & (93.5, 94.4) & 84.1 & 93.8 & (92.9, 93.8) & 88.4 & 92.4\\[0.1ex]
\multirow{2}{*}{($\hat{h}_{IK}/2$, $2\hat{h}_{IK}$)}
&   EHW  & (87.5, 94.3) & 67.2 & 88.0& (85.8, 93.6) & 75.9 & 86.1\\
&    NN & (93.5, 95.8) & 78.3 & 93.8 & (92.9, 95.1) & 84.6 & 92.4\\[0.1ex]
  \end{tabular}
  \caption{Monte Carlo study of regression discontinuity.
    Empirical coverage of $\theta(0)$ for nominal 95\% confidence bands
    around IK bandwidth. ``Pointwise'' refers to range of coverage of pointwise
    confidence intervals. ``Naive'' refers to the coverage of the naive
    confidence band that uses the unadjusted critical value equal to 1.96.
    ``Adj.'' refers to confidence bands using adjusted critical values based
    on Theorem~\ref{highlevel_asym_dist_thm}.  Variance estimators are described
    in the text. 10,000 Monte Carlo draws, 100 grid points
    for $h$.}\label{tab:rd-1-theta0-ik}
\end{table}

\begin{figure}[p]
  \centering
  \input{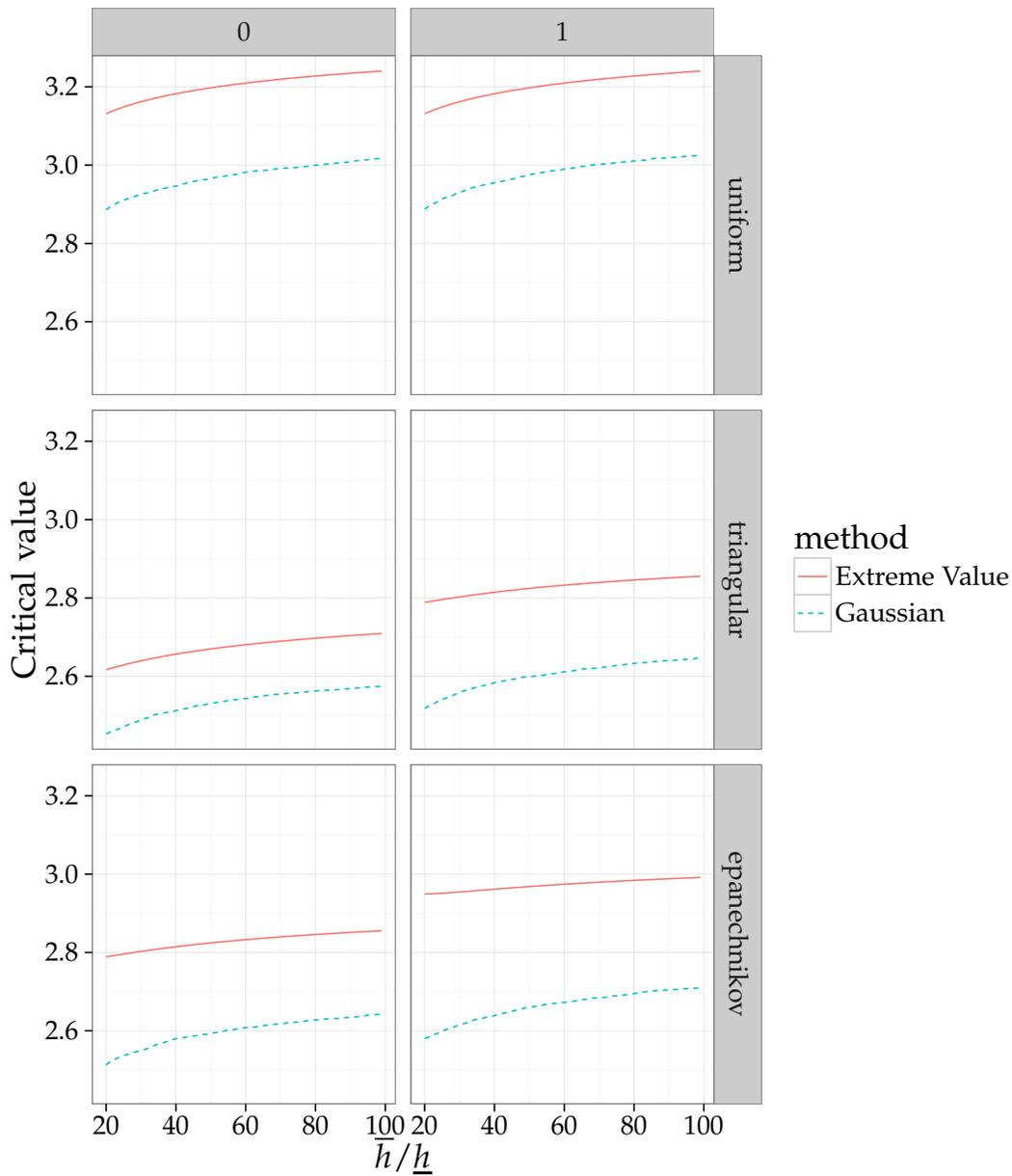}
  \caption{Comparison of critical values based on Gaussian approximation and
    extreme value approximation (i.e.\ asymptotic approximation as
    $\overline{h}/\underline{h}\to\infty$). Order ``0'' corresponds to
    Nadaraya-Watson interior or boundary regression, and to local linear
    regression in the interior, and order ``1'' to local linear regression at a
    boundary.}\label{fig:extreme-value}
\end{figure}

\begin{figure}[p]
  \centering
  \input{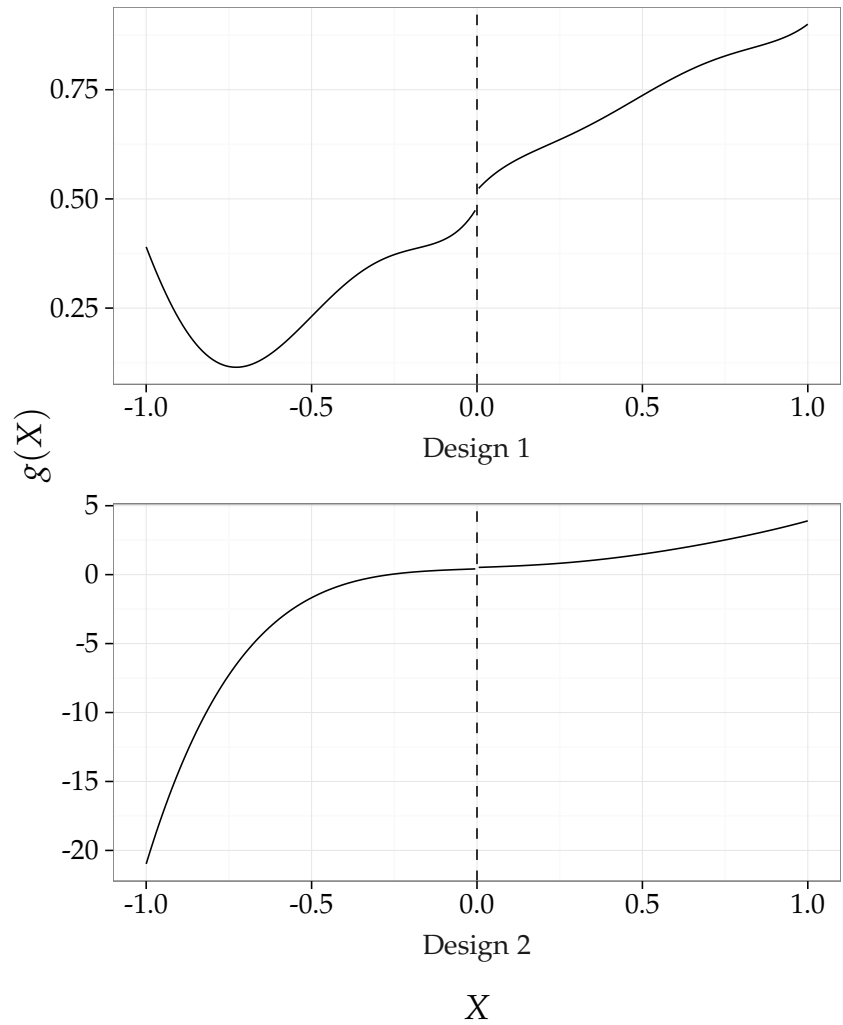}
  \caption{Monte Carlo study of regression discontinuity. Regression function
    $g(X)$ for designs 1 and 2.}\label{fig:rd-muX}
\end{figure}

\begin{figure}[p]
  \centering
  \input{./MC-thetah-1.tex}
  \caption{Monte Carlo study of regression discontinuity. Function $\theta(h)$
    for local linear regression for designs 1 and 2. Solid lines correspond to
    the triangular kernel, dotted lines to the uniform
    kernel.}\label{fig:rd-thetah}
\end{figure}

\begin{figure}[p]
  \centering
  \input{./MC-thetah-2.tex}
  \caption{Monte Carlo study of regression discontinuity. Function $\theta(h)$
    for local quadratic regression for designs 1 and 2. Solid lines correspond
    to the triangular kernel, dotted lines to the uniform
    kernel.}\label{fig:rd-thetah-quadratic}
\end{figure}